\newcommandx{\unsure}[2][1=]{\todo[linecolor=green,backgroundcolor=green!25,bordercolor=green,#1]{\normalsize #2}}
\newcommandx{\improvement}[2][1=]{\todo[inline,linecolor=blue,backgroundcolor=blue!05,bordercolor=blue,#1]{\normalsize #2}}
\newcommandx{\info}[2][1=]{\todo[linecolor=yellow,backgroundcolor=yellow!25,bordercolor=yellow,#1]{#2}}
\newcommandx{\floatmodel}[2][1=]{\todo[inline,linecolor=red,backgroundcolor=yellow!25,bordercolor=yellow,#1]{#2}}
\newcommandx{\thiswillnotshow}[2][1=]{\todo[disable,#1]{#2}}
\newcommandx{\karl}[2][1=]{\todo[linecolor=red,backgroundcolor=red!25,bordercolor=red,#1]{\normalsize #2}}
\newcommandx{\karolw}[2][1=]{\todo[linecolor=green,backgroundcolor=green!25,bordercolor=green,#1]{\normalsize #2}}
\newcommandx{\marvin}[2][1=]{\todo[linecolor=blue,backgroundcolor=blue!25,bordercolor=blue,#1]{\normalsize #2}}
\newcommandx{\marvinfn}[2][1=]{{\color{blue} \small #2}}
\newtheorem{theorem}{Theorem}
\newtheorem{lemma}[theorem]{Lemma}
\newtheorem{corollary}[theorem]{Corollary}
\newtheorem{claim}[theorem]{Claim}
\theoremstyle{empty}
\numberwithin{theorem}{section}
\numberwithin{lemma}{section}
\numberwithin{claim}{section}
\numberwithin{corollary}{section}
\numberwithin{definition}{section}
\newcommand{\ceil}[1]{\left\lceil #1 \right\rceil}
\newcommand{\eps}{\varepsilon}
\newcommand{\receps}{\frac{1}{\eps}}
\newcommand{\Oh}{\mathcal{O}}
\newcommand{\Otilde}{\widetilde{\mathcal{O}}}
\newcommand{\Ot}{\Otilde}
\newcommand{\tOh}{\Otilde}
\newcommand{\real}{\mathbb{R}_{+}}
\newcommand{\poly}{\mathrm{poly}}
\newcommand{\polylog}{\,\textup{polylog}}
\newcommand{\minconv}{Min-Plus Convolution\xspace}
\newcommand{\minprod}{Min-Plus Product\xspace}
\newcommand{\minmaxconv}{Min-Max Convolution\xspace}
\newcommand{\minmaxprod}{Min-Max Product\xspace}
\newcommand{\sparsity}{Tree Sparsity\xspace}
\newcommand{\strongapx}{strongly polynomial $(1+\eps)$-approximation\xspace}
\newcommand{\defproblem}[3]{
  \vspace{2mm}
  \vspace{1mm}
\noindent\fbox{
  \begin{minipage}{0.95\textwidth}
  #1 \\
  {\bf{Input:}} #2  \\
  {\bf{Task:}} #3
  \end{minipage}
  }
  \vspace{2mm}
}
\title{Approximating APSP without Scaling: \\ Equivalence of Approximate Min-Plus and Exact Min-Max}
\author{ 
    Karl Bringmann\footnote{Max Planck Institute for Informatics, Saarland
    Informatics Campus, Saarbr\"ucken, Germany, \texttt{kbringma@mpi-inf.mpg.de}}
    \and Marvin K\"unnemann\footnote{Max Planck Institute for Informatics, Saarland Informatics Campus, Saarbr\"ucken, Germany,
    \texttt{marvin@mpi-inf.mpg.de}}
    \and Karol W\k{e}grzycki\footnote{
    Institute of Informatics, University of Warsaw, Warsaw, Poland,
    \texttt{k.wegrzycki@mimuw.edu.pl}. Supported by the grants 2016/21/N/ST6/01468 and 2018/28/T/ST6/00084 of the
	Polish National Science Center and project TOTAL that has received funding from
	the European Research Council (ERC) under the European Union’s Horizon 2020
	research and innovation programme (grant agreement No 677651).}
}
\date{}
\begin{document}

\maketitle

\thispagestyle{empty}
\begin{abstract}

Zwick's $(1+\eps)$-approximation algorithm for the All Pairs Shortest Path (APSP) problem runs in time $\Ot(\frac{n^\omega}{\eps} \log{W})$, where $\omega \le 2.373$ is the exponent of matrix multiplication and $W$ denotes the largest weight. 
This can be used to approximate several graph characteristics including the diameter, radius, median, minimum-weight triangle, and minimum-weight cycle in the same time bound. 

Since Zwick's algorithm uses the scaling technique, it has a factor $\log W$ in the running time. 
In this paper, we study whether APSP and related problems admit approximation schemes avoiding the scaling technique.
That is, the number of arithmetic operations should be independent of $W$; this is called \emph{strongly polynomial}.
Our main results are as follows.
\begin{itemize}
\item We design approximation schemes in strongly polynomial time $\Oh(\frac{n^\omega}{\eps} \polylog(\frac{n}{\eps}))$ for APSP on undirected graphs as well as for the graph characteristics diameter, radius, median, minimum-weight triangle, and minimum-weight cycle on directed or undirected graphs.
\item For APSP on directed graphs we design an approximation scheme in strongly polynomial time $\Oh(n^{\frac{\omega + 3}{2}} \eps^{-1} \polylog(\frac{n}{\eps}))$. This is significantly faster than the best exact algorithm.
\item We explain why our approximation scheme for APSP on directed graphs has a worse exponent than $\omega$: 
Any improvement over our exponent $\frac{\omega + 3}{2}$ would improve the best known algorithm for \minmaxprod. In fact, we prove that approximating directed APSP and exactly computing the \minmaxprod are equivalent.
\end{itemize}
Our techniques yield a framework for approximation problems over the (min,+)-semiring that can be applied more generally. In particular, we obtain the first strongly polynomial approximation scheme for \minconv in strongly subquadratic time, and we prove an equivalence of approximate \minconv and exact \minmaxconv.

\end{abstract}
\clearpage


\setcounter{page}{1}


\section{Introduction}

\emph{Scaling} is one of the most fundamental algorithmic techniques.
On a problem involving weights from a range $\{1,\ldots,W\}$, the main idea of scaling is to consider each of the $\log W$ bits  one-by-one. Roughly speaking, in each phase we only consider the current bit, which simplifies the weighted problem to an unweighted problem. 
%
%
The scaling technique was particularly successful for graph problems (e.g.,~\cite{goldberg-algorithm,gabow-tarjan-jacm,duan-weighted-matching,orlin-assignment,gabow-network-problems}).
For instance, a scaling-based algorithm solves maximum weighted matching
in time $\Oh(m\sqrt{n}\log(nW))$~\cite{gabow-tarjan-jacm}, which was recently improved to time $\Oh(m\sqrt{n}\log{W})$~\cite{duan-weighted-matching}.

However, in some situations
scaling-based algorithms may be slower than alternative approaches, since they naturally require a factor $\log{W}$ in the running time. 
In particular, in practice weights are often given as floating-point numbers, and thus $\log W$ can easily be as large as $n$, rendering most scaling-based algorithms inferior to naive approaches. 
For this reason as well as for the genuinely theoretical interest, research on \emph{strongly polynomial} algorithms received major
attention (e.g.,~\cite{eva-tardos-strongly,
orlin-strong-mincostflow,vegh-strongly,submodular-strongly}). 
We say that an algorithm runs in strongly polynomial time if its number of arithmetic operations does not depend on $W$.
For example, the fastest strongly polynomial algorithms for maximum weight matching run in time $\Ot(nm)$~\cite{edmonds-karp,thorup}.
(Here and throughout the paper we let $\tOh(T) = \Oh(T \polylog\, T)$, in particular, $\tOh(n^c)$ never hides a factor $\log W$.)

The resulting challenge is to \emph{design improved strongly polynomial algorithms} whose running times come as close as possible to the best known scaling-based algorithms, but without any $\log W$-factors. 
%
%
%
In this paper, we tackle this challenge for a large class of approximation algorithms. 
This is achieved in part by an 
algorithmic framework that allows us to switch between approximate problems over the (min,+)-semiring and exact problems
over the (min,max)-semiring.

\subsection{Approximating APSP, Matrix Products, and Graph Characteristics}

In this paper, we study the following problems (see
Appendix~\ref{def-problems} for formal problem definitions).

\begin{itemize}[leftmargin=5.05mm]
\item \textbf{Shortest path problems:} The \emph{All-Pairs Shortest Path} problem (APSP) asks to compute, given a directed graph with positive edge weights, the length of the shortest path between any two vertices.

\item \textbf{Matrix products:} 
Given matrices $A,B \in
\real^{n\times n}$, their product over the $(\oplus,\otimes)$-semiring is the matrix $C \in \real^{n\times n}$ with $C[i,j] = \bigoplus_{1 \le k \le n}
(A[i,k] \otimes B[k,j])$. In general, the
product can be computed using $\Oh(n^3)$ semiring operations. Over the $(+,\cdot)$-ring,
the problem is standard matrix multiplication and can be solved in time $\Oh(n^\omega)
\le \Oh(n^{2.373})$~\cite{legall}.
\emph{\minprod} is the problem of computing the matrix product over the $(\min,+)$-semiring. 

\item \textbf{Graph characteristics:}
Specifically, we study the graph characteristics 
\emph{Diameter}, \emph{Radius}, \emph{Median}, \emph{Minimum-Weight Triangle}, and \emph{Minimum-Weight Cycle}.
\end{itemize} 


These graph characteristics and \minprod can be reduced to APSP, and thus all of these problems can be solved in time $\Oh(n^3)$~\cite{floyd,warshall} and using a recent algorithm by Williams~\cite{Williams18} in time $n^3 / 2^{\Omega(\sqrt{\log n})}$. Moreover, with the exception of Diameter, an $\Oh(n^{3-\delta})$-algorithm for one of these graph characteristics, or for \minprod, or for APSP would yield an $\Oh(n^{3-\delta'})$-algorithm for all of these problems~\cite{apsp-equiv,abboud-centrality}. It is therefore conjectured that none of them can be solved in truly subcubic time~\cite{apsp-equiv,abboud-centrality}.

Zwick designed a $(1+\eps)$-approximation algorithm for APSP running in time $\tOh(\frac{n^\omega}{\eps}\log{W})$~\cite{zwick-apsp}. 
This yields approximation schemes with the same guarantees for \minprod and the mentioned graph characteristics~\cite{abboud-centrality,minweightcycles}.
Zwick's running time is close to optimal\footnote{For APSP and \minprod, 
any $(1+\eps)$-approximation can be used to compute the Boolean matrix product~\cite{CohenZ01}, and thus requires time $\Omega(n^\omega)$. 
 Moreover, the dependence on $\frac 1\eps$ should be at least polynomial, since the hardness conjecture for APSP is stated for $W = \poly(n)$~\cite{apsp-equiv}, and a setting of $\eps = 1/W$ yields an exact algorithm.}, except that it is open whether the factor $\log W$ is necessary. 
To the best of our knowledge,  no strongly polynomial approximation scheme is known for any of the mentioned problems.
This leads to our main question:

\begin{center}
  \emph{Do APSP, \minprod, and the mentioned graph characteristics have strongly polynomial approximation schemes running in time~$\tOh(\frac{n^\omega}\eps)$? Or at least in time $\tOh(\frac{n^{3-\delta}}\eps)$ for some $\delta > 0$?}
\end{center}

\noindent
Note that in the setting of strongly polynomial algorithms, by \emph{time} we mean the number of arithmetic operations. However, there is also a corresponding question that considers the bit complexity. In fact, variants of our main question are reasonable and open in at least three different settings:

\begin{itemize}
\item  \emph{Number of arithmetic operations:} When we only count arithmetic operations, then in particular we can add/multiply two $\log W$-bit input integers in constant time. Thus, it is not clear why the running time of an algorithm should depend on $\log W$ at all. Nevertheless, Zwick's algorithm requires $\tOh(\frac{n^\omega}{\eps}\log{W})$ arithmetic operations. It is open whether this can be reduced to $\tOh(\frac{n^\omega}{\eps})$ (or even to $\tOh(\frac{n^{3-\delta}}\eps)$ for any $\delta > 0$).

\item  \emph{Bit complexity with integers:} In bit complexity, an arithmetic operation on $b$-bit integers has cost $\tOh(b)$. Note that the input to APSP consists of $n^2$ many $\log W$-bit integers, and suppose that we keep this number format throughout the algorithm. Running Zwick's algorithm in this setting results in a bit complexity of $\tOh(\frac{n^\omega}{\eps}\log^2{W})$, since each arithmetic operation has bit complexity $\tOh(\log(nW))$. One $\log W$-factor is natural, since we operate on $\log W$-bit integers. The question thus becomes whether the \emph{second} $\log W$-factor of Zwick's algorithm is necessary, or whether it can be improved to bit complexity $\tOh(\frac{n^\omega}{\eps}\log{W})$.

\item  \emph{Bit complexity with floating point approximations:} One can improve upon the bit complexity of Zwick's algorithm as described above by changing the number format to floating point. Note that changing any input number by a factor in $[1,1+\eps]$ changes the resulting distances by at most $1+\eps$ and thus still yields a $(1+\Oh(\eps))$-approximation. We can therefore round any input integer in the range $\{1,\ldots,W\}$ to a floating point number with an $\Oh(\log \tfrac 1\eps)$-bit mantissa and an $\Oh(\log \log W)$-bit exponent. We argue that this is the natural input format of Approximate APSP in Section~\ref{sec:machinemodel}. In this format, arithmetic operations on input numbers have bit complexity $\tOh(\log \tfrac 1\eps + \log \log W)$, and thus a factor $\log \log W$ in the bit complexity would be natural. However, implementing Zwick's algorithm in this setting yields bit complexity $\tOh(\frac{n^\omega}{\eps}\log{W})$. The question now becomes whether this can be improved to $\tOh(\frac{n^\omega}{\eps}\log \log{W})$, after converting the input numbers to floating point in time $\tOh(n^2 \log W)$ (we will ignore this conversion time throughout the paper since it is near-linear in the input size).
\end{itemize}

Note that in all three settings potentially Zwick's algorithm could be improved by a factor up to $\tOh(\log W)$.
We focus on the first setting in this paper, where our goal is to design algorithms whose number of arithmetic operations is independent of $W$. However, our algorithms also yield improvements in the other two settings, which we will briefly mention below. 

\subsection{Our Results}

In this paper, we answer our main question affirmatively for all listed problems (for Directed APSP we need the relaxed form of the question).
Our results hold on the Word RAM, see Section~\ref{sec:machinemodel} for details of the machine model.



For the mentioned graph characteristics, obtaining time $\Ot(\frac{n^\omega}{\eps})$ is an easy exercise.
Since the result is a single number, we can first compute a $\poly(n)$-approximation, round edge weights to obtain $W = \poly(n/\eps)$, and then use the $\Ot(\frac{n^\omega}{\eps}\log{W})$-time approximation scheme as a black box.


\newcounter{footnoteaobc}
\begin{restatable}[]{theorem}{Applications}
    \label{thm:applications}
    Diameter, Radius, Median, Minimum-Weight Triangle, and Minimum-Weight Cycle on directed and undirected graphs have approximation schemes in strongly polynomial time\footnote{Here, by time we mean the number of \emph{arithmetic operations} performed on a RAM machine. The \emph{bit complexity} of the algorithm is bounded by the number of arithmetic operations times $\log \log W$ (up to terms hidden by $\tOh$).} $\Ot(\frac{n^\omega}{\eps})$.
\end{restatable}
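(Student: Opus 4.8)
The plan is to realize the informal recipe stated just before the theorem: for each of the five characteristics, (i)~compute in strongly polynomial time a crude estimate $t^{*}$ with $t^{*} \le \mathrm{OPT} \le \poly(n)\, t^{*}$, where $\mathrm{OPT}$ denotes the true value; (ii)~round the edge weights so that, after rescaling, all weights are positive integers bounded by $W = \poly(n/\eps)$, while the optimum changes by at most a factor $1+\eps$; and (iii)~run the known $\tOh(\tfrac{n^{\omega}}{\eps}\log W)$-time $(1+\eps)$-approximation scheme for the characteristic (obtained from Zwick's algorithm via the reductions of~\cite{abboud-centrality,minweightcycles}) as a black box. Since $\log W = \Oh(\log(n/\eps))$, step~(iii) runs in time $\tOh(\tfrac{n^{\omega}}{\eps}\log(n/\eps)) = \tOh(\tfrac{n^{\omega}}{\eps})$, and steps~(i) and~(ii) use only comparisons, additions, and a handful of divisions and floor operations, all independent of $W$; hence the overall number of arithmetic operations does not depend on $W$.

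The only real content is step~(i), which I would handle uniformly by a binary search over the $\Oh(n^{2})$ distinct edge weights (together with the value $+\infty$), equipped with a $W$-independent, monotone feasibility test on the threshold subgraph $G_{\le t}$ consisting of all edges of weight at most $t$. For \emph{Minimum-Weight Triangle} the test asks whether $G_{\le t}$ contains a triangle (one Boolean matrix product, $\tOh(n^{\omega})$ time); for \emph{Minimum-Weight Cycle}, whether $G_{\le t}$ contains a cycle (a depth-first search); for \emph{Diameter}, \emph{Radius}, and \emph{Median}, whether $G_{\le t}$ is strongly connected, resp.\ contains a vertex that reaches all others (strongly connected components, or one Boolean transitive closure, $\tOh(n^{\omega})$ time); in the undirected case these specialize to ordinary connectivity. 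If even the full graph fails the test then $\mathrm{OPT} = \infty$ and we are done; otherwise let $t^{*}$ be the smallest feasible threshold. On the one hand $t^{*} \le \mathrm{OPT}$, since the optimal structure certifying $\mathrm{OPT}$ (a triangle, a cycle, a shortest-path arborescence of the optimal center, etc.) uses only edges of weight at most $\mathrm{OPT}$ and hence is feasible at threshold $\mathrm{OPT}$. On the other hand $\mathrm{OPT} \le n^{2} t^{*}$: the structure witnessed inside $G_{\le t^{*}}$ spans at most $n$ vertices, every path within it has at most $n-1$ edges each of weight at most $t^{*}$, and (for Median) the objective sums at most $n$ such path lengths. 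Step~(i) thus costs $\tOh(n^{\omega})$ in total.

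For step~(ii), set $\delta := \eps\, t^{*}/n^{2}$, round every edge weight \emph{up} to the nearest multiple of $\delta$, and \emph{delete} every edge of weight exceeding $n^{2} t^{*}$. Deletion does not decrease $\mathrm{OPT}$ (it only removes competing structures and can only increase distances) and does not increase it either, since the optimal structure has all its edge weights at most $\mathrm{OPT} \le n^{2} t^{*}$ and hence survives; in particular no previously finite quantity becomes infinite. Rounding up can only increase $\mathrm{OPT}$, by less than $\eps\, \mathrm{OPT}$: the optimal structure uses at most $n^{2}$ edge slots, so its value grows by less than $n^{2}\delta = \eps\, t^{*} \le \eps\, \mathrm{OPT}$. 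After rescaling all weights by $1/\delta$ we obtain positive integers bounded by $W = \Oh(n^{4}/\eps)$. Feeding this instance to the black box of step~(iii) returns, in time $\tOh(\tfrac{n^{\omega}}{\eps})$, a value within a factor $(1+\eps)^{2} = 1 + \Oh(\eps)$ of $\mathrm{OPT}$; replacing $\eps$ by a constant fraction of itself yields the claimed $(1+\eps)$-approximation. (The bit-complexity bound in the footnote follows by performing the $\Oh(n^{2})$ divisions of step~(ii) in the floating-point model of Section~\ref{sec:machinemodel}, after which all numbers fit in $\Oh(\polylog(n/\eps) + \log\log W)$ bits.)

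I expect the main obstacle to be step~(i) for \emph{Radius} and \emph{Median}: there, unlike for Diameter, a single shortest-path computation only yields the interval $[w_{\min}, (n-1)w_{\max}]$ for $\mathrm{OPT}$, whose ratio is $(n-1)W$ and not polynomial in $n$, so one genuinely needs the threshold formulation with the right monotone reachability predicate and the two-sided bound $t^{*} \le \mathrm{OPT} \le n^{2}t^{*}$ derived above. Everything else is routine once the predicates are fixed: the $W$-independence of Dijkstra, depth-first search, strongly connected components, and Boolean matrix multiplication; the monotonicity of the feasibility tests; the correctness of the rounding-and-deletion argument; and the fact that the reductions of~\cite{abboud-centrality,minweightcycles} are strongly polynomial and blow up weights only polynomially, so that the hypothesis $W = \poly(n/\eps)$ is preserved when they are composed with Zwick's algorithm.
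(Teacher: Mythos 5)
Your proof is correct and follows essentially the same route as the paper: binary-search over the $O(n^2)$ distinct edge weights with a $W$-independent feasibility test to bracket $\mathrm{OPT}$ within a polynomial factor, round edge weights to multiples of $\eps t^*/n^2$ and cap the large ones so that the resulting integer weights are bounded by $\poly(n/\eps)$, then invoke the known $\tOh(\frac{n^\omega}{\eps}\log W)$ scheme as a black box. The only departure is in the feasibility test: the paper forms, for a candidate $w$, the graph $G_w$ in which all surviving edges are assigned the \emph{uniform} weight $w$ and runs the $2$-approximation on $G_w$ (which costs $\tOh(n^\omega)$ because $G_w$ has constant weight ratio), giving a single uniform test for all five characteristics; you instead spell out problem-specific structural predicates on $G_{\le t}$ (triangle detection, cycle detection, strong connectivity, one-to-all reachability), which costs the same and yields the same two-sided bound $t^*\le\mathrm{OPT}\le n^2 t^*$. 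You are also slightly more careful than the paper's write-up in explicitly deleting the edges of weight exceeding $n^2 t^*$ before rounding, which is implicitly needed to justify the claimed bound $W\le n^4/\eps$.
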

\setcounter{footnoteaobc}{\thefootnote}

For APSP restricted to \emph{undirected} graphs, we also obtain time $\Ot(\frac{n^\omega}{\eps})$. 
We augment an essentially standard scaling-based algorithm for APSP by
contracting light edges. This is more involved than our solution for graph
characteristics, and is inspired by an iterative algorithm
of~\citet{eva-tardos-strongly}. Similar edge contraction arguments have been used in the context of parallel algorithms for approximate APSP on undirected graphs~\cite{KleinS92,Cohen97}. 

\begin{restatable}[]{theorem}{UndirectedAPSP}
    \label{thm:undir-apsp}
    $(1+\eps)$-Approximate Undirected APSP is in strongly polynomial time\footnotemark[\thefootnoteaobc] $\Ot(\frac{n^\omega}{\eps})$.
\end{restatable}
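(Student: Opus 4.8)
The plan is to run an essentially standard scaling-based approximation scheme for APSP — morally Zwick's algorithm applied to one distance scale $2^i$ at a time — but to precede the work at scale $2^i$ by \emph{contracting all edges of weight below $\tau_i := \eps 2^i/(cn)$} (for a suitable constant $c$) and \emph{discarding the contracted clusters that cannot lie on a shortest path of length $\Theta(2^i)$}. On undirected graphs, contracting light edges genuinely shrinks the vertex set, and it costs only a $1+O(\eps)$ factor: a shortest path of length $\Theta(2^i)$ traverses at most $n$ edges of weight $<\tau_i$, of total weight $<n\tau_i=\eps 2^i/c$. The point of this is amortization over scales, explained in the third paragraph; the same contraction idea underlies the iterative algorithm of \citet{eva-tardos-strongly} and the parallel approximate-APSP algorithms of \cite{KleinS92,Cohen97}.

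In detail: (1)~In $\tOh(n^2)$ time compute the single-linkage hierarchy of $G$, i.e.\ the laminar family of $\le 2n-1$ vertex sets $C$ arising as connected components of $G_{<\tau}:=(V,\{e:w_e<\tau\})$ as $\tau$ grows, together with the merge weight $w^-(C):=\min\{w_e: e\text{ has exactly one endpoint in }C\}$ of each $C$; also compute all bottleneck distances $\mathrm{bot}(u,v)$ (the maximum edge on the MST path), recalling $\mathrm{bot}(u,v)\le d_G(u,v)\le (n-1)\,\mathrm{bot}(u,v)$. (2)~For each integer scale $i$ with $2^i\in[w_{\min},nw_{\max}]$, call a hierarchy cluster $C$ \emph{active at scale $i$} if $C$ is a component of $G_{<\tau_i}$ and $w^-(C)<2^{i+1}$; let $V_i$ be the set of active clusters, and let $H_i$ be the graph on vertex set $V_i$ obtained from $G$ by contracting each component of $G_{<\tau_i}$, deleting all edges of weight $\ge 2^{i+1}$, deleting all clusters not in $V_i$, and keeping only the lightest of any parallel edges. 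Every edge of $H_i$ then has weight in $[\tau_i,2^{i+1})$, so $H_i$ has aspect ratio $O(n/\eps)$, and all the $H_i$ together are built in $\tOh(n^2)$ time, since the endpoints of a fixed edge $e$ lie in distinct active clusters only when $\tau_i\le w_e<2^{i+1}$, i.e.\ for only $O(\log(n/\eps))$ scales. (3)~Run Zwick's $(1+\eps)$-approximation on each $H_i$; because of the bounded aspect ratio this costs $\tOh(|V_i|^\omega/\eps)$, and it yields an estimate $E_i(u,v)$ whenever $u$ and $v$ lie in two distinct active clusters at scale $i$. (4)~Output $\hat d(u,v):=\min\{E_i(u,v): E_i(u,v)\in[2^i(1-c'\eps),\,2^{i+1}(1+c'\eps))\}$ for a suitable constant $c'$, examining — using $\mathrm{bot}(u,v)$ — only the $O(\log n)$ scales $i$ with $2^i\in[\mathrm{bot}(u,v),2n\,\mathrm{bot}(u,v))$; this step is $\tOh(n^2)$ with standard bookkeeping (level-ancestor queries in the hierarchy to recover each vertex's cluster at each scale, plus lookups into the $\sum_i|V_i|^2=\tOh(n^2)$ stored estimates).

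The running time hinges on one observation: a fixed cluster $C$ is active only at scales $i$ with $2^i\in(w^-(C)/2,\ cn\,w^-(C)/\eps]$ — that is, at $O(\log(n/\eps))$ scales — since $w^-(C)<2^{i+1}$ forces $2^i>w^-(C)/2$, while $C$ being a component of $G_{<\tau_i}$ forces $\tau_i\le w^-(C)$, i.e.\ $2^i\le cn\,w^-(C)/\eps$. Hence $\sum_i|V_i|=\sum_C O(\log(n/\eps))=O(n\log(n/\eps))$, and since $|V_i|\le n$ and $\omega\ge 1$ we get $\sum_i|V_i|^\omega\le n^{\omega-1}\sum_i|V_i|=\tOh(n^\omega)$. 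So the total time is $\tOh(n^\omega/\eps)$ with no factor depending on $W$, i.e.\ the algorithm is strongly polynomial.

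For correctness, fix $u,v$ and write $D:=d_G(u,v)$, $i^\star:=\lfloor\log_2 D\rfloor$. A short calculation shows that at scale $i^\star$ the clusters of $u$ and $v$ are distinct and active, that the $G$-shortest $u$–$v$ path visits only active clusters and only edges of weight $<2^{i^\star+1}$, and that contraction perturbs the $u$–$v$ distance by at most $\sum_C\mathrm{diam}(C)<n\tau_{i^\star}=\eps2^{i^\star}/c$ (the clusters are vertex-disjoint); hence $d_{H_{i^\star}}(u,v)\in[(1-\eps/c)D,\,D]$, so $E_{i^\star}(u,v)$ is a $(1+O(\eps))$-approximation of $D$ and, since $D\in[2^{i^\star},2^{i^\star+1})$, it is trusted. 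The main obstacle is that estimates at \emph{other} scales can be wrong: for $i<i^\star$ the deletion of edges of weight $\ge 2^{i+1}\le D$ can disconnect $u$ from $v$ in $H_i$, giving an arbitrary overestimate, while for a few scales just above $i^\star$ the contraction of clusters of diameter $\approx D$ can create a spurious short $u$–$v$ route in $H_i$, giving an underestimate. The trusted window rules both out, precisely because the light-contraction error at scale $i$ is only $O(\eps 2^i)$: for $i<i^\star$ this error is below $D$, so every defined $E_i(u,v)$ is $\ge(1-O(\eps))D$, hence $>2^{i+1}(1+c'\eps)$ once $i\le i^\star-2$; for $i\ge i^\star+2$ one shows the $G$-shortest path still survives in $H_i$, so every defined $E_i(u,v)$ is $\le(1+\eps)D<(1+\eps)2^{i-1}<2^i(1-c'\eps)$; and $i=i^\star\pm1$ can be trusted only when $D$ lies within a $(1\pm O(\eps))$ factor of a power of two, in which case the estimate is again a $(1+O(\eps))$-approximation. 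Thus the set of trusted estimates is nonempty and all its members approximate $D$ to within $1+O(\eps)$, so $\hat d(u,v)=(1\pm O(\eps))D$; rescaling $\eps$ by a constant completes the proof. Verifying these inequalities is the technical heart of the argument; everything else is routine scaling.
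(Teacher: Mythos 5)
Your proof is correct, and at the level of its key ideas it matches the paper's own (Algorithm 12 in Section 6.2): contract light edges so each per-scale graph has bounded aspect ratio, run Zwick's algorithm on each scale, combine estimates through a trusted window, and amortize to eliminate the $\log W$ factor. The execution is organized differently, though, in a way worth noting. The paper builds a single graph $H$ incrementally (add edges of weight in $[q,2q)$, round all weights down to multiples of $q\eps/n$, contract the resulting zero-weight edges, run Zwick's algorithm on each connected component) and amortizes over \emph{pairs of original vertices}: each pair is connected-but-not-yet-contracted in $H$ for only $O(\log(n/\eps))$ iterations, giving $\sum_{q,i} n_{q,i}^2 = O(n^2\log(n/\eps))$ and hence, since $\omega\ge 2$, $\sum_{q,i} n_{q,i}^\omega \le n^{\omega-2}\sum_{q,i} n_{q,i}^2 = \tOh(n^\omega)$. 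You instead precompute the single-linkage hierarchy once and construct every $H_i$ directly from it, amortizing over \emph{hierarchy clusters}: each dendrogram node $C$ is active for only $O(\log(n/\eps))$ scales (namely those $i$ with $2^i\in(w^-(C)/2,\,cn\,w^-(C)/\eps]$), which gives the slightly stronger cluster-count bound $\sum_i |V_i| = O(n\log(n/\eps))$ and hence $\sum_i |V_i|^\omega \le n^{\omega-1}\sum_i |V_i| = \tOh(n^\omega)$ using only $|V_i|\le n$. You also replace the paper's explicit rounding by a direct aspect-ratio bound on each $H_i$, and you use bottleneck distances to localize the output step. Both roads arrive at $\tOh(n^\omega/\eps)$; the paper's version is more self-contained (no dendrogram or bottleneck distances to precompute and no per-scale rebuilding), while yours makes the scale structure and the amortization more transparent as a one-line cluster count, at the modest cost of a somewhat more delicate trusted-window case analysis around $i=i^\star\pm 1$.
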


For APSP on \emph{directed} graphs the ideas used above fail, since there are $n^2$ output numbers and we cannot contract directed edges.
As our most involved result in this paper, we obtain a \emph{truly subcubic} strongly polynomial approximation scheme for APSP; no such algorithm was known before.


\begin{theorem}
    \label{directed-apsp-easy-theorem}
    $(1+\eps)$-Approximate Directed APSP is in strongly polynomial time\footnotemark[\thefootnoteaobc] $\Ot(n^{\frac{\omega + 3}{2}} / \eps)$.
\end{theorem}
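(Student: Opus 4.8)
The plan is to establish \Cref{directed-apsp-easy-theorem} by two successive reductions, neither of which ever touches the \emph{magnitudes} of the weights: $(1+\eps)$-Approximate Directed APSP reduces to computing $\Oh(\log n)$ approximate \minprod instances, and each approximate \minprod instance reduces to $\tOh(1/\eps)$ \emph{exact} \minmaxprod instances. The latter are solved in strongly polynomial time $\tOh(n^{(\omega+3)/2})$ by the algorithm of Duan and Pettie. Composing the reductions gives an operation count of $\tOh(\polylog(n)\cdot \tfrac1\eps \cdot n^{(\omega+3)/2}) = \tOh(n^{(\omega+3)/2}/\eps)$, with no dependence on $W$. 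The conceptual point is that \minmaxprod depends only on the \emph{order} of the entries, not on their values, so its running time carries no $\log W$ factor; this is exactly what makes the whole scheme strongly polynomial, and (since $(\omega+3)/2 > \omega$) it is also the source of the worse exponent compared to the undirected case in \Cref{thm:undir-apsp}, which had used edge contraction instead.

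For the first reduction, recall that the APSP matrix equals the $n$-th power of $W \oplus I$ over the $(\min,+)$-semiring, obtained by $\floor{\log_2 n}+1$ repeated squarings; with positive weights, shortest walks are shortest paths, so powers beyond $n-1$ add nothing. I would carry out each squaring only $(1+\eps')$-approximately, with $\eps' = \Theta(\eps/\log n)$. Since under the $(\min,+)$-product two entrywise $(1+\delta)$-approximations of matrices yield an entrywise $(1+\delta)$-approximation of their product, an easy induction shows the accumulated error after $\floor{\log_2 n}+1$ squarings is at most $(1+\eps')^{\Oh(\log n)} \le 1+\eps$. The dynamic range of the intermediate matrices (entries as large as $nW$) is irrelevant because the \minprod subroutine we use is scale-invariant, so it suffices to compute a single $(1+\eps')$-approximate \minprod in time $\tOh(n^{(\omega+3)/2}/\eps')$.

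The heart of the proof is this subroutine. Let $C = A \otimes B$ be the exact \minprod, $C[i,j] = \min_k (A[i,k]+B[k,j])$. The elementary inequality driving everything is $x + cy \le (1+c)\max(x,y)$ for all $x,y,c \ge 0$, with equality at $x=y$. For an integer $d \ge 0$, let $M^{(d)}$ be the \emph{exact} \minmaxprod of $A$ and $(1+\eps)^d B$, i.e.\ $M^{(d)}[i,j] = \min_k \max(A[i,k], (1+\eps)^d B[k,j])$, and symmetrically let $N^{(d)}$ be the exact \minmaxprod of $(1+\eps)^d A$ and $B$. With $D := \ceil{\log_{1+\eps}(1/\eps)} = \Theta(\tfrac1\eps \log\tfrac1\eps)$, define
\[
\widehat C[i,j] \ :=\ \min_{0 \le d \le D}\Big(\,\big(1 + (1+\eps)^{-d}\big)\, M^{(d)}[i,j],\ \ \big(1 + (1+\eps)^{-d}\big)\, N^{(d)}[i,j]\,\Big).
\]
I claim $C[i,j] \le \widehat C[i,j] \le (1+\eps)\,C[i,j]$. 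The lower bound is termwise: if $M^{(d)}[i,j] = \max(A[i,k],(1+\eps)^d B[k,j])$ for some $k$, then $(1+(1+\eps)^{-d})\,M^{(d)}[i,j] \ge A[i,k] + (1+\eps)^{-d}(1+\eps)^d B[k,j] = A[i,k]+B[k,j]\ge C[i,j]$ (same for $N^{(d)}$). For the upper bound, take $k^*$ with $C[i,j] = A[i,k^*]+B[k^*,j]$ and WLOG $A[i,k^*]\ge B[k^*,j]$; set $d^* = \log_{1+\eps}(A[i,k^*]/B[k^*,j])\ge 0$ and use $d = \min(\floor{d^*}, D)$. Then $(1+\eps)^d B[k^*,j] \le A[i,k^*]$, so $M^{(d)}[i,j]\le A[i,k^*]$, while either $(1+\eps)^{-d}\le (1+\eps)\,B[k^*,j]/A[i,k^*]$ (if $d=\floor{d^*}$) or $(1+\eps)^{-d}\le\eps$ (if $d=D<\floor{d^*}$); in both cases $(1+(1+\eps)^{-d})\,M^{(d)}[i,j] \le (1+\eps)\,(A[i,k^*]+B[k^*,j])$. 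Thus $\widehat C$ is a valid $(1+\eps)$-approximate \minprod, computed with $2(D+1)=\tOh(1/\eps)$ exact \minmaxprod calls plus $\Oh(n^2)$ arithmetic operations for the scalings and the outer minimization — all strongly polynomial.

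Putting it together: each of the $\Oh(\log n)$ squarings costs $\tOh(n^{(\omega+3)/2}/\eps')=\tOh(\tfrac{\log n}{\eps}n^{(\omega+3)/2})$, for a total of $\tOh(n^{(\omega+3)/2}/\eps)$ arithmetic operations, independent of $W$. I expect the main obstacle to be conceptual rather than computational: identifying that the right \emph{exact} primitive to reduce to is \minmaxprod — an order-only operation, so that bucketing by the ratio $A[i,k^*]/B[k^*,j]$ (scale-invariant!) replaces the usual scaling by bits of $W$ — and then verifying that this primitive already achieves $\tOh(n^{(\omega+3)/2})$ via Duan and Pettie. The routine parts (error bookkeeping across the $\Oh(\log n)$ squarings, and checking that forming $(1+\eps)^d B$ and comparing scaled entries costs no $W$-dependent work in the Word RAM model of \Cref{sec:machinemodel}) should go through without difficulty. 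Finally, one should note that the exponent $(\omega+3)/2$ is not an artifact of this particular reduction: any improvement would improve the best known \minmaxprod algorithm, which is precisely the equivalence the paper proves separately.
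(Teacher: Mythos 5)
Your proof is correct, and the overall architecture matches the paper's: reduce $(1+\eps)$-Approximate Directed APSP to $\Oh(\log n)$ applications of $(1+\eps')$-Approximate \minprod via repeated squaring (Theorem~\ref{thm:red-apsp-minprod}), then reduce each approximate \minprod to $\tOh(1/\eps')$ exact \minmaxprod calls solved by the $\tOh(n^{(\omega+3)/2})$-time algorithm of Duan and Pettie. What differs --- and differs in an interesting way --- is the inner reduction. The paper goes through its Sum-to-Max-Covering lemma (Theorem~\ref{covering-lemma-intro}), which produces $s = \Oh((\tfrac1\eps + \log n)\log\tfrac1\eps)$ pairs of modified vectors by combining a ``close covering'' (quantize $A$-entries to powers of $(1+\eps)$ and additively shift $B$-entries) with a much more intricate ``distant covering'' (a $\log n$-level chunk decomposition with $\Oh(\log\tfrac1\eps)$ shifted transitions at chunk boundaries, which removes entries by setting them to $\infty$). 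Your construction avoids both ingredients entirely: you compute the exact \minmaxprod of $A$ and $(1+\eps)^d B$ (and symmetrically) for $d = 0,\ldots,D$ with $D = \ceil{\log_{1+\eps}(1/\eps)}$, and apply a per-$d$ multiplicative correction $(1+(1+\eps)^{-d})$. The lower bound $(1+(1+\eps)^{-d})M^{(d)}[i,j] \ge C[i,j]$ holds uniformly in $d$ by splitting the correction factor against the two terms under the max, and the upper bound follows by tuning $d$ to the ratio $A[i,k^*]/B[k^*,j]$ at the optimal $k^*$; I checked both and they go through, including the clamping at $d=D$ when the ratio exceeds $1/\eps$. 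Your approach is noticeably more elementary, does not require removing entries, and has a slightly tighter count of $\Oh(\tfrac1\eps\log\tfrac1\eps)$ \minmaxprod calls (no additive $\log n$). What the paper's heavier machinery buys is a reusable black-box covering lemma stated at the level of arbitrary vectors, which the authors then invoke verbatim for \minconv and in the equivalence proofs; your scaled-product trick would also transfer to those settings, but it is phrased as a bespoke reduction rather than a standalone combinatorial statement.
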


Our approximation scheme for (directed) APSP is, in fact, a reduction from \emph{approximate} APSP to the \emph{exact} problem \minmaxprod, i.e., the problem of computing the matrix product over the $(\min,\max)$-semiring. This problem is closely related to the All-Pairs Bottleneck Path problem.\footnote{In All-Pairs Bottleneck Path we are given a directed graph
with capacities on its edges, and want to determine for all vertices $u,v$ the capacity
of a single path for which a maximum amount of flow can
be routed from $u$ to $v$.}
\minmaxprod and All-Pairs Bottleneck Path can be solved in time $\tOh(n^{\frac{\omega + 3}{2}})$~\cite{fastest-apbp}, which is why this term appears in our approximation scheme for APSP.

Furthermore, our reduction also works in the other direction, which yields an \emph{equivalence} of approximation schemes for APSP and exact algorithms for \minmaxprod. In particular, for readers willing to believe that the best known running time for \minmaxprod is essentially optimal, this can be seen as a conditional lower bound for approximate APSP, showing that any improvements upon our approximation scheme in terms of the exponent of $n$ is unlikely.

%

\begin{restatable}[]{theorem}{EquivalenceTheorem}
  \label{thm:equ-apsp}
  For any $c \ge 2$, if one of the following statements is true, then all are:
  \begin{itemize}
    \item $(1+\eps)$-Approximate Directed APSP can be solved in strongly polynomial time $\tOh(n^c/ \poly(\eps))$,
    \item $(1+\eps)$-Approximate \minprod can be solved in strongly polynomial time $\tOh(n^c /\poly(\eps))$,
    \item exact \minmaxprod can be solved in strongly polynomial time $\tOh(n^c)$,
    \item exact All-Pairs Bottleneck Path can be solved in strongly polynomial time $\tOh(n^c)$.
  \end{itemize}
\end{restatable}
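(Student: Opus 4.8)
The plan is to chain together a handful of reductions proving the four problems pairwise interreducible, with overheads that the target bounds $\tOh(n^c/\poly(\eps))$ (resp.\ $\tOh(n^c)$ for the exact problems) absorb: since $c \ge 2$ is a fixed constant, each reduction may spend an extra additive $\tOh(n^2)$, a multiplicative $\poly(\log n, 1/\eps)$ factor, and a $\poly(n)$-factor increase of the matrix dimension. The reductions group into three blocks: (i)~$(1+\eps)$-Approximate Directed APSP $\equiv$ $(1+\eps)$-Approximate \minprod; (ii)~exact \minmaxprod $\equiv$ exact All-Pairs Bottleneck Path; and (iii)~$(1+\eps)$-Approximate \minprod $\equiv$ exact \minmaxprod, where $\equiv$ means solvability within the stated bound transfers both ways. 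Blocks (i) and (ii) are the anticipated ``matrix product versus transitive closure'' equivalences. For (i): a \minprod of $A,B$ is literally an Approximate Directed APSP instance on the $3n$-vertex layered tripartite graph with edge weights $A[i,k]$, $B[k,j]$ (each $i$-to-$j$ path has length $\min_k(A[i,k]+B[k,j])$), while conversely Approximate APSP is obtained from $\Oh(\log n)$ rounds of repeated squaring using $(1+\eps')$-Approximate \minprod with $\eps' := \eps/(2\lceil\log n\rceil)$, so the accumulated multiplicative error stays below $1+\eps$. For (ii): replace every entry by its rank in $\{1,\dots,\Oh(n^2)\}$ --- the $(\min,\max)$-semiring depends only on the order of entries, which is also why the last two bullets carry no strong-polynomiality or $\poly(\eps)$ caveat --- after which \minmaxprod is the same tripartite special case of All-Pairs Bottleneck Path (via an order-reversing affine map that turns $(\min,\max)$ into the $(\max,\min)$-semiring and makes all capacities positive), and conversely the all-pairs bottleneck matrix is the $(\max,\min)$-transitive closure, computed by $\Oh(\log n)$ repeated squarings.

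Block (iii) is the heart of the argument. Its easy direction, Approximate \minprod $\Rightarrow$ exact \minmaxprod, uses \emph{exponential reweighting} and is precisely where strong polynomiality is indispensable. Given $A,B$ for \minmaxprod, replace each entry by its rank $r \in \{1,\dots,N\}$ with $N \le 2n^2$, and form $A'[i,k] := \rho^{r(A[i,k])}$, $B'[k,j] := \rho^{r(B[k,j])}$ for a fixed constant base $\rho := 8$. Since $\rho^a + \rho^b \in [\rho^{\max(a,b)}, 2\rho^{\max(a,b)}]$, a single call to $(1+\tfrac14)$-Approximate \minprod on $A',B'$ returns, for each $(i,j)$, a number in $[\rho^{m_{ij}}, \tfrac52\rho^{m_{ij}}) \subseteq [\rho^{m_{ij}}, \rho^{m_{ij}+1})$, where $m_{ij} := \min_k \max\big(r(A[i,k]),\, r(B[k,j])\big)$ is the rank of the \minmaxprod entry sought; one reads off $m_{ij}$ by binary search among the precomputed powers $\rho^0,\dots,\rho^N$ and then un-ranks. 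The entries of $A',B'$ have $2^{\Theta(n^2)}$ bits, so this reduction would be worthless with a scaling-based (hence $\log W$-dependent) \minprod algorithm --- but a \emph{strongly polynomial} one performs only $\tOh(n^c)$ \emph{arithmetic operations} regardless of operand size, which is exactly the model in which the theorem is stated.

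The hard direction of block (iii), exact \minmaxprod $\Rightarrow$ Approximate \minprod, is the main obstacle; it is what forces the exponent $\tfrac{\omega+3}{2}$ in Theorem~\ref{directed-apsp-easy-theorem}. The plan is to exploit the elementary identity $a+b = \min_{0<\theta<1}\max\big(\tfrac a\theta,\, \tfrac b{1-\theta}\big)$ for $a,b>0$: the inequality ``$\ge$'' holds for \emph{every} fixed $\theta$, so it survives discretization and never yields an under-estimate, while the minimum is attained at $\theta = a/(a+b)$. Discretizing $\theta$ to a uniform grid $\mathcal{T}$ of $\Oh(1/\eps)$ values then makes $\min_{k,\,\theta\in\mathcal{T}}\max\big(\tfrac{A[i,k]}{\theta},\, \tfrac{B[k,j]}{1-\theta}\big)$ a $(1+\Oh(\eps))$-approximation of $\min_k(A[i,k]+B[k,j])$: for the optimal witness $k^\ast$ with ideal split $\theta^\ast := A[i,k^\ast]/\min_k(A[i,k]+B[k,j])$, the set of $\theta$ giving a $(1+\Oh(\eps))$-estimate is an interval around $\theta^\ast$ of length $\Theta(\eps)$, hence contains a grid point. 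Crucially, $\min_{k,\,\theta\in\mathcal{T}}\max(\cdots)$ is \emph{exactly} an exact \minmaxprod of an $n\times(n|\mathcal{T}|)$ matrix with an $(n|\mathcal{T}|)\times n$ matrix, i.e.\ --- after padding to a square instance of dimension $\Oh(n/\eps)$ --- a \minmaxprod in time $\tOh\big((n/\eps)^c\big) = \tOh(n^c/\poly(\eps))$. The delicate points to handle are the boundary regime $\theta^\ast$ near $0$ or $1$ (where $\min_k(A[i,k]+B[k,j])$ is dominated by a single term, and the extreme grid points suffice), the treatment of zero entries, and checking that discretizing $\theta$ inflates the answer only by a $1+\Oh(\eps)$ factor rather than a constant one; since only a $\poly(1/\eps)$ overhead is needed for the equivalence, the crude uniform grid is enough (a finer grid, applied after a preliminary constant-factor approximation that fixes the scale, is what would yield the $\eps^{-1}$ dependence of Theorem~\ref{directed-apsp-easy-theorem}).

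Assembling the blocks closes the cycle: Approximate APSP $\Rightarrow$ Approximate \minprod $\Rightarrow$ exact \minmaxprod $\Rightarrow$ exact All-Pairs Bottleneck Path $\Rightarrow$ exact \minmaxprod $\Rightarrow$ Approximate \minprod $\Rightarrow$ Approximate APSP, so all four problems lie in one equivalence class. Degenerate inputs --- zero weights in \minprod, unreachable pairs in APSP, occurrences of the symbol $\infty$ --- are disposed of by separate Boolean-matrix-product computations costing $\tOh(n^\omega)$, which is harmless because each of the four problems subsumes Boolean matrix multiplication, so if any one of the four statements holds then already $n^\omega = \tOh(n^c)$. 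I expect the identity-and-grid argument of block (iii) to be by far the most delicate ingredient; everything else is bookkeeping around standard semiring reductions.
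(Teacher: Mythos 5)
Your proof follows the same high-level skeleton as the paper — the four problems are linked by the cycle $\text{Approx APSP} \Leftrightarrow \text{Approx \minprod} \Leftrightarrow \text{\minmaxprod} \Leftrightarrow \text{APBP}$, with the layered-tripartite-graph and repeated-squaring reductions for the two endpoint equivalences, and the exponential-reweighting trick (base $\rho = 4(1+\eps)^2$ rounded to a power of two in the paper, your fixed $\rho = 8$ with $\eps = \tfrac14$ being the same thing) for \minmaxprod $\le$ Approx \minprod, explicitly using the floating-point input format so that writing $\rho^{\mathrm{rank}}$ costs $\Oh(1)$ per entry.

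Where you genuinely diverge from the paper is the reduction Approx \minprod $\le$ \minmaxprod. The paper funnels this through its main technical lemma, the Sum-To-Max-Covering (Theorem~\ref{covering-lemma-intro}), a combinatorial construction split into a ``Close Covering'' (bucketing by $(1+\eps)$-powers, with a $\Theta(\tfrac1\eps\log\tfrac1\eps)$ spacing to avoid interference) and a ``Distant Covering'' (a one-dimensional well-separated-pair-type recursion with $\Oh(\log n)$ levels and $\Oh(\log\tfrac1\eps)$ shifts), yielding $s = \Oh\big((\tfrac1\eps + \log n)\log\tfrac1\eps\big)$ layers. You instead invoke the identity $a+b = \min_{0<\theta<1}\max(a/\theta,\,b/(1-\theta))$ and discretize $\theta$ to a uniform grid $\mathcal T$ of spacing $\Theta(\eps)$. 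This is correct: for every fixed $\theta$ one has $\max(a/\theta,\,b/(1-\theta)) \ge a+b$ (since $a \le \theta(a+b)$ and $b \le (1-\theta)(a+b)$ cannot both fail), giving property (i) of the covering; and the set of $\theta$ achieving $\max(a/\theta,\,b/(1-\theta)) \le (1+\eps)(a+b)$ is the interval $\big[\theta^*/(1+\eps),\, 1-(1-\theta^*)/(1+\eps)\big]$ around $\theta^* = a/(a+b)$, whose length is identically $\eps/(1+\eps)$ and which always lies inside $(0,1)$, so any grid of spacing $\le \eps/(1+\eps)$ hits it. Setting $A^{(\theta)}[i,k] := A[i,k]/\theta$ and $B^{(\theta)}[k,j] := B[k,j]/(1-\theta)$ for each $\theta \in \mathcal T$ thus produces a Sum-To-Max-Covering with only $s = \Oh(1/\eps)$ layers, uniformly handling what the paper splits into ``close'' and ``distant'' pairs and in particular avoiding the $\log n$ factor of the paper's Distant Covering. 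For the purposes of Theorem~\ref{thm:equ-apsp} (which only demands $\poly(\eps)$ overhead) either is fine, but your route is shorter, more elementary, and — if you carry it through Theorem~\ref{thm:apx-minprod} and Section~\ref{minconv} — would in fact shave $\polylog$ factors there as well. Two small clean-ups worth making: (a) the final sentence's ``$n^\omega = \tOh(n^c)$'' should read that \emph{Boolean matrix multiplication} is in $\tOh(n^c)$ under the hypothesis, not that $\omega \le c$; and (b) in the paper's model all weights lie in $\real = \{a \ge 1\}\cup\{\infty\}$, so the ``zero entries'' caveat you flag is vacuous here.
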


Our techniques also transfer to other problems over the (min,+)-semiring. In particular, we design the first strongly polynomial-time approximation scheme for \minconv. Analogously to Theorem~\ref{thm:equ-apsp}, we complement this result by an equivalence of approximating \minconv and exactly solving \minmaxconv.
\begin{theorem}
$(1+\eps)$-Approximate \minconv is in strongly polynomial time\footnotemark[\thefootnoteaobc]  $\Ot(n^{3/2}/\sqrt{\eps})$. Furthermore, for any $c \ge 1$, if one of the following statements is true, then both are:
  \begin{itemize}
    \item $(1+\eps)$-Approximate \minconv can be solved in strongly polynomial time $\tOh(n^c/ \poly(\eps))$,
    \item exact \minmaxprod can be solved in strongly polynomial time $\tOh(n^c)$.
  \end{itemize}

\end{theorem}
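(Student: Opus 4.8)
The plan is to prove the algorithmic bound by a scaling-free reduction of $(1+\eps)$-approximate \minconv to \emph{exact} \minmaxconv together with FFT, and then to read this reduction in both directions to obtain the equivalence. The governing observation is that $\max(a,b)\le a+b\le 2\max(a,b)$, so \minmaxconv already provides a $2$-approximation and one only needs to sharpen it on pairs whose two summands have comparable magnitude. Concretely, I would partition the range of values into geometric \emph{levels} of multiplicative width $2^{L}$ with $L=\Theta(\log\tfrac1\eps)$. For a pair $(i,j)$ with $i+j=k$ whose two values lie in far-apart levels, the smaller value is at most an $\eps$-fraction of the larger, so $A[i]+B[j]=(1\pm\eps)\max(A[i],B[j])$; all such pairs are accounted for simultaneously by a single exact \minmaxconv of $A$ and $B$, which costs $\tOh(n^{3/2})$ using the known bottleneck-convolution algorithm. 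For the remaining pairs, whose values lie in the same level (an adjacent-level correction is discussed below), I would restrict $A$ and $B$ to the indices landing in that level, rescale by the level's magnitude, and round to multiples of $\eps$: this turns the within-level subproblem into an \emph{exact} \minconv instance with nonnegative integer entries bounded by $O(1/\eps)$, the rounding slack being absorbed into the $(1+\eps)$-budget. Such a bounded-entry \minconv reduces to a single two-dimensional Boolean convolution, hence to FFT; running it \emph{sparsely} on a level containing $n_\ell$ relevant indices costs only $\tOh(\min(n_\ell^{2},\,n/\eps))$. Since $\sum_\ell n_\ell=O(n)$, the sum of $\min(n_\ell^2,n/\eps)$ over all nonempty levels is largest when every $n_\ell\approx\sqrt{n/\eps}$, giving the total $\tOh(n\sqrt{n/\eps})=\tOh(n^{3/2}/\sqrt\eps)$.

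For the equivalence, one direction is exactly the reduction above read backwards: given an algorithm for exact \minmaxconv in time $\tOh(n^c)$, the cross-level part of the scheme costs $\tOh(n^c)$ while the within-level FFT part is bounded unconditionally, so $(1+\eps)$-approximate \minconv runs in strongly polynomial time $\tOh(n^c/\poly(\eps))$ (and $c=3/2$ recovers the stated bound). For the other direction I would use an exponential encoding. Given a \minmaxconv instance, first replace every entry by its rank among the at most $2n$ values that appear; this map is order-preserving and therefore commutes with both $\max$ and $\min$, so without loss of generality all entries are integers in $\{1,\dots,2n\}$. Now replace an entry $v$ by $3^{v}$ and ask the assumed algorithm for a $\tfrac54$-approximation $\widetilde C$ of the \minconv of the resulting arrays, i.e., $C^\star[k]\le\widetilde C[k]\le\tfrac54\,C^\star[k]$ with $C^\star[k]=\min_{i+j=k}\bigl(3^{A[i]}+3^{B[j]}\bigr)$. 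Since $3^{a}+3^{b}\in[3^{\max(a,b)},\,2\cdot 3^{\max(a,b)}]$ and this quantity is strictly increasing in $\max(a,b)$, the minimizer over $i+j=k$ is attained at a pair minimizing $\max(A[i],B[j])$, so $\widetilde C[k]\in[3^{m^\star},\,\tfrac52\cdot 3^{m^\star})\subseteq[3^{m^\star},3^{m^\star+1})$, where $m^\star$ is the exact \minmaxconv answer at $k$. Hence $m^\star=\lfloor\log_3\widetilde C[k]\rfloor$, recovered by binary search against precomputed powers of $3$ in $\tOh(n)$ further arithmetic operations. The encoded numbers have $\Theta(n)$ bits, but this is harmless precisely because the assumed algorithm is \emph{strongly} polynomial: its operation count does not depend on the magnitude $W=3^{2n}$. (This is also why the equivalence would break down for merely weakly polynomial, scaling-based algorithms.)

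The step I expect to be the main obstacle is making the approximation scheme genuinely strongly polynomial, i.e., removing the $\Theta(\log W)$ factor that a textbook level-by-level scaling would incur from the number of distinct magnitude scales. The resolution sketched above is to iterate only over the \emph{nonempty} levels and to bound the total cost by the telescoping estimate $\sum_\ell\min(n_\ell^2,n/\eps)\le\tOh(n\sqrt{n/\eps})$ rather than by (number of levels) times (worst case per level); care is also needed to handle pairs whose values straddle adjacent level boundaries (a second pass with levels shifted by half a level-width suffices, after placing the \minmaxconv cutoff slightly inside a level) and to verify that the rounding error accumulated over the two summands stays within the multiplicative $(1+\eps)$-budget. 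Finally, one must check that, when $\tOh(n^c)$-time \minmaxconv is plugged in, the FFT-based within-level work indeed fits inside the claimed $\tOh(n^c/\poly(\eps))$ bound for every $c\ge 1$.
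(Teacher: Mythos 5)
The main gap is in your treatment of distant summands. You claim that all distant pairs are ``accounted for simultaneously by a single exact \minmaxconv of $A$ and $B$.'' This is incorrect: the \minmaxconv $D[k]=\min_{i+j=k}\max\{A[i],B[j]\}$ minimizes over \emph{all} pairs, including close ones, and can return a value as small as $C[k]/2$ when a close pair attains the minimum. (For instance, take $A[i]=B[j]=1$ and $A[i']=1/2,\ B[j']=100$ with $i+j=i'+j'=k$: then $C[k]=2$ but $D[k]=1$.) Since the approximate answer must satisfy $\tilde C[k]\ge C[k]$, taking the minimum of $D[k]$ and the close-case estimate can fall strictly below $C[k]$; rescaling $D[k]$ by a constant ruins the upper bound instead. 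What the paper does here is the Distant Covering lemma: it builds $\Oh(\log n\cdot\log\tfrac1\eps)$ pairs of \emph{restricted sub-arrays} such that every distant pair of entries is covered by at least one restricted pair, while any two surviving entries across a restricted pair are themselves distant, so the lower bound survives after a $\tfrac1{1-\eps}$ correction. Making this work with only $\polylog(n/\eps)$ many \minmaxconv calls -- rather than the $\Omega(\log W)$ a naive per-magnitude-level restriction would give -- requires a recursive splitting of the sorted entries into $\log n$ levels of chunks followed by $\Oh(\log\tfrac1\eps)$ shifted boundary passes; this is the paper's central technical device and your remarks about an ``adjacent-level correction'' and ``a second pass with levels shifted by half a level-width'' do not supply it.

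Your close-summand argument (restrict to one magnitude scale, round to integers in $\{1,\dots,\Oh(1/\eps)\}$, choose between FFT and brute force, and balance via $\sum_\ell n_\ell=\Oh(n\log\tfrac1\eps)$) matches the paper's $\textsc{CloseMinConv}$, and your exponential-encoding reduction from \minmaxconv to approximate \minconv is essentially the paper's. One further issue in the equivalence: when exact \minmaxconv runs in $\tOh(n^c)$ with $c<3/2$, the close-case FFT cost $\tOh(n^{3/2}/\sqrt\eps)$ is not subsumed by $\tOh(n^c/\poly(\eps))$, so for that direction you must also reduce the close pairs to \minmaxconv (as the paper's Close Covering does) rather than handling them with an independent FFT-based subroutine.
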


As an application, we obtain an approximation scheme with the same guarantees for the related Tree Sparsity problem.


    \begin{theorem}
$(1+\eps)$-Approximate \sparsity is in strongly polynomial time\footnotemark[\thefootnoteaobc] $\Ot(\frac{n^{3/2}}{\sqrt{\eps}})$. 
    \end{theorem}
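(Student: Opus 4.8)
The plan is to derive approximate \sparsity from an approximate Max-Plus Convolution oracle, re-using the \emph{exact} reduction from \sparsity to Max-Plus Convolution from prior work. Recall the standard dynamic program: for a rooted tree with nonnegative node weights, let $f_v[s]$ denote the maximum weight of a connected subtree of size exactly $s$ rooted at $v$, with $-\infty$ marking ``no such subtree''; then $f_v$ is obtained from the arrays of the children of $v$ by iterated $(\max,+)$-convolution. Done naively this costs $\Theta(n^2)$, but the known reduction improves it to $\Ot(T_\oplus(n))$, where $T_\oplus(n)$ is the cost of one length-$n$ $(\max,+)$-convolution: one performs a heavy-path decomposition and, for each heavy path, merges the arrays of the light children of its vertices (and then the per-segment arrays along the path itself) in a \emph{balanced binary} fashion. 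The two features we need are: (i) the total convolution work telescopes to $\Ot(T_\oplus(n))$, because the subtree sizes at heavy-path tops are geometrically decreasing when grouped by heavy-path depth and because $m \mapsto m^{3/2}$ is super-additive; and (ii) the number of nested convolutions on any dependency chain is $\polylog(n)$ --- one logarithm from the balanced merges and one from the number of stacked heavy paths.

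Now replace each exact convolution by a $(1+\delta)$-approximate one with $\delta := \Theta(\eps/\log^2 n)$. Such an oracle runs in time $\Ot(n^{3/2}/\sqrt\delta)$: our framework for the $(\min,+)$-semiring applies verbatim to the $(\max,+)$-semiring over nonnegative entries (keeping the size index exact and treating $-\infty$ and genuine $0$ entries as the semiring's special elements), so the $\Ot(n^{3/2}/\sqrt\eps)$-time approximation scheme for \minconv established above has a Max-Plus counterpart. Since $(\max,+)$-convolution is monotone, feeding in entrywise $(1+\delta)^{-t}$-underapproximations of the true children arrays yields an entrywise $(1+\delta)^{-t}$-underapproximation of their true convolution, and the approximate oracle loses one further factor $1+\delta$; hence the multiplicative error accumulates additively in the exponent along each dependency chain, and by (ii) the computed array $\tilde f_{\mathrm{root}}$ satisfies $f_{\mathrm{root}}[s]/(1+\delta)^{\polylog(n)} \le \tilde f_{\mathrm{root}}[s] \le f_{\mathrm{root}}[s]$, which is a $(1+\eps)$-approximation after our choice of $\delta$. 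For the running time, by (i) the total convolution work is $\Ot(n^{3/2}/\sqrt\delta) = \Ot(n^{3/2}\log n/\sqrt\eps) = \Ot(n^{3/2}/\sqrt\eps)$; rescaling $\eps$ by a constant yields the claimed bound, and since the oracle operates on floating-point numbers with $\Ot(\log\tfrac1\eps + \log\log W)$-bit representations, the bit complexity is as stated in the theorem's footnote.

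The main obstacle is the interface between the reduction's structure and the error analysis rather than any single deep step: one must (a) confirm that the reduction uses $(\max,+)$-convolution purely as a black box, so substituting the approximate oracle is sound, and that its dependency chains genuinely have length $\polylog(n)$; (b) prove the monotonicity/error-propagation lemma for approximate $(\max,+)$-convolution, including the correct handling of $-\infty$ sentinels and of $0$-weight optimal subtrees (which must be reported exactly); and (c) check that the telescoping of convolution sizes over the heavy-path decomposition survives the extra $1/\sqrt\delta = \Theta(\log n/\sqrt\eps)$ factor and stays within $\Ot(n^{3/2}/\sqrt\eps)$. Each of these is routine once the right lemmas are isolated, but they must be assembled with care.
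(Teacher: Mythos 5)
The paper's proof of this corollary is essentially a one-liner: it invokes the black-box reduction in Theorem~\ref{tree-sparsity-reduction} (cited from~\cite{partition,tree-sparsity}), which converts a $T(n,\eps)$-time approximation for \minconv into an $\Oh\big((n+T(n,\eps/\log^2 n))\log n\big)$-time approximation for \sparsity, then plugs in the $\tOh(n^{3/2}/\sqrt{\eps})$ bound from Theorem~\ref{apx-minconv} and observes that the reduction itself is strongly polynomial. You instead re-derive the reduction from scratch in a $(\max,+)$ form. That is a different route, and it contains a genuine gap.

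The gap is in the sentence claiming that ``our framework for the $(\min,+)$-semiring applies verbatim to the $(\max,+)$-semiring.'' It does not. The paper's framework (Sum-to-Max-Covering) converts an approximate Min-Plus into an exact \minmaxprod: one builds vectors so that $\min_\ell \max\{A^{(\ell)}[i],B^{(\ell)}[j]\}$ tightly overestimates $A[i]+B[j]$. A literal dualization for $(\max,+)$ would need a ``Sum-to-Min-Covering'' where $\max_\ell \min\{A^{(\ell)}[i],B^{(\ell)}[j]\}$ tightly underestimates $A[i]+B[j]$, and then one would compute Max-Min convolutions. But such a covering cannot exist: for distant pairs (say $A[i] \ge B[j]/\eps$), the minimum $\min\{A[i],B[j]\} = B[j]$ is a factor roughly $\eps$ of the sum, and no per-index reassignment of $A^{(\ell)}[i]$ (which cannot depend on $j$) can boost that minimum to within $1+\eps$ of $A[i]+B[j]$ while still satisfying the global upper-bound constraint for other pairs $j'$. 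So a ``verbatim'' transfer would produce an incorrect algorithm. A correct $(\max,+)$ analogue of Theorem~\ref{apx-minconv} with the same $\tOh(n^{3/2}/\sqrt{\eps})$ time does exist, but the distant case must be handled by a different observation — for distant pairs the \emph{max} already approximates the sum, so one computes the trivial (and even cheaper) Max-Max convolution on the covered layers rather than a dual of Min-Max — and the close case is handled by the scaling-plus-FFT argument with rounding down instead of up. That asymmetry is exactly what ``verbatim'' obscures and must be argued explicitly. The simpler repair, and the paper's actual route, is to notice that Theorem~\ref{tree-sparsity-reduction} is already phrased in terms of approximate $(\min,+)$-convolution, so no $(\max,+)$ oracle is needed at all; one just cites the reduction, checks it performs no $\log W$-dependent work, and substitutes Theorem~\ref{apx-minconv}.
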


We prove these results related to \minconv in Section~\ref{minconv} as Theorems~\ref{apx-minconv},~\ref{thm:conv-equ}, and Corollary~\ref{tree-sparsity-apx}.

\subsection{Technical Overview}

Our main technical contribution is the following Sum-to-Max-Covering, which yields a framework for reducing approximate problems over the (min,+)-semiring to exact or approximate problems over the (min,max)-semiring. 
The most intriguing results of this paper (the approximation scheme for directed APSP as well as the equivalence with \minmaxprod) are essentially immediate consequences of Sum-to-Max-Covering, see Sections~\ref{directed-apsp} and \ref{equiv}. Here we denote $[n] = \{1,\ldots,n\}$.

\begin{restatable}[Sum-to-Max-Covering]{theorem}{CoveringLemma}
    \label{covering-lemma-intro}
    Given vectors $A,B \in \real^n$ and $\eps >
    0$, in linear time in the output size we can compute vectors $A^{(1)},\ldots,A^{(s)}, B^{(1)},\ldots,B^{(s)} \in \real^n$ with $s =
    \Oh((\frac{1}{\eps} + \log{n})\log{\frac{1}{\eps}})$ such that for all $i,j \in [n]$:
    \[ A[i] + B[j] \le \min_{\ell \in [s]} \max\{A^{(\ell)}[i],B^{(\ell)}[j]\} \le (1+\eps)(A[i] + B[j]). \]
\end{restatable}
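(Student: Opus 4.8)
The plan is to reduce the two-parameter approximation $A[i]+B[j] \approx \max\{A^{(\ell)}[i], B^{(\ell)}[j]\}$ to a one-parameter bucketing by first decomposing into dyadic scales of $A[i]+B[j]$, and on each scale further splitting by the dyadic scale of $A[i]$ (hence also of $B[j]$), and finally quantizing within a scale. Concretely, I would first handle degenerate entries: if $A[i]=0$ then $A[i]+B[j]=B[j]$, which is exactly represented by the covering pair $(0,B)$, so the pair $A^{(0)} := (0,\ldots,0)$, $B^{(0)} := B$ takes care of all $(i,j)$ with $A[i]=0$, and symmetrically for $B[j]=0$; from now on assume all entries are positive. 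Rescale so the smallest positive entry of $A$ or $B$ is $1$, let $U$ be the largest; note $\log U$ may be huge, so we are not allowed a number of vectors depending on $\log U$ — this is exactly the point where ``strongly polynomial'' bites and where the $\log n$ (not $\log U$) in the bound $s = \Oh((\tfrac1\eps+\log n)\log\tfrac1\eps)$ must come from.

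The key trick to avoid a $\log U$ dependence is that we do not need a separate bucket for every scale of $A[i]+B[j]$: once the two summands are at very different scales, the larger one already $(1+\eps)$-approximates the sum, so it suffices to have $\Oh(\log\tfrac1\eps)$ ``offset'' levels relating the scale of $A[i]$ to the scale of $B[j]$. Precisely, for each integer offset $d$ with $|d| \le \log(2/\eps)$, I build vectors that are responsible for the pairs $(i,j)$ where $\lfloor \log_2 A[i]\rfloor - \lfloor \log_2 B[j]\rfloor = d$ (plus, for $|d|$ near the boundary, all pairs where the gap is at least $\log(2/\eps)$, handled by a ``dominant term'' bucket). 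For the pairs handled by a fixed offset $d$, the sum $A[i]+B[j]$ and, say, $A[i]$ are within a factor $2^{|d|}+1 = \Oh(1/\eps)$ of each other, so on a log-scale the quantity $\log_2(A[i]+B[j])$ lives in a window of width $\Oh(\log\tfrac1\eps)$ \emph{relative to} $\log_2 A[i]$; but $\log_2 A[i]$ itself ranges over $[0,\log_2 U]$. To kill that range, I partition the indices $i$ by $\lfloor \log_2 A[i]\rfloor \bmod m$ for a suitable block length $m = \Theta(\log\tfrac1\eps)$: within one residue class, consecutive ``active'' $A$-scales are $m$ apart, so for a fixed residue class and fixed offset $d$ the relevant values of $A[i]+B[j]$, when sorted, naturally break into groups where inside a group we may quantize $A[i]+B[j]$ to $\Oh(1/\eps)$ multiplicatively-spaced levels spanning a window of multiplicative width $2^m = \poly(1/\eps)$, i.e.\ $\Oh(\tfrac1\eps \cdot \frac{1}{\eps})$ — one must be slightly careful and instead let the window be $\Theta(\tfrac1\eps)$ wide multiplicatively, giving $\Oh(\tfrac1\eps)$ quantization levels per group, and observe that the number of occupied groups across all of $[0,\log_2 U]$ is at most $n$ (each index $i$ lies in one), so we get $\Oh(\tfrac1\eps)$ levels per group and at most $\Oh(\log n)$ ``group indices'' that can be reused across scales — this is where the $\log n$ enters. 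For each (offset $d$, residue $r$, quantization level $t$) we emit one pair $(A^{(\ell)}, B^{(\ell)})$ defined by: $A^{(\ell)}[i]$ is the chosen threshold value $\theta$ if $i$ is in residue class $r$ and at the right scale, and $+\infty$ otherwise; $B^{(\ell)}[j]$ is $\theta$ if $B[j]$ is at the compatible scale and $+\infty$ otherwise. Then $\max\{A^{(\ell)}[i], B^{(\ell)}[j]\} = \theta$ exactly when both coordinates are ``active'', and $\theta$ is chosen as the smallest quantization level that is $\ge A[i]+B[j]$ for the pairs it is meant to cover, giving $A[i]+B[j] \le \theta \le (1+\eps)(A[i]+B[j])$; for all other pairs the $\max$ is $+\infty$, which only helps the required inequality (and the lower bound $A[i]+B[j]\le \min_\ell \ldots$ holds because each pair $(i,j)$ is covered by at least one $\ell$ with a valid $\theta$, while no $\ell$ ever produces a value below $A[i]+B[j]$).

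Counting: $\Oh(\log\tfrac1\eps)$ offsets $d$, times $\Oh(\log\tfrac1\eps)$ residues $r$ (taking $m=\Theta(\log\tfrac1\eps)$), times... that would give $\Oh(\tfrac1\eps\log^2\tfrac1\eps)$ and $\Oh(\log n \cdot \tfrac1\eps)$ separately; combining the two regimes (small sum handled by the $\log n$ group-count, large sum / dominant-term handled by the offset buckets) yields the stated $s = \Oh((\tfrac1\eps+\log n)\log\tfrac1\eps)$ after merging identical pairs. The construction is clearly linear in the output size $\Oh(sn)$: for each emitted $\ell$ we scan $A$ and $B$ once. \textbf{The main obstacle} I anticipate is the bookkeeping that simultaneously (a) keeps the number of vectors independent of $\log U$ and (b) keeps it near-linear in $\tfrac1\eps$ rather than, say, $\tfrac1{\eps^2}$ or $\tfrac1\eps\log U$ — this forces the two-level scheme (dyadic offset $d$ between the scales of $A[i]$ and $B[j]$, \emph{plus} a modular partition of the absolute scale of $A[i]$) and a careful choice of the block length $m$ balancing the $2^m$ blow-up in quantization levels against the $\Oh(\log_2 U / m)$ residue classes, only the \emph{occupied} ones of which (at most $n$, or $\Oh(\log n)$ after grouping) need be counted. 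Verifying the two inequalities is then routine: the lower bound is immediate since every emitted value is $+\infty$ or a threshold $\ge$ the sum; the upper bound needs that for each $(i,j)$ the responsible bucket's threshold overshoots by at most $1+\eps$, which follows from the multiplicative spacing of quantization levels being $1+\eps$ together with the degenerate-case and dominant-term reductions that ensure $A[i]+B[j]$ is comparable (within $\Oh(1/\eps)$) to whichever summand's scale indexes the bucket.
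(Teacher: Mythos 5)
Your high-level intuition is sound --- split close pairs from distant pairs, avoid a $\log W$ dependence via a modular partition of the dyadic scale, and recognize that a $\log n$ factor must enter somewhere --- but the central mechanism of your construction is not correct as described, and the step where the hard work would have to happen is left as an unjustified assertion.

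The main gap is in the close case. You emit one pair of vectors per (offset $d$, residue $r \bmod m$, scale, quantization level $\theta$), where every active entry of both emitted vectors equals a single fixed threshold $\theta$. That threshold must track the magnitude of $A[i]+B[j]$ for the pairs it covers, so you cannot reuse a threshold across different occupied scales within the same residue class: the $A$-scales in a class are $r, r+m, r+2m, \ldots$, and a $\theta$ appropriate at scale $r$ is off by a factor $2^m$ at scale $r+m$. Thus you need a fresh batch of $\Theta(\frac{1}{\eps}\log\frac{1}{\eps})$ thresholds for each \emph{occupied} scale, and there can be up to $n$ occupied scales (the inputs can sit at $n$ distinct dyadic scales). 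Your assertion that this collapses to ``at most $\Oh(\log n)$ group indices that can be reused across scales'' is exactly where the proof would have to do real work, and you give no argument for it; as written it would leave a factor $n$, not $\log n$, in $s$. In fact the paper needs \emph{no} $\log n$ factor at all for the close case. Lemma~\ref{close-covering} avoids the one-threshold-per-vector design entirely: a single vector $A^{(\ell)}$ carries the value $(1+\eps)^d$ for every $d \equiv \ell \pmod s$ with $s=\Theta(\frac{1}{\eps}\log\frac{1}{\eps})$ simultaneously, and $B^{(\ell)}[j]$ is set to $B[j]+(1+\eps)^d$ --- i.e.\ the (rounded) sum itself rather than a quantization grid point. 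The max then directly reproduces the sum, with no per-scale blowup, and non-interference between distinct $d,d'\in D_\ell$ is automatic once $s\ge 1+\log_{1+\eps}(1/\eps^2)$.

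The second gap is the distant case, which you dismiss as ``handled by a dominant-term bucket.'' This is actually where the $\log n$ genuinely comes from and it is the more delicate half of the proof. For distant pairs you do want to keep entries unchanged ($A^{(\ell)}[i]=A[i]$, $B^{(\ell)}[j]=B[j]$, since $\max$ already $(1+\eps)$-approximates the sum), but then you must guarantee that no two close entries are ever simultaneously active, or the lower bound $\max\{A^{(\ell)}[i],B^{(\ell)}[j]\}\ge A[i]+B[j]$ fails by up to a factor~$2$. Achieving this with only $\Oh(\log n\log\frac{1}{\eps})$ vectors and no $\log W$ dependence is the content of Lemma~\ref{distant-covering-lemma}: sort all $2n$ entries, recursively split into chunks over $\Oh(\log n)$ levels, separate odd/even pairs of chunks, and smear chunk boundaries over $\Oh(\log\frac{1}{\eps})$ shifted transition zones. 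Your proposal contains no analogue of this construction, and a naive substitute (one bucket per dyadic scale of the larger summand) would reintroduce the $\log W$ dependence you correctly set out to avoid.
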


There are two main issues that make the proof of this statement non-trivial. 

For \emph{close pairs} $i,j$, meaning $\frac{A[i]}{B[j]} \in [\eps,\frac 1\eps]$, the sum $A[i] + B[j]$ and the maximum $\max\{A[i],B[j]\}$ differ significantly. It is thus necessary to change the values of the vectors $A,B$.
Roughly speaking, we handle this issue by splitting $A$ into vectors $A^{(\ell)}$ such that all entries $A^{(\ell)}[i], A^{(\ell)}[i']$ differ by either less than a factor $1+\eps$ or by more than a factor $\poly(1/\eps)$. Then we can choose $B^{(\ell)}$ such that $B^{(\ell)}[j]$ is approximately $A[i] + B[j]$ for all close pairs $i,j$. This ensures that for close pairs $\max\{A^{(\ell)}[i],B^{(\ell)}[j]\}$ is approximately $A[i] + B[j]$. For details see \emph{Close Covering} (Lemma~\ref{close-covering}).

For the \emph{distant pairs} $i,j$, with $\frac{A[i]}{B[j]} \not\in [\eps,\frac 1\eps]$, the sum $A[i] + B[j]$ and the maximum $\max\{A[i],B[j]\}$ differ by less than a factor $1+\eps$, so we do not have to change any values. However, we need to remove some entries (by setting them to $\infty$) in order to not interfere with close pairs. We show how to cover all distant pairs but no too-close pairs, via a recursive splitting into $\log n$ levels of chunks and treating boundaries between chunks by introducing several shifts of restricted areas.
For details see \emph{Distant Covering} (Lemma~\ref{distant-covering-lemma}).

\subsection{Further Related Work}
\label{related-work}

It is known that in general not every scaling-based algorithm can be made strongly polynomial, see, e.g., Hochbaum's work on the allocation problem~\cite{hochbaum-lower-bound}.

\paragraph{APSP and \minprod}
For undirected graphs with weights in $\{-W,\ldots,W\}$, APSP can be solved exactly in time $\Ot(W n^\omega)$~\cite{seidel,galil-margalit-apsp,alon-apsp,ShoshanZ99}, where $\omega <
2.373$ is the matrix multiplication exponent~\cite{legall}.
For directed graphs with weights in $\{-W,\ldots,W\}$, \citet{zwick-apsp} presented an
$\Oh(W^{0.68}n^{2.575})$-time algorithm that also uses fast matrix multiplication (in fact, recent advances for rectangular matrix multiplication yield slightly stronger bounds~\cite{LeGall12, LeGallU18}). 

The closest related work to our paper is by \citet{max-weight-triangle}, who considered the real-valued \minprod. They proposed a method to compute the $k$ most significant bits
of each entry of the \minprod in time $\Oh(2^k n^{2.687} \log{n})$, in the traditional comparison-addition model of computation. 
This is similar to an additive $W/2^k$-approximation.
However, it is incomparable to a $(1+\eps)$-approximation algorithm for \minprod, since (1) the $k$ most significant bits might all be 0, in which case they do not provide a multiplicative approximation, and (2) a $(1+\eps)$-approximation not necessarily allows to determine any particular bit of the result, e.g., if a number is very close to being a power of 2.
Subsequently, their dependence on $n$ was improved to $\Oh(2^k
n^{2.684})$\cite{yuster-dominance}, which was further refined to $\Oh(2^{0.96k}n^{2.687})$ and to $\Oh(2^{ck}n^{2.684})$ for some $c < 1$~\cite{legall-apsp}. 

For approximate APSP for real-valued
graphs with weights in $[-n^{o(1)},n^{o(1)}]$, \citet{yuster-apsp-real} presented an additive $\eps$-approximation in time $\Ot(n^{\frac{\omega + 3}{2}})$. 
More recently, among other results, \citet{roditty-apsp-additive} 
gave an algorithm computing every distance $d_G(u,v)$ up to an additive error of $d_G(u,v)^p$ in time $\tOh(W n^{2.575 - p/(7.4 - 2.3p)})$. For very small $W$, this interpolates between Zwick's fastest exact algorithm and his approximation algorithm~\cite{zwick-apsp}.

In this paper we will focus on the problem of $(1+\eps)$-approximating APSP when
$\eps$ is close to~$0$. For $\eps < 1$ the problem is at least as hard as Boolean
matrix multiplication~\cite{CohenZ01} and thus requires time $\Omega(n^\omega)$. However, there are more efficient
algorithms in the regime $\eps \ge 1$ for undirected graphs, using \emph{spanners} and \emph{distance oracles}~\cite{distance-oracles}.



\paragraph{All-Pairs Bottleneck Path and \minmaxprod}
The \emph{All-Pairs Bottleneck Path} (APBP) problem is, given an edge-weighted directed graph $G$, to determine for all vertices $u,v$ the maximal weight $w$ such that there is a path from $u$ to $v$ using only edges of weight at least $w$.
It is known that APBP is equivalent to \minmaxprod, up to lower order factors in running time.
The first truly subcubic algorithm for \minmaxprod was given by \citet{first-apbp}, which was improved to time $\Oh(n^{\frac{\omega+3}{2}})$ by \citet{fastest-apbp}.


\citet{apbp-vertex-weight} proposed an $\Oh(n^{2.575})$-time algorithm for a \emph{vertex-weighted} variant of APBP. 
\citet{apsp-af} introduced the problem \emph{All-Pairs Shortest Path for All
Flows} (APSP-AF) and provided an approximation algorithm in time $\Ot(n^{\frac{\omega + 3}{2}} \eps^{-3/2} \log{W})$. They also proved an equivalence with \minmaxprod. However, in contrast to the equivalences presented in this paper, their equivalence loses a factor $\log W$, and thus does not work for strongly polynomial algorithms.

APSP and APBP can be easily computed in time $\Oh(n^{2.5})$ on quantum
computers~\cite{quantum-virginia}. \citet{legall-apsp} designed the first quantum algorithm for computing
\minmaxprod in time $\Oh(n^{2.473})$, and noted that every problem
equivalent to APBP admits a nontrivial $\Oh(n^{2.5 - \eps})$-time algorithm in the
quantum realm.

It is also worth mentioning that there are efficient algorithms for products in
other algebraic structures, e.g., dominance product, $(+,\min)$-product,
$(\min,\le)$-product (see, e.g.,~\cite{vassilevska-thesis}).

\paragraph{Hardness of Approximation in P}
There is a growing literature on hardness of approximation in P (see,
e.g.,~\cite{pcp-in-p,rubinstein,AbboudR18,ChalermsookCKLM17,pcp-in-p-2,Chen18}), building on recent progress
in fine-grained complexity theory. For readers that are willing to believe that
the current algorithms for \minmaxprod are close to optimal, our equivalence of
approximating APSP and exactly computing \minmaxprod is a hardness of
approximation result, and in fact it is one of the first tight lower bounds for
approximation algorithms for problems in P (cf.~\cite{Chen18}).

\subsection{Organization}

After preliminaries on the machine model in Section~\ref{prelim}, we present our approximation scheme for APSP in Section~\ref{directed-apsp} and the equivalence with \minmaxprod in Section~\ref{equiv}. The main technical result, Max-to-Sum-Covering, is proved in Section~\ref{sec:covering-lemma}. In Section~\ref{strong-apsp} we discuss Undirected APSP, and in Section~\ref{applications} we discuss certain graph characteristics. Finally, in
Section~\ref{minconv} we present our approximation scheme for \minconv and prove the
equivalence with \minmaxconv. Formal problem definitions can be found
in Appendix~\ref{def-problems}.


\section{Preliminaries}
\label{prelim}

\paragraph{Notation}

By $W$ we denote the largest input weight. We
use $\Ot$-notation to suppress polylogarithmic factors in $n$ and $\eps$,
but never in $W$. 
By $\omega <
2.373$ we denote the exponent of matrix
multiplication~\cite{legall}.  Observe that $\frac{\omega + 3}{2}
< 2.687$. We use $[n]$ to denote the set $\{1,2,\ldots,n\}$. We assume all input weights to be positive real numbers --  in particular after scaling we can assume all numbers to be at least 1, so the input range is $\real = \{ a \ge 1 \; | \; a
\in \mathbb{R}\} \cup \{\infty\}$. 
We denote by $W$ the largest finite input number.

We will state our results for both directed and undirected graphs. By default,
$G$ denotes a graph, $V$ the set of its vertices and $E$ set of edges. In most
cases the graph is weighed with a function $w : E \rightarrow \real$. When we
talk about graph algorithms, $n$ denotes the number of vertices and $m$
the number of edges. We consider multiplicative $(1+\eps)$-approximation algorithms, where the deviation from the exact value is always one-sided. 
We assume
that $\eps > 0$ is sufficiently small ($\eps < 1/10$). For formal
definitions of the problems considered in this paper, see Appendix~\ref{def-problems}.

\subsection{Machine Model and Input Format}
\label{sec:machinemodel}

Throughout the paper, we will assume that \textbf{for all approximate problems input numbers are represented in floating-point}, while \textbf{for all exact problems input numbers are integers represented in usual bit representation}. This choice of representation is not necessary for our new approximation algorithms (they would also work on the Word RAM with input in bit representation or on the Real RAM allowing only additions and comparisons); however, it is necessary for our equivalences between approximate and exact problems, as we discuss at the end of this section. We first describe the details of these formats as well as why this choice is well-motivated and natural.

The reader is invited to skip over the machine model details and consider an
unrealistic, but significantly simpler model of computation throughout the
paper: A Real RAM model where all logical and arithmetic operations on real
numbers have unit cost, including rounding operations. This model is too
powerful to be a realistic model of computation~\cite{real-ram-power}, but considering our algorithms in this model captures the main ideas.

\paragraph{Floating-Point Representation for Approximate Problems}
For all approximate problems considered in this paper, we can change every input weight by a factor $1+\eps$ in a preprocessing step; this changes the result by at most a factor $1+\eps$. It therefore suffices to store for each input weight $w$ its rounded logarithm $e = \lfloor \log_2 w \rfloor$, which requires only $\Oh(\log \log W)$ bits, and a $(1+\eps)$-approximation of $w/2^e \in [1,2]$, which requires only $\Oh(\log 1/\eps)$ bits. Note that this is floating-point representation. Hence, floating-point is the natural input format for the approximate problems studied in this paper!

The necessity for rigorous models for floating-point numbers in theoretical computer science was observed in~\cite{floating-point-1,floating-point-2,floating-point-4}.
Here we follow the format proposed by \citet{thorup}, except that we slightly simplify it, since we only want to represent positive reals. 
In floating-point representation, a positive real number is given as a pair $x = (e,f)$, where the \emph{exponent} $e$ is a $\kappa$-bit
integer and the \emph{mantissa} $f$ is a $\gamma$-bit string $f_1,\ldots,f_\gamma$. The pair $x$ represents the real number
\begin{displaymath}
    2^e \cdot \Big(1 + \sum_{i=1}^\gamma f_i/2^i\Big).
\end{displaymath}
Here $\gamma,\kappa$ are parameters of the model. Moreover, we assume that all arithmetic operations on floating-point numbers can be performed in constant time.


For all approximate problems considered in this paper, we assume the input weights to be given in floating-point format. In particular, if the input weights are in the range $[1,W]$, we assume floating-point representation with $\Theta(\log n)$-bit mantissa and $\Theta(\log n + \log \log W)$-bit exponent. 
The unit-cost assumption (that all arithmetic operations on floating-point numbers take constant time) thus hides at most a factor $\tOh(\log n + \log \log W)$ compared to, e.g., the complexity of performing these operations by a device operating on bits. Note that many other formats can be efficiently converted into floating-point, and thus our algorithms also work in other settings.

Note that using a fixed floating-point precision introduces inherent inaccuracies when performing arithmetic operations. For simplicity of presentation, however, we shall assume that all arithmetic operations yield an exact result. For the algorithms in this paper, it is easy to see that this assumption can be removed by increasing the precision slightly.

\paragraph{Bit Representation for Exact Problems}
The only two exact problems that we consider in this problem are \minmaxprod and \minmaxconv. 
Since both problems are of the Min-Max type, it is easy to see that we can replace all input numbers by their \emph{ranks}, i.e., their index in the sorted ordering of all input numbers. Solving the problem on the ranks, we can then infer the result. Hence, up to additional near-linear time in the input size to determine the ranks, we can assume that all input numbers are integers in the range $\{1,\ldots,\textup{poly}(n)\}$, and thus all input numbers are $\Oh(\log n)$-bit integers. This is the reason why for the exact problems studied in this paper, bit representation is the natural input format, and not floating-point!
As usual for the Word RAM, we assume that each memory cell stores $\Omega(\log n)$-bit integers, and thus operations on input numbers can be performed in constant time.

\paragraph{Necessity of our Choice of Input Representation}
We crucially use our choice of input formats in our equivalences of approximate
Min-Plus and exact Min-Max problems (see Theorem~\ref{thm:equ-apsp}, Theorem~\ref{thm:conv-equ}):
In the reduction from exact Min-Max to approximate Min-Plus we need to exponentiate some numbers. In usual bit representation, this would translate $\Oh(\log n)$-bit integers to $\poly(n)$-bit integers and thus not be efficient enough. However, if $m$ is an $\Oh(\log n)$-bit integer in standard bit-representation, then we can store $2^m$ in floating-point representation by storing $m$ as the exponent; the resulting floating-point number has an $\Oh(\log n)$-bit exponent (and an $\Oh(1)$-bit mantissa). 

For the other direction, from approximate problems in floating-point to exact
problems in bit representation, we use that for Min-Max problems we can replace input numbers by their ranks, which converts floating-point numbers to $\Oh(\log n)$-bit integers in bit representation.


\section{Strongly Polynomial Approximation for Directed APSP}
\label{directed-apsp}

We present a strongly polynomial $(1+\eps)$-approximation algorithm for APSP with running time $\tOh(n^{\frac{\omega + 3}{2}} \eps^{-1})$, proving Theorem~\ref{directed-apsp-easy-theorem}. 
To this end, we first recall the reduction from approximate APSP to approximate \minprod
from~\cite{zwick-apsp} (see Theorem~\ref{thm:red-apsp-minprod}). Then we observe that Sum-To-Max-Covering yields a reduction from approximate \minprod to \minmaxprod. Using the known $\tOh(n^{\frac{\omega + 3}{2}})$-time algorithm for the latter shows the result (see Theorem~\ref{thm:apx-minprod}).



\begin{theorem}[Implicit in \cite{zwick-apsp}]
    \label{thm:red-apsp-minprod}
    If $(1+\eps)$-Approximate \minprod can be solved in time $T(n,\eps)$, then $(1+\eps)$-Approximate APSP can be solved in time $\Oh\big( T(n,\eps/\log n) \cdot \log n\big)$.
\end{theorem}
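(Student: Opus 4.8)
The plan is to use the textbook reduction from APSP to Min-Plus Product via repeated squaring of the weighted adjacency matrix, except that each squaring is performed only approximately, together with an argument that the errors compound only mildly. Write $A \otimes B$ for the Min-Plus product, i.e.\ $(A\otimes B)[i,j] = \min_{k\in[n]} (A[i,k] + B[k,j])$, and $A^{\otimes k}$ for its $k$-fold iterate. Given the input graph $G$ with positive weights, I would form $A \in \real^{n\times n}$ with $A[i,i] = 0$, $A[i,j] = w(i,j)$ for $(i,j)\in E$, and $A[i,j] = \infty$ otherwise. Since all weights are positive, $G$ has no negative cycle, so $A^{\otimes k}[i,j]$ equals the length of a shortest walk from $i$ to $j$ using at most $k$ edges, and $A^{\otimes k}[i,j] = d_G(i,j)$ for every $k \ge n-1$. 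Hence it suffices to compute $A^{\otimes 2^{t^\ast}}$ where $t^\ast := \ceil{\log_2(n-1)}$ (the cases $n \le 2$ being trivial). I do this iteratively: set $D_0 := A$, and let $D_{r+1}$ be the output of the assumed $(1+\delta)$-approximate Min-Plus Product algorithm run on input $(D_r, D_r)$, where $\delta > 0$ is a parameter fixed below. This performs $t^\ast = \Oh(\log n)$ approximate products, each in time $T(n,\delta)$.

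Next I would track the accumulated error by induction on $r$, proving
\[ A^{\otimes 2^r}[i,j] \le D_r[i,j] \le (1+\delta)^r\, A^{\otimes 2^r}[i,j] \qquad \text{for all } i,j\in[n]. \]
The case $r = 0$ is immediate. For the inductive step, the lower bound follows from the one-sidedness of the approximate product together with monotonicity of $\otimes$ in each entry: $D_{r+1} \ge D_r \otimes D_r \ge A^{\otimes 2^r} \otimes A^{\otimes 2^r} = A^{\otimes 2^{r+1}}$. For the upper bound, $(D_r \otimes D_r)[i,j] = \min_\ell \big(D_r[i,\ell] + D_r[\ell,j]\big) \le (1+\delta)^r \min_\ell \big(A^{\otimes 2^r}[i,\ell] + A^{\otimes 2^r}[\ell,j]\big) = (1+\delta)^r A^{\otimes 2^{r+1}}[i,j]$, and invoking the approximate product inflates this by at most a further factor $1+\delta$. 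The point to get right is precisely that the relative error is multiplied by $1+\delta$ per squaring, so it stays bounded by $(1+\delta)^{t^\ast}$ overall; this error bookkeeping is the step that needs the most care, but it is routine.

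Finally I would choose $\delta := \eps / (2\ceil{\log_2 n}) = \Theta(\eps/\log n)$, so that $(1+\delta)^{t^\ast} \le e^{\delta t^\ast} \le e^{\eps/2} \le 1 + \eps$ using $\eps < 1$. Then $D_{t^\ast}$ satisfies $d_G(i,j) \le D_{t^\ast}[i,j] \le (1+\eps)\, d_G(i,j)$ for all $i,j$, i.e.\ it is a valid $(1+\eps)$-approximate solution, and it is produced by $\Oh(\log n)$ invocations of the $(1+\Theta(\eps/\log n))$-approximate Min-Plus Product, for a total of $\Oh\big(T(n, \eps/\log n)\cdot \log n\big)$ time. (Entries equal to $\infty$, corresponding to non-edges and to unreachable pairs, are handled consistently throughout, since $\infty + x = \infty$, $\min$ ignores $\infty$, and $(1+\eps)\cdot\infty = \infty$.) I do not expect a genuine obstacle: beyond the error analysis above, the only subtlety is the $\Oh(1)$ slack in the choice of $\delta$ needed to keep the compounded factor below $1+\eps$, and the observation that $T$ evaluated at $\Theta(\eps/\log n)$ is $\Oh(1)$ times $T(n,\eps/\log n)$ for the running-time functions of interest.
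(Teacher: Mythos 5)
Your proof is correct and follows essentially the same route as the paper: form the weighted adjacency matrix with zero-weight self-loops, perform $\Theta(\log n)$ approximate squarings with precision $\delta = \Theta(\eps/\log n)$, and track the compounded error $(1+\delta)^r$ by induction. The paper chooses $\eps' = \ln(1+\eps)/\lceil\log n\rceil$ while you choose $\delta = \eps/(2\lceil\log_2 n\rceil)$, but both yield the same bound and the same running time $\Oh(T(n,\eps/\log n)\cdot\log n)$; your induction is just a slightly more explicit spelling-out of the step the paper calls ``an easy inductive proof.''
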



\begin{proof}
    For the sake of completeness, we repeat the argument of \citet[Theorem 8.1]{zwick-apsp}. 
    Let $A$ be the adjacency matrix of a given edge-weighted directed graph $G$, i.e., if there is an edge
    $(i,j) \in E$ of weight $w(i,j)$ then $A[i,j] = w(i,j)$, and $A[i,j] = \infty$ otherwise. We also add self-loops of weight~0, i.e, we set $A[i,i] = 0$ for all $i \in [n]$. Given $\eps > 0$, we set $\eps' := \ln(1+\eps) / \ceil{\log{n}}$ (where $\ln$ is the natural logarithm and $\log$ is base 2). We will
    perform $\ceil{\log{n}}$ iterations of repeated squaring. In
    each iteration, we execute $(1+\eps')$-Approximate \minprod on the current matrix $A$ with itself, i.e., we square the current matrix $A$. An easy inductive proof shows that after $r$ iterations each entry $A[i,j]$ is bounded from below by the distance from $i$ to $j$ in $G$, and bounded from above by $(1+\eps')^r$ times the length of the shortest $2^r$-hop path from $i$ to $j$. Since any shortest path uses at most $n$ edges, after $\lceil \log n \rceil$ iterations each entry $A[i,j]$ is an approximation of the distance from $i$ to $j$ in $G$, by a multiplicative factor of 
    \[ (1+\eps')^{\lceil \log n \rceil} = \Big(1+ \frac{\ln(1+\eps)}{\ceil{\log{n}}}\Big)^{\lceil \log n \rceil}
    \le 1+\eps.
    \]
    The direct running time of the reduction is $\Oh(n^2\log n)$ and there are
    $\Oh(\log{n})$ calls to $(1+\eps')$-Approximate \minprod with $\eps' = \Theta(\frac{\eps}{\log{n}})$.
\end{proof}


\begin{theorem}
    \label{thm:apx-minprod}
    $(1+\eps)$-Approximate \minprod can be solved in time $\tOh(n^{\frac{\omega + 3}{2}} \eps^{-1})$.
\end{theorem}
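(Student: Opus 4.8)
The plan is to obtain this as a direct consequence of Sum-to-Max-Covering (Theorem~\ref{covering-lemma-intro}) together with the $\tOh(n^{\frac{\omega+3}{2}})$-time algorithm for exact \minmaxprod~\cite{fastest-apbp}: we reduce $(1+\eps)$-Approximate \minprod to a \emph{rectangular} exact \minmaxprod, and then chop the rectangular instance into square ones. Concretely, given $A,B \in \real^{n\times n}$, fix $k\in[n]$ and apply Sum-to-Max-Covering to the column vector $A[\cdot,k]\in\real^n$ and the row vector $B[k,\cdot]\in\real^n$. This yields vectors $A^{(1)}_k,\dots,A^{(s)}_k,B^{(1)}_k,\dots,B^{(s)}_k\in\real^n$ with $s=\Oh((\tfrac1\eps+\log n)\log\tfrac1\eps)=\tOh(\tfrac1\eps)$ such that for all $i,j\in[n]$,
\[ A[i,k]+B[k,j]\ \le\ \min_{\ell\in[s]}\max\{A^{(\ell)}_k[i],\,B^{(\ell)}_k[j]\}\ \le\ (1+\eps)\,(A[i,k]+B[k,j]). \]
Taking the minimum over $k$ on all three sides, and writing $C[i,j]=\min_{k}(A[i,k]+B[k,j])$ for the true \minprod, we get
\[ C[i,j]\ \le\ \min_{k\in[n],\,\ell\in[s]}\max\{A^{(\ell)}_k[i],\,B^{(\ell)}_k[j]\}\ \le\ (1+\eps)\,C[i,j], \]
so it suffices to evaluate the middle quantity for all $i,j$.

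To this end I would assemble matrices $\hat A\in\real^{n\times ns}$ and $\hat B\in\real^{ns\times n}$ whose middle dimension is indexed by pairs $(k,\ell)\in[n]\times[s]$, setting $\hat A[i,(k,\ell)]:=A^{(\ell)}_k[i]$ and $\hat B[(k,\ell),j]:=B^{(\ell)}_k[j]$. Then the middle quantity above is precisely the $(i,j)$ entry of the product of $\hat A$ and $\hat B$ over the $(\min,\max)$-semiring (entries equal to $\infty$ simply never attain the minimum, consistent with the covering guarantee). Since $ns=\tOh(n/\eps)$, I split the middle dimension into $\tOh(1/\eps)$ blocks of width $n$, compute the $n\times n$ \minmaxprod of each block in time $\tOh(n^{\frac{\omega+3}{2}})$ via~\cite{fastest-apbp}, and take the entrywise minimum over blocks. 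This costs $\tOh(n^{\frac{\omega+3}{2}}/\eps)$ in total, which dominates the $\tOh(n^2/\eps)$ spent on Sum-to-Max-Covering (it runs in linear time in its output size) and the bookkeeping.

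The one point that needs care — and the only real subtlety — is the interface between the two machine models: the input to the approximate problem is in floating point, whereas the exact \minmaxprod algorithm expects small integers. Since \minmaxprod depends only on the relative order of its entries, I would, before invoking it, sort all $\tOh(n^2/\eps)$ finite entries appearing across $\hat A$ and $\hat B$ and replace each by its rank (mapping $\infty$ to a sentinel larger than all ranks); this takes near-linear time and produces $\Oh(\log(n/\eps))$-bit integer matrices, exactly matching the required input format, while leaving all $\min$/$\max$ values — hence the approximation guarantee — unchanged. With this in place the running time is $\tOh(n^{\frac{\omega+3}{2}}\eps^{-1})$, proving the theorem; plugging it into Theorem~\ref{thm:red-apsp-minprod} then gives Theorem~\ref{directed-apsp-easy-theorem}.
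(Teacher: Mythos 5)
Your proof is correct and uses the same strategy as the paper's: reduce $(1+\eps)$-Approximate \minprod to $\tOh(1/\eps)$ instances of exact $n\times n$ \minmaxprod via Sum-to-Max-Covering, handling the floating-point/integer mismatch by replacing entries with their ranks. The only difference is bookkeeping: the paper applies the covering once to $A,B$ viewed as vectors in $\real^{n^2}$ and reinterprets the $s=\tOh(1/\eps)$ resulting layers directly as $n\times n$ matrices, whereas you invoke the covering $n$ times (once per column $A[\cdot,k]$ and row $B[k,\cdot]$), assemble rectangular matrices $\hat A,\hat B$ with middle dimension $ns$, and chop that dimension into blocks of width $n$ --- both routes produce the same collection of square \minmaxprod calls and the same $\tOh(n^{(\omega+3)/2}\eps^{-1})$ bound.
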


\begin{proof}
  We use Sum-To-Max-Covering to reduce approximate \minprod to exact \minmaxprod and then use a known algorithm for the latter; the pseudocode is shown in Algorithm~\ref{alg:approxi-minprod-via-maxprod}. 

    Consider input matrices $A,B\in \real^{n\times n}$ on which we want to compute
    $C \in \real^{n\times n}$ with $C[i,j] = \min_{k \in [n]} \{ A[i,k] + B[k,j]\}$
    for all $i,j \in [n]$.  We view the matrices $A,B$ as vectors in
    $\real^{n^2}$, in order to apply Sum-To-Max-Covering
    (Lemma~\ref{covering-lemma}). This yields vectors
    $A^{(1)},\ldots,A^{(s)},\, B^{(1)},\ldots,B^{(s)} \in \real^{n^2}$, which we re-interpret as matrices in $\real^{n\times n}$. We compute the \minmaxprod of
    every layer $A^{(\ell)},B^{(\ell)}$ and return the entry-wise minimum of the
    results, see Algorithm~\ref{alg:approxi-minprod-via-maxprod}. (Note that we
    can replace the entries of $A^{(\ell)},B^{(\ell)}$ by their ranks before
    computing the \minmaxprod and then infer the actual result --- this is
    necessary since our input format for approximate \minprod is floating-point,
    but for \minmaxprod our input format is standard bit representation.)

\begin{algorithm}
    \caption{$\textsc{ApproximateMinProd}(A,B,\eps)$.}
	\label{alg:approxi-minprod-via-maxprod}
\begin{algorithmic}[1]
    \State $\{ (A^{(1)},B^{(1)}),\ldots,(A^{(s)},B^{(s)})\} = \textsc{SumToMaxCovering}(A,B,\eps)$
    \State $C^{(\ell)} := \textsc{MinMaxProd}(A^{(\ell)}, B^{(\ell)})$ for all $\ell \in [s]$
    \State $\tilde C[i,j] := \min_{\ell \in [s]} C^{(\ell)}[i,j]$ for all $i,j \in [n]$
    \State \Return $\tilde C$
\end{algorithmic}
\end{algorithm}

   Let us prove that the output matrix $\tilde C$ is a
    $(1+\eps)$-approximation of $C$. 
    Sum-To-Max-Covering yields that for any $i,j,k$ we have
    \begin{displaymath} A[i,k] + B[k,j]  \le \min_{\ell \in [s]} \max\{A^{(\ell)}[i,k],B^{(\ell)}[k,j]\} \le (1+\eps)(A[i,k] + B[k,j]).
    \end{displaymath}
    In particular, since $C[i,j] = \min_{\ell \in [s]} C^{(\ell)}[i,j] =
    \min_{\ell \in [s]} \min_{k \in [n]}
    \max\{A^{(\ell)}[i,k],B^{(\ell)}[k,j]\},$ and $C[i,j] = \min_{k \in [n]} (A[i,k] + B[k,j])$, we obtain
    \[ C[i,j] \le \tilde C[i,j] \le (1+\eps) C[i,j]. \]
    
    

    Sum-To-Max-Covering runs in time $\Ot(n^2/\eps)$. Computing $s$ times the
    \minmaxprod runs in time $\tOh(s n^{\frac{\omega + 3}{2}})$. We conclude the proof by noting that Sum-To-Max-Covering yields $s =
    \Oh(\frac 1\eps \polylog(n/\eps))$. 
\end{proof}

Combining Theorems~\ref{thm:red-apsp-minprod} and \ref{thm:apx-minprod} yields a $(1+\eps)$-approximation for APSP in time $\tOh(n^{\frac{\omega + 3}{2}} \eps^{-1})$. 

\section{Equivalence of Approximate APSP and Min-Max Product}
\label{equiv}

We next prove our equivalence of approximating APSP, exactly computing the \minmaxprod, and other problems. The theorem is restated here for convenience.

\EquivalenceTheorem*


\begin{proof}
  Equivalence of $(1+\eps)$-Approximate APSP and $(1+\eps)$-Approximate \minprod is essentially known. One direction is given by Theorem~\ref{thm:red-apsp-minprod}. For the other direction, given matrices $A,B$ we build a 3-layered graph, with edge weights between the first two layers as in $A$, edge weights between the last two layers as in $B$, and all edges directed from left to right. Then we observe that the pairwise distances between the first and third layers are in one-to-one correspondence to \minprod on $A,B$, also in an approximate setting. 
  
  Equivalence of \minmaxprod and All-Pairs Bottleneck Path is folklore (see, e.g.,~\cite{fastest-apbp}). Both directions of this equivalence work exactly as for (approximate) \minprod vs.\ APSP.
  
  Our main contribution is the equivalence of $(1+\eps)$-Approximate \minprod and exact \minmaxprod. 
  Observe that if \minmaxprod can be solved in time $T(n)$ then the algorithm from Theorem~\ref{thm:apx-minprod} runs in time $\tOh(T(n)/\eps)$. 
  
  It remains to show a reduction from \minmaxprod to $(1+\eps)$-Approximate \minprod. Fix any constant $\eps > 0$.
  Given matrices $A,B$, denote their \minmaxprod by $C$. Let $r$ be the value of $4
  (1+\eps)^2$ rounded up to the next power of 2, and consider the matrices
  $A',B'$ with $A'[i,j] := r^{A[i,j]}$ and $B'[i,j] := r^{B[i,j]}$. (Recall that
  the input $A,B$ for \minmaxprod is in standard bit representation, so in
  constant time we can compute $r^{A[i,j]}$ in floating-point representation, by
  writing $A[i,j] \cdot \log r$ into the exponent.) Let~$C'$ be the result of
  $(1+\eps)$-Approximate \minprod on $A',B'$. 
  \begin{claim} \label{cla:red-minmaxprod-minprod}
    We have $r^{C[i,j]} \le C'[i,j] \le r^{C[i,j] + 1/2}$ for all $i,j$.
  \end{claim}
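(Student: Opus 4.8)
Fix indices $i,j$ and write $D[i,j] := \min_{k \in [n]} \big(A'[i,k] + B'[k,j]\big) = \min_{k \in [n]} \big(r^{A[i,k]} + r^{B[k,j]}\big)$ for the \emph{exact} \minprod of $A',B'$. By definition of a $(1+\eps)$-approximation we have $D[i,j] \le C'[i,j] \le (1+\eps) D[i,j]$. Hence it suffices to prove the two inequalities
\[
r^{C[i,j]} \;\le\; D[i,j] \qquad\text{and}\qquad D[i,j] \;\le\; 2\, r^{C[i,j]},
\]
since then $C'[i,j] \ge D[i,j] \ge r^{C[i,j]}$ and $C'[i,j] \le (1+\eps) D[i,j] \le 2(1+\eps)\, r^{C[i,j]} \le \sqrt r \cdot r^{C[i,j]} = r^{C[i,j]+1/2}$, where the last step uses $r \ge 4(1+\eps)^2$, i.e. $\sqrt r \ge 2(1+\eps)$, which holds by the choice of $r$.

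\textbf{Lower bound on $D$.} First I would observe that $r \ge 1$, so $t \mapsto r^{t}$ is nondecreasing. For every $k$,
\[
r^{A[i,k]} + r^{B[k,j]} \;\ge\; \max\{r^{A[i,k]}, r^{B[k,j]}\} \;=\; r^{\max\{A[i,k],\,B[k,j]\}} \;\ge\; r^{C[i,j]},
\]
where the last inequality is $\max\{A[i,k],B[k,j]\} \ge \min_{k'} \max\{A[i,k'],B[k',j]\} = C[i,j]$ together with monotonicity of $r^{(\cdot)}$. Taking the minimum over $k$ gives $D[i,j] \ge r^{C[i,j]}$.

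\textbf{Upper bound on $D$.} Let $k^\star$ be an index attaining the minimum defining $C[i,j]$, so $\max\{A[i,k^\star], B[k^\star,j]\} = C[i,j]$. Then both $r^{A[i,k^\star]}$ and $r^{B[k^\star,j]}$ are at most $r^{C[i,j]}$, hence
\[
D[i,j] \;\le\; r^{A[i,k^\star]} + r^{B[k^\star,j]} \;\le\; 2\, r^{C[i,j]}.
\]
Combining the two bounds as above finishes the claim.

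\textbf{Main obstacle.} The inequalities themselves are routine; the only points that need care are bookkeeping ones. I would double-check that $r$, defined as $4(1+\eps)^2$ rounded up to a power of $2$, indeed satisfies $\sqrt r \ge 2(1+\eps)$ (it does, since rounding up only increases $r$), and that the reduction stays within the stated model: because the input entries $A[i,k], B[k,j]$ are $\Oh(\log n)$-bit nonnegative integers (ranks) and $\log_2 r = \Oh(\log\frac1\eps) = \Oh(1)$ is an integer, the number $r^{A[i,k]}$ is represented in floating point simply by placing $A[i,k]\cdot\log_2 r$ into an $\Oh(\log n)$-bit exponent with an $\Oh(1)$-bit mantissa, and all such entries are $\ge 1$ as required for $\real$ (with the convention $r^{\infty}=\infty$ for absent edges). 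These are exactly the places where the choice of floating-point input format for approximate problems is used.
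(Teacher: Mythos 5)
Your proof is correct and takes essentially the same approach as the paper: the paper bounds $C'[i,j]$ directly via the defining inequalities $\min_k(A'[i,k]+B'[k,j]) \le C'[i,j] \le (1+\eps)\min_k(A'[i,k]+B'[k,j])$, using $\max\{x,y\} \le x+y \le 2\max\{x,y\}$ applied to powers of $r$ at the optimal index $k^\star$, whereas you introduce the exact product $D[i,j]$ as intermediate notation — a purely presentational difference. Both proofs use the same two inequalities and the same final step $2(1+\eps) \le \sqrt r$.
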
  
  Using this claim, we can infer $C$ from $C'$ by computing $C[i,j] = \lfloor \log_r C'[i,j] \rfloor$ (i.e., we simply read the most significant bits of the exponent of the floating-point number $C'[i,j]$). 
  If $(1+\eps)$-Approximate \minprod  can be solved in time $T(n)$ (recall that $\eps$ is fixed), then this yields an algorithm for \minmaxprod running in time~$\tOh(T(n))$.
  \end{proof}
  \begin{proof}[Proof of Claim~\ref{cla:red-minmaxprod-minprod}]
    We will use $\min_{k}(A'[i,k]+B'[k,j]) \le C'[i,j] \le
    (1+\eps)\min_{k}(A'[i,k]+B'[k,j])$ for all $i,j \in [n]$.
    For any $i,j$ there exists $k$ with $C[i,j] = \max\{A[i,k],B[k,j]\}$. Hence,
  \begin{displaymath}
    C'[i,j] \le (1+\eps)(A'[i,k] + B'[k,j]) = (1+\eps)(r^{A[i,k]} + r^{B[k,j]})
    \le 2(1+\eps)r^{\max\{A[i,k],B[k,j]\}} = 2(1+\eps) r^{C[i,j]},
  \end{displaymath}
  and by $r \ge 4 (1+\eps)^2$ we obtain $C'[i,j] \le r^{C[i,j]+1/2}$.
  Moreover, for any $i,j$ there exists $k$ with $C'[i,j] \ge A'[i,k]+B'[k,j]$. We thus obtain
  \begin{displaymath} 
  C'[i,j] \ge A'[i,k] + B'[k,j] = r^{A[i,k]} + r^{B[k,j]}
  \ge r^{\max\{A[i,k], B[k,j]\}} \ge r^{C[i,j]}. \qedhere 
  \end{displaymath}
  \end{proof}

We remark that for scaling algorithms this proof shows an equivalence of 
the $\Ot(W n^\omega)$-time exact algorithm for \minmaxprod and the $\Ot(\frac{n^\omega}{\poly(\eps)} \log{W})$-time approximation scheme for \minprod.

\section{Sum-To-Max-Covering}
\label{sec:covering-lemma}

In this section, we prove the main technical result of this paper, which we slightly reformulate here.

\begin{theorem}[Sum-to-Max-Covering, Reformulated]
    \label{covering-lemma}
    Given vectors $A,B \in \real^n$ and a parameter $\eps > 0$,
     there are vectors $A^{(1)},\ldots,A^{(s)}, B^{(1)},\ldots,B^{(s)} \in
     \real^n$ with $s = \Oh(\frac{1}{\eps}\log{\frac{1}{\eps}} + \log{n}\log{\frac{1}{\eps}})$ and:
    \begin{enumerate}[align=left, font=\normalfont, label=(\roman*)]
    	 \item for all $i,j \in [n]$ and all $\ell \in [s]$: \[\max\{A^{(\ell)}[i],B^{(\ell)}[j]\} \ge A[i] + B[j],\]
    	 \item for all $i,j \in [n]$ there exists $\ell \in [s]$: \[\max\{A^{(\ell)}[i],B^{(\ell)}[j]\} \le (1+\eps)(A[i] + B[j]).\]
    \end{enumerate}
    We can compute such vectors $A^{(1)},\ldots,A^{(s)}, B^{(1)},\ldots,B^{(s)}$ in time $\Oh(\frac{n}{\eps}\log{\frac{1}{\eps}} +
    n\log{n}\log{\frac{1}{\eps}})$.
\end{theorem}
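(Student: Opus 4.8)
The plan is to split every pair $(i,j)\in[n]^2$ into \emph{close} pairs, with $A[i]/B[j]\in[\eps,1/\eps]$, and \emph{distant} pairs, with $A[i]/B[j]\notin[\eps,1/\eps]$, and to construct two independent families of layer-pairs: a \emph{Close Covering} that achieves~(ii) on every close pair, and a \emph{Distant Covering} that achieves~(ii) on every distant pair, with both families satisfying~(i) on \emph{all} pairs. Since every pair is close or distant, the union of the two families then has both properties, with $s$ the sum of the two family sizes. This is exactly the structure announced in the introduction as Lemmas~\ref{close-covering} and~\ref{distant-covering-lemma}; the work is in realising each family with few layers.

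\emph{Close Covering.} For each finite entry set $q_i:=\lfloor\log_{1+\eps}A[i]\rfloor$ (all entries are $\ge1$, so this is well-defined and nonnegative), and fix a modulus $K=\Theta(\frac1\eps\log\frac1\eps)$ large enough that $(1+\eps)^K\ge1/\eps^{5}$ and that for every $j$ the interval of exponents $q$ attained by the close partners of $j$ has length strictly below $K$ (both requirements are $\Theta(\frac1\eps\log\frac1\eps)$). Create one layer per residue $\ell\in\{0,\dots,K-1\}$: put $A^{(\ell)}[i]:=(1+\eps)A[i]$ if $q_i\equiv\ell\pmod K$ and $A^{(\ell)}[i]:=\infty$ otherwise; and put $B^{(\ell)}[j]:=(1+\eps)^{q+1}+B[j]$, where $q$ is the unique exponent $\equiv\ell\pmod K$ lying in $j$'s close-partner interval if such a $q$ exists, and $B^{(\ell)}[j]:=\infty$ otherwise. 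The choice of $K$ makes this $q$ unique and forces any two surviving $A$-values in one layer to be either within a factor $1+\eps$ or separated by a factor $\ge1/\eps^{5}$. For a close pair $(i,j)$ take $\ell:=q_i\bmod K$, so that $q=q_i$; then $A[i]+B[j]\le(1+\eps)^{q_i+1}+B[j]=B^{(\ell)}[j]\le(1+\eps)A[i]+B[j]$ and $(1+\eps)A[i]\le(1+\eps)(A[i]+B[j])$, which yields both~(i) and~(ii). For~(i) on any other surviving pair $(i,j)$ of layer $\ell$: if $q_i$ equals $j$'s designated exponent $q$, then $B^{(\ell)}[j]=(1+\eps)^{q_i+1}+B[j]\ge A[i]+B[j]$; otherwise the factor-$1/\eps^{5}$ separation gives either $B[j]\le\eps A[i]$, so that $A^{(\ell)}[i]=(1+\eps)A[i]\ge A[i]+B[j]$, or $(1+\eps)^{q+1}\ge A[i]$, so that $B^{(\ell)}[j]\ge A[i]+B[j]$; pairs with $A^{(\ell)}[i]=\infty$ or $B^{(\ell)}[j]=\infty$ are trivial. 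This family has $K=\Oh(\frac1\eps\log\frac1\eps)$ layers.

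\emph{Distant Covering.} By symmetry it suffices to cover the ``$A$-heavy'' distant pairs $B[j]\le\eps A[i]$; the ``$B$-heavy'' pairs $A[i]\le\eps B[j]$ are handled by exchanging the roles of $A$ and $B$. This reduces to a purely combinatorial problem: cover the region $R:=\{(i,j):B[j]\le\eps A[i]\}$ by axis-aligned boxes $I\times J\subseteq R$. On each such box, setting $A^{(\ell)}[i]:=(1+\eps)A[i]$ for $i\in I$, $B^{(\ell)}[j]:=B[j]$ for $j\in J$, and $\infty$ everywhere else gives~(i) automatically -- every surviving pair has $B[j]\le\eps A[i]$, hence $\max\{A^{(\ell)}[i],B^{(\ell)}[j]\}=(1+\eps)A[i]\ge A[i]+B[j]$ -- and~(ii) as well. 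I would build the boxes from the sorted orders of the entries of $A$ and of $B$: recursively halve the sorted order of $A$ into $\Oh(\log n)$ levels of contiguous chunks $I$, and turn each chunk into the box $I\times\{j:B[j]\le\eps\min_{i\in I}A[i]\}\subseteq R$; pairs straddling a chunk boundary are caught by also adding $\Oh(\log\frac1\eps)$ shifted copies of a ``forbidden margin'' of multiplicative width $\poly(1/\eps)$ around each boundary -- the pairs inside such a margin are close and are already covered by the first family. This produces $\Oh(\log n\log\frac1\eps)$ layers. I expect this last part -- the boundary bookkeeping that guarantees every distant pair enters some box while no box ever leaves $R$ -- to be the main obstacle of the proof.

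Finally I would collect the bounds: the two families together use $s=\Oh(\frac1\eps\log\frac1\eps+\log n\log\frac1\eps)$ layer-pairs, matching the statement. Sorting the entries of $A$ and $B$ costs $\Oh(n\log n)$, computing all $q_i$ and the per-$j$ close intervals costs $\Oh(n)$ plus $\Oh(n)$ per layer, and writing out the $s$ layers costs $\Oh(sn)$; since $sn=\Oh(\frac n\eps\log\frac1\eps+n\log n\log\frac1\eps)$, this is the claimed running time, which is also linear in the output size.
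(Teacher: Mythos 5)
Your high-level plan matches the paper's proof exactly (close covering for pairs with $A[i]/B[j]\in[\eps,1/\eps]$ plus distant covering for the rest, combined by taking the union of layers), and your close-covering construction is correct — in fact it is a mild improvement on the paper's, because by using $A^{(\ell)}[i]=(1+\eps)A[i]$ rather than rounding up to a power of $1+\eps$ you get property~(i) without loss, whereas the paper's close covering (Lemma~\ref{close-covering}) only achieves a $(1-\eps)$-relaxed version of~(i) and has to rescale afterwards (Lemma~\ref{covering-lemma-variant} to Lemma~\ref{covering-lemma}).

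However, the distant covering is where your plan stops short, and you say so yourself (``the main obstacle''). The gap is concrete: you recursively halve the sorted order of $A$ into chunks $I$, form boxes $I\times J$ with $J=\{j: B[j]\le\eps\min_I A[i]\}$, and then want to pack all chunks at a given level into one layer. But if you do so, you create surviving pairs outside $R$: take $i$ in a low chunk $I_1$ and $j$ with $B[j]$ large enough to lie in $J_k$ for a higher chunk $I_k$. If $\eps A[i]<B[j]\le A[i]$ — a genuinely close pair — then $\max\{(1+\eps)A[i],\,B[j]\}=(1+\eps)A[i]< A[i]+B[j]$, so property~(i) fails for this layer. These bad pairs are not confined to any boundary margin, so shifted forbidden margins around chunk boundaries do not remove them; you would have to suppress entries of $B$ (or $A$) in a chunk-dependent way. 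The paper resolves exactly this by working on the \emph{merged} sorted set $Z$ of $A$- and $B$-values, discarding chunks of small span in \textsc{SplitChunks}, and then \textsc{SeparateChunks}, which keeps only every other pair of adjacent chunks in one sub-list (so that any two kept chunks in the same sub-list that are not adjacent are automatically $\frac1\eps$-separated, Claim~\ref{consecutive-odd-intervals-are-distant}); only then do the $\Oh(\log\frac1\eps)$ shifted margins of \textsc{ShiftedTransitions} handle the one remaining adjacent boundary. Your sketch contains the shifted margins but omits the separation step that makes it sound, so as written the distant covering does not establish property~(i).
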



We split the construction into two parts, covering the pairs $i,j$ with $\frac{A[i]}{B[j]} \in [\eps,1/\eps]$ (Close Covering Lemma, Section~\ref{sec:close-covering}) and covering the remaining pairs (Distant Covering Lemma, Section~\ref{distant-covering}). We show how to combine both cases in Section~\ref{combining-covering}.

\subsection{Close Covering}
\label{sec:close-covering}

We first want to cover all pairs $i,j$ with $\frac{A[i]}{B[j]} \in [\eps,1/\eps]$. 
To get an intuition, let $d \in \mathbb{Z}$ and consider only the entries $A[i]$ in the range $[(1+\eps)^{d-1}, (1+\eps)^d)$. Remove all other entries of $A$ by setting them to $\infty$, obtaining a vector $A'$. Since we consider the close case, we are only interested in entries $B[j]$ that differ by at most a factor $1/\eps$ from $A[i]$, so consider the entries $B[j]$ in the range $[\eps (1+\eps)^{d-1}, \frac 1\eps (1+\eps)^d)$. Add $(1+\eps)^d$ to all such entries $B[j]$ and remove all other entries of $B$ by setting them to $\infty$, obtaining a vector $B'$.
Then for the considered entries we have $\max\{A'[i],B'[j]\} = B'[j] = B[j] + (1+\eps)^d$, which is between $A[i]+B[j]$ and $(1+\eps)(A[i] + B[j])$. 
This covers all considered pairs in the sense of Max-to-Sum-Covering. 

However, naively we would need to repeat this construction for too many values of $d$. 
The main observation of our construction is that we can perform this construction in parallel for all values $d \in D = \{s, 2s, 3s, \ldots\}$. That is, we only remove an entry of $A$ if it is irrelevant for \emph{all} $d \in D$, and similarly for the entries of $B$. For a sufficiently large integer $s = \Theta(\frac 1\eps \log \frac 1\eps)$, it turns out that the considered entries for different $d$'s do not interfere. Performing this construction for all shifts $D+1, D+2, \ldots, D+s$ covers all close pairs.
See Figure~\ref{fig:apx-minconv-q} for an illustration.


%

\begin{figure}[ht!]
    \centering
   \begin{tikzpicture}[]

       \def\widthrectangle{12.0}
       \def\heightrectangle{0.5}
       \def\numberoflines{23}
       \def\highlight{0.2} 
       \def\highlightlength{5}
       \def\bluecolor{black!30!cyan!20}
       \def\greencolor{red}

       \def\unit{\widthrectangle/\numberoflines}

       \newcommand{\drawhighlight}[1]{
           \fill[fill=\bluecolor] (#1*\unit,-\highlight) rectangle (#1*\unit + \highlightlength*\unit,\heightrectangle+\highlight);
       }
       \newcommand{\drawgreen}[1]{
           \fill[fill=\greencolor] (#1*\unit,0) rectangle (#1*\unit + \unit,\heightrectangle);
       }

       \newcommand{\drawdashed}[3]{
           \node (end) at (#1,#2 + #3) {};
           \draw[dashed] (#1,#2) -- (end);
       }
       \newcommand{\lastlabel}[2]{
           \node at (end) [#2] {#1};
       }

       \foreach \x in {2,9,16}
             \drawhighlight{\x};
       \foreach \x in {4,11,18}
            \drawgreen{\x};

       \foreach \x in {0,...,\numberoflines}
         \draw[opacity=0.5] (\x*\unit,0) -- (\x*\unit, \heightrectangle);
       \drawdashed{\unit*5}{0}{-1};
       \lastlabel{$(1+\eps)^{d-s}$}{below};
       \drawdashed{\unit*12}{0}{-1};
       \lastlabel{$(1+\eps)^{d}$}{below};
       \drawdashed{\unit*19}{0}{-1};
       \lastlabel{$(1+\eps)^{d+s}$}{below};

       \node (end_rectangle) at (\widthrectangle,\heightrectangle) {};

       \draw (0,0) rectangle (end_rectangle);
       \drawdashed{\unit*11.5}{\heightrectangle}{1};
       \lastlabel{$A[i]$}{above};
       \drawdashed{\unit*13.5}{\heightrectangle}{1};
       \lastlabel{$B[j]$}{above};


       \draw (0,0) -- (-\unit,0);
       \draw [dashed] (-\unit,0) -- (-2*\unit,0);
       \draw (0,\heightrectangle) -- (-1*\unit,\heightrectangle);
       \draw [dashed] (-\unit,\heightrectangle) -- (-2*\unit,\heightrectangle);


       \draw (\widthrectangle,0) -- (\widthrectangle+\unit,0);
       \draw [dashed] (\widthrectangle+\unit,0) -- (\widthrectangle+2*\unit,0);
       \draw (\widthrectangle,\heightrectangle) -- (\widthrectangle+\unit,\heightrectangle);
       \draw [dashed] (\widthrectangle+\unit,\heightrectangle) -- (\widthrectangle + 2*\unit,\heightrectangle);

   \end{tikzpicture}
    \caption{An illustration for Algorithm~\ref{alg:close-covering}. Shown is the positive real line in log-scale.
    Entries of $A$ that lie outside the red/dark-shaded areas are set to $\infty$. Entries of $B$ that lie outside the blue/light-shaded areas are set to $\infty$. We set $A^{(\ell)}[i] := (1+\eps)^d$ and $B^{(\ell)}[j] := B[j] + (1+\eps)^d$. This guarantees the approximation $(1-\eps)(A[i] + B[j]) \le \max\{A^{(\ell)}[i],B^{(\ell)}[j]\} \le (1+\eps)(A[i] + B[j])$ for close pairs. 
    Numbers in non-overlapping parts differ by so much that their sum and their max are equal up to a factor $1+\eps$. This ensures that they do not interfere with the close pairs.
    }
   \label{fig:apx-minconv-q}
\end{figure}


\begin{lemma}[Close Covering]
    \label{close-covering}
    Given vectors $A,B \in \real^n$ and a parameter $\eps >
    0$, there are vectors $A^{(1)},\ldots,A^{(s)}, B^{(1)},\ldots,B^{(s)} \in \real^n$ with $s =
    \Oh(\frac{1}{\eps}\log \frac{1}{\eps})$ such that:
    \begin{enumerate}[align=left, font=\normalfont, label=(\roman*)]
        \item for all $i,j \in [n]$ and all $\ell \in [s]\colon \; \max\{A^{(\ell)}[i],B^{(\ell)}[j]\} \ge (1-\eps)(A[i] + B[j])$, and
        \item for all $i,j \in [n]$ if $\frac{A[i]}{B[j]} \in [\eps,1/\eps]$ then $\exists \ell \in [s]\colon \; \max\{A^{(\ell)}[i],B^{(\ell)}[j]\} \le (1+\eps)(A[i] + B[j])$.
    \end{enumerate}
    We can compute such vectors $A^{(1)},\ldots,A^{(s)}, B^{(1)},\ldots,B^{(s)}$ in time $\Oh(\frac{n}{\eps} \log \frac{1}{\eps})$.
\end{lemma}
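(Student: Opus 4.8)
The plan is to implement exactly the construction sketched in the paragraph before the lemma: a parallel version of the naive ``one value of $d$ at a time'' scheme. Fix a sufficiently large integer $s = \Theta(\frac{1}{\eps}\log\frac1\eps)$, to be pinned down later. For each shift $\ell \in [s]$ define the index set $D_\ell = \{\, d \in \mathbb{Z} : d \equiv \ell \pmod s \,\}$, and build $A^{(\ell)}, B^{(\ell)}$ as follows. For an entry $A[i]$, let $d(i)$ be the unique integer with $A[i] \in [(1+\eps)^{d(i)-1}, (1+\eps)^{d(i)})$; if $d(i) \in D_\ell$ then set $A^{(\ell)}[i] := (1+\eps)^{d(i)}$, else $A^{(\ell)}[i] := \infty$. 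For an entry $B[j]$: if there exists $d \in D_\ell$ with $B[j] \in [\eps(1+\eps)^{d-1}, \frac1\eps (1+\eps)^{d})$ — and one checks this $d$ is \emph{unique} when $s$ is large enough — then set $B^{(\ell)}[j] := B[j] + (1+\eps)^{d}$, else $B^{(\ell)}[j] := \infty$. Both vectors are computed in $\Oh(n)$ time per $\ell$ (the exponent $d(i)$ is read off the floating-point representation of $A[i]$, and locating the relevant $d \in D_\ell$ for $B[j]$ is a constant-time arithmetic computation), so the total time is $\Oh(\frac n\eps \log\frac1\eps)$ as claimed.

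First I would verify property (ii), the completeness/upper-bound direction, which is the ``designed'' case. Take $i,j$ with $A[i]/B[j] \in [\eps, 1/\eps]$. Let $d = d(i)$ and let $\ell \in [s]$ be the residue of $d$ mod $s$. Then $A^{(\ell)}[i] = (1+\eps)^d \le (1+\eps) A[i] \le (1+\eps)(A[i]+B[j])$. For the $B$-side, from $A[i]/B[j] \le 1/\eps$ we get $B[j] \ge \eps A[i] \ge \eps (1+\eps)^{d-1}$, and from $A[i]/B[j] \ge \eps$ we get $B[j] \le A[i]/\eps < (1+\eps)^d/\eps$, so $B[j]$ lies in the window associated with this $d \in D_\ell$; hence $B^{(\ell)}[j] = B[j] + (1+\eps)^d \le B[j] + (1+\eps)A[i] \le (1+\eps)(A[i]+B[j])$. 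Taking the max gives (ii). (One must also check the window for $B[j]$ selected for $\ell$ is indeed the one attached to $d$, not to some other $d' \in D_\ell$ — this is precisely the non-interference point below.)

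The main obstacle is property (i), the soundness/lower-bound direction, which must hold for \emph{every} pair $i,j$ and every $\ell$, including pairs that are far from being ``close'' and entries that were perturbed for some other value of $d$. The two cases that matter are: (a) $A^{(\ell)}[i] = \infty$ or $B^{(\ell)}[j] = \infty$ — then the max is $\infty \ge (1-\eps)(A[i]+B[j])$ trivially; (b) both are finite, say $A^{(\ell)}[i] = (1+\eps)^d$ with $d \in D_\ell$, and $B^{(\ell)}[j] = B[j] + (1+\eps)^{d'}$ with $d' \in D_\ell$. If $d = d'$ this is the ``good'' situation and $\max \ge B^{(\ell)}[j] \ge B[j] + A[i] \ge (1-\eps)(A[i]+B[j])$ (in fact $\ge A[i]+B[j]$, using $(1+\eps)^{d} \ge A[i]$). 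The genuinely delicate case is $d \ne d'$, i.e.\ $A[i]$ got bucketed to one multiple of $s$ while $B[j]$'s window belongs to a \emph{different} multiple; here one of $A[i], B[j]$ dominates the other by a factor at least $(1+\eps)^{s-1}/\frac1\eps = \poly(1/\eps)^{\Theta(1)}$ (this is the quantitative reason for choosing $s = \Theta(\frac1\eps\log\frac1\eps)$: it forces the gap between consecutive windows of $D_\ell$ to exceed the $1/\eps$-width of each $B$-window). I would argue: if $d > d'$, then $A[i] \ge (1+\eps)^{d-1}$ while $B[j] < \frac1\eps(1+\eps)^{d'} \le \frac1\eps (1+\eps)^{d-s}$, so $B[j]/A[i] < \frac1\eps (1+\eps)^{1-s} \le \eps$ for suitable $s$; hence $A^{(\ell)}[i] = (1+\eps)^d \ge A[i] \ge (1-\eps)^{-1}\cdot\frac{A[i]}{1} \ge (1-\eps)(A[i]+B[j])$ since $B[j] \le \eps A[i]$. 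Symmetrically if $d < d'$, then $B^{(\ell)}[j] \ge B[j] \ge (1+\eps)^{d'-1} \ge (1+\eps)^{d+s-1}$ while $A[i] < (1+\eps)^d$, giving $A[i] \le \eps B[j]$ and $\max \ge B^{(\ell)}[j] \ge B[j] \ge (1-\eps)(A[i]+B[j])$. The one remaining bookkeeping point is to confirm that when $B[j]$'s window is nonempty for $\ell$, the witnessing $d \in D_\ell$ is unique, so $B^{(\ell)}[j]$ is well-defined; this again follows because consecutive windows of $D_\ell$ are separated multiplicatively by more than $(1/\eps)^2$ once $(1+\eps)^{s} > (1/\eps)^{3}$, say, which holds for $s = \Theta(\frac1\eps\log\frac1\eps)$. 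Collecting these cases establishes (i) and (ii), and the time bound is immediate, completing the proof.
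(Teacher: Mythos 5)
Your construction is identical to the paper's (same choice of $s$, same arithmetic progressions $D_\ell$, same definitions of $A^{(\ell)}$ and $B^{(\ell)}$), and the overall structure of the argument matches: (ii) by explicit witness $\ell$, (i) by case analysis on whether the two entries are $\infty$ and on the relation between $d$ and $d'$. Two small slips in the case analysis for (i), neither fatal but worth flagging.

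First, in the subcase $d < d'$ you assert $B[j] \ge (1+\eps)^{d'-1}$. That is not what the definition of $B^{(\ell)}$ gives you: $B^{(\ell)}[j] \ne \infty$ only forces $B[j] \ge \eps(1+\eps)^{d'-1}$, a factor $\eps$ weaker. Your conclusion $A[i] \le \eps B[j]$ still holds, but you must use the separation $s \ge 1 + \log_{1+\eps}(1/\eps^2)$ (i.e.\ $(1+\eps)^{s-1} \ge 1/\eps^2$, exactly as needed for uniqueness of the witnessing $d$), whereas your stated inequality would falsely suggest $(1+\eps)^{s-1} \ge 1/\eps$ suffices. More to the point, this subcase has a much cleaner proof that avoids the separation entirely: since $d \le d'$ you have $A[i] \le (1+\eps)^d \le (1+\eps)^{d'}$, so
\[
B^{(\ell)}[j] = B[j] + (1+\eps)^{d'} \ge B[j] + A[i],
\]
which already beats the required $(1-\eps)(A[i]+B[j])$. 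That is the paper's Case~1 (it merges your $d = d'$ and $d < d'$ into a single $d \le d'$).

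Second, in the subcase $d > d'$ your displayed chain $A[i] \ge (1-\eps)^{-1}\cdot\frac{A[i]}{1} \ge (1-\eps)(A[i]+B[j])$ is garbled — the first inequality is false since $(1-\eps)^{-1} > 1$. What you want (and clearly intend) is the direct computation: from $B[j] \le \eps A[i]$ one gets $(1-\eps)(A[i]+B[j]) \le (1-\eps)(1+\eps)A[i] \le A[i] \le A^{(\ell)}[i]$. With these two points cleaned up, your proof coincides with the paper's.
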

\begin{proof}
We choose $s = \Theta(\frac 1\eps \log \frac 1\eps)$ with sufficiently large hidden constant, and for any $\ell \in \{1,\ldots,s\}$ construct vectors $A^{(\ell)}, B^{(\ell)}$ as described in Algorithm~\ref{alg:close-covering}.

\begin{algorithm}
    \caption{$\textsc{CloseCovering}(A,B,\eps)$.}
	\label{alg:close-covering}
\begin{algorithmic}[1]
    \State Set $s := 1 + \lceil 2 \log_{1+\eps} (1/\eps)\rceil$ 
    \Comment{Note that $s = \Theta(\frac 1\eps \log \frac 1\eps)$.}
    \For {$\ell=1,\ldots,s$}
        \Comment{Take care of $A[i] \approx (1+\eps)^{k s + \ell}$ for any $k$.}
        \State $D_\ell := \{ k s + \ell \; | \; k \in \mathbb{Z} \}$
        \State $A^{(\ell)}[i] := \begin{cases} (1+\eps)^d & \text{if} \; A[i] \in \big[(1+\eps)^{d-1}, (1+\eps)^d\big) \; \text{for some} \; d \in D_\ell \\
                                        \infty & \text{otherwise}
        \end{cases}$
        \State $B^{(\ell)}[j] := \begin{cases}
               B[j] + (1+\eps)^d & \text{if} \; B[j] \in \big[ \eps (1+\eps)^{d-1}, \frac 1\eps (1+\eps)^d \big)\; \text{for some} \; d \in D_\ell \\
               \infty & \text{otherwise}
           \end{cases}$
    \EndFor

    \State \Return $\{ (A^{(1)},B^{(1)}),\ldots,(A^{(s)},B^{(s)})\}$
\end{algorithmic}
\end{algorithm}

Note that the condition for $B[j]$ is well-defined in the sense that it applies for at most one $d \in D_\ell$.  To see this, since two consecutive values in $D_\ell$ differ by $s$, we only need to show the inequality $\frac 1\eps (1+\eps)^d \le \eps (1+\eps)^{d + s - 1}$, which holds since $s \ge 1 + \log_{1+\eps}(1/\eps^2)$. The same can be immediately seen to hold for $A[i]$.

The size and time bounds are immediate. It remains to prove correctness.

For property (ii), consider any $i,j$ with $\frac{A[i]}{B[j]} \in [\eps,1/\eps]$. Note that there is a unique $\ell \in \{1,\ldots,s\}$ such that $A^{(\ell)}[i] \ne \infty$. For this $\ell$, we have $A^{(\ell)}[i] = (1+\eps)^d$ with $(1+\eps)^{d-1} \le A[i] < (1+\eps)^d$, for some $d \in D_\ell$. 
By the assumption $\frac{A[i]}{B[j]} \in [\eps,1/\eps]$, we obtain $\eps (1+\eps)^{d-1} \le B[j] \le \frac 1\eps (1+\eps)^d$, and thus $B^{(\ell)}[j]$ is not set to $\infty$, and we have $B^{(\ell)}[j] = B[j] + (1+\eps)^d$.
We conclude by observing that $\max\{A^{(\ell)}[i], B^{(\ell)}[j]\} = B^{(\ell)}[j] = B[j] + (1+\eps)^d \le (1+\eps)(B[j] + A[j])$. 

For property (i), consider any $i,j$ and $\ell$. 
If one of $A^{(\ell)}[i], B^{(\ell)}[j]$ is set to $\infty$, then the property holds trivially. Otherwise, we have $A^{(\ell)}[i] = (1+\eps)^d$ for some $d \in D_\ell$ and $B^{(\ell)}[j] = B[j] + (1+\eps)^{d'}$ for some $d' \in D_\ell$. We consider two cases.

\emph{Case 1: $d \le d'$.} Then $A[i] \le (1+\eps)^d \le (1+\eps)^{d'}$, and thus $B^{(\ell)}[j] = B[j] + (1+\eps)^{d'} \ge A[i] + B[j]$. 

\emph{Case 2: $d > d'$.} Then by definition of $D_\ell$ we have $d \ge d' + s$. We bound 
\begin{align} 
  B[j] \le \tfrac 1\eps (1+\eps)^{d'} \le \tfrac 1\eps (1+\eps)^{d-s} \le \tfrac 1\eps (1+\eps)^{1-s} A[i] \le \eps A[i], \label{eq:casetwo}
\end{align}
where the last inequality uses $s \ge 1 + \log_{1+\eps}(1/\eps^2)$.
This yields 
\[ A^{(\ell)}[i] \ge A[i] \stackrel{(\ref{eq:casetwo})}{\ge} (1-\eps) A[i] + B[j] \ge (1-\eps)(A[i] + B[j]). \]
In both cases we have $\max\{ A^{(\ell)}[i], B^{(\ell)}[j] \} \ge (1-\eps)(A[i] + B[j])$, which proves property (i). 
\end{proof}

\subsection{Distant Covering}
\label{distant-covering}

We now want to cover all pairs $i,j$ with $\frac{A[i]}{B[j]} \not\in [\eps,1/\eps]$. 
Our solution for this case is similar to the well-known \emph{Well-Separated Pair
Decomposition} (see~\cite{wspd,euclidean-spanners}), which we use in a one-dimensional setting and in log-scale. The
main difference is that we unite sufficiently distant pairs of the decomposition that lie on the same level.

Our constructed vectors $A^{(\ell)}$ will correspond to subsets of the entries of $A$, i.e., we have $A^{(\ell)}[i] \in \{A[i], \infty\}$, and similarly for~$B$. 
For this reason, we switch to subset notation for the majority of this section, and then return to our usual notation of vectors $A^{(\ell)},B^{(\ell)}$ in Corollary~\ref{distant-covering-cor}.

For $x,y \in \real$, we define their \emph{distance} as $d(x,y) := \max\{ \frac xy, \frac yx\}$ if $x,y < \infty$ and $d(x,\infty)=d(\infty, x) = \infty$ otherwise. For sets $X,Y \subset \real$, we define their \emph{distance} as 
$d(X,Y) := \min_{x \in X, y \in Y} d(x,y)$. 

%

\begin{lemma}[Distant Covering, Set Variant]
    \label{distant-covering-lemma}
    Given a set $Z \subset \real$ of size $n$ and a parameter $\eps > 0$, 
    there are sets $X_1,\ldots,X_s \subseteq Z$ and $Y_1,\ldots,Y_s
    \subseteq Z$ with $s = \Oh(\log{n}\log{\frac{1}{\eps}})$ such that:
    \begin{enumerate}[align=left, font=\normalfont, label=(\roman*)]
        \item for any $\ell \in [s]$ we have $d(X_\ell,Y_\ell) > \frac 1\eps$, and 
        \item for any $x,y \in Z$ with $d(x,y)\ge \frac 2\eps$ 
        and $x < y$ there is $\ell \in [s]$ such that $x \in X_\ell$ and $y\in Y_\ell$.
    \end{enumerate}
    We can compute sets $X_1,\ldots,X_s$ and $Y_1,\ldots,Y_s$ satisfying (1) and (2) time $\Oh(n\log{n}\log{\frac{1}{\eps}})$.
\end{lemma}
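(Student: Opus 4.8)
The plan is to pass to logarithmic scale---where the multiplicative distance $d$ turns into an additive one---and to build a balanced binary recursion tree on $Z$. Its $\Oh(\log n)$ levels will produce the first factor of $s$, and the interference between sibling subtrees at each level will be dissolved by $\Oh(\log\frac1\eps)$ \emph{shifts}, producing the second factor.

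First I would dispose of the degenerate element: if $\infty\in Z$, set it aside and output the single pair $(Z\setminus\{\infty\},\ \{\infty\})$, which satisfies~(1) since $d(\cdot,\infty)=\infty$ and covers every pair $(x,\infty)$. For the remaining finite part of $Z$, put $t_z:=\log_2 z$, so that $\log_2 d(x,y)=|t_x-t_y|$. Writing $\mu:=\log_2\frac1\eps$, condition~(2) asks us to cover every pair at $t$-distance $\ge\mu+1$, while~(1) forbids placing two points at $t$-distance $\le\mu$ into a common pair $(X_\ell,Y_\ell)$; the unit gap between these thresholds (equivalently, the factor $2$ between $\frac2\eps$ and $\frac1\eps$) is exactly the room the construction needs. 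Now sort $Z$ and build a balanced binary tree by recursively splitting each point set at its median; it has $\Oh(\log n)$ levels, and at each level the nodes partition $Z$ into contiguous \emph{chunks}. Each internal node $v$ carries a \emph{cut} $c_v$ strictly between its left child's rightmost point and its right child's leftmost point, and every pair $x<y$ is separated by exactly one cut, that of $\mathrm{lca}(x,y)$, which lives at one particular level. (The lowest levels, where some chunks are singletons, are handled routinely: a singleton carries no cut and is the $\mathrm{lca}$ of no pair.)

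Second, for each level $h$ I would produce $\Oh(\log\frac1\eps)$ pairs covering every pair separated by a level-$h$ cut at $t$-distance $\ge\mu+1$. As a first approximation: for thresholds $\alpha,\beta\ge0$ with $\alpha+\beta$ slightly above $\mu$, place into $X_\ell$ every point lying in the left subtree of some level-$h$ node $v$ and $\ge\alpha$ to the left of $c_v$, and into $Y_\ell$ every point in a right subtree and $\ge\beta$ to the right of the corresponding cut; sweeping $(\alpha,\beta)$ over an $\Oh(1)$-spaced grid of $\Oh(\mu)$ values covers every relevant same-chunk pair---the $+1$ slack makes the grid fine enough---and two same-chunk points landing into $X_\ell$ and $Y_\ell$ are then $\ge\alpha+\beta>\mu$ apart, as~(1) demands. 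The flaw is across chunks: a point just inside the left end of a chunk, placed into $X_\ell$, can lie arbitrarily close to a point just inside the right end of the preceding chunk, placed into $Y_\ell$. The remedy---and the technical heart of the lemma---is to additionally install, for each shift, a \emph{forbidden zone} of total width slightly above $\mu$ straddling each boundary between consecutive chunks, sliding these zones (alternating the direction of the slide at consecutive chunks, and coordinating it with the slide of the cut-zone) so that (a)~every point of $X_\ell$ and every point of $Y_\ell$ are more than $\mu$ apart in $t$, and (b)~every pair at $t$-distance $\ge\mu+1$ avoids all zones for at least one shift. The delicate case in~(b) is a pair spanning nearly an entire wide chunk, with both endpoints pressed against that chunk's two opposite boundaries; handling it is where the coordinated sliding, the parity alternation, and the $+1$ slack are all used, and I expect this case analysis to be the main obstacle.

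Finally, over the $\Oh(\log n)$ levels we obtain $s=\Oh(\log n\log\frac1\eps)$ pairs. Building the tree together with all cut- and boundary-values costs $\Oh(n\log n)$, and each of the $s$ pairs is assembled by a single pass over $Z$; hence the total running time is $\Oh(n\log n\log\frac1\eps)$, as claimed.
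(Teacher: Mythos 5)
Your high-level plan matches the paper's: pass to log-scale, build a balanced binary recursion on the sorted set, cover pairs at the level of their least common ancestor, and use $\Oh(\log\frac1\eps)$ shifted ``cut-zone'' restrictions per level to reconcile the covering requirement~(2) with the separation requirement~(1). You also correctly identify the real difficulty, namely the interference between the right half of one chunk (destined for $Y_\ell$) and the left half of the adjacent chunk (destined for $X_\ell$) when those two chunks are close in value. However, the proposed remedy is not worked out, and this is precisely where the proof lives.

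Concretely, you propose to install, at every chunk boundary, a sliding forbidden zone of width $\approx\mu$, with the slides alternating in direction across consecutive chunks and ``coordinated'' with the slide of the cut-zone, and you write: ``I expect this case analysis to be the main obstacle.'' That is an acknowledgment that the crucial step is missing, and it is not a routine verification: a single shift parameter $\ell$ must simultaneously control the cut-zone position inside each chunk (constrained to a window of width~$1$ depending on where the median-cut falls relative to the pair being covered) and the boundary zones on both sides (constrained by how close $x$ and $y$ are pressed against the chunk's extremes), and the same $\ell$ must do this coherently for \emph{all} chunks of the level. It is not clear that such a coordination exists in general, e.g.\ when two adjacent wide chunks each contain a pair of extremal points pressed against the shared boundary and against differently-placed cuts.

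The paper avoids this coordination problem entirely with two ideas you do not use. First, in $\textsc{SplitChunks}$ a chunk is \emph{discarded} if all its elements lie within a factor $\frac1\eps$ (it contains no pair to cover), and this pruning yields a structural fact (Claim~\ref{consecutive-intervals-are-distant}): in the surviving list $T_r$, chunks $T_r[2k]$ and $T_r[2k+3]$ are always at distance $>\frac1\eps$, because the sibling pair $T_r[2k+1],T_r[2k+2]$ between them came from a wide parent. Second, $\textsc{SeparateChunks}$ splits $T_r$ into two interleaved sublists so that within each sublist, consecutive non-sibling chunks are exactly such a pair $T_r[2k],T_r[2k+3]$ and hence automatically $\frac1\eps$-separated (Claim~\ref{consecutive-odd-intervals-are-distant}). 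After this, the \emph{only} boundary that can cause interference is the one between two siblings, and that single boundary per pair is the one carrying the $\Oh(\log\frac1\eps)$ shifted restriction in $\textsc{ShiftedTransitions}$. This removes the need for forbidden zones at arbitrary boundaries and for any cross-chunk coordination of slides. I would encourage you to adopt the discard-and-separate mechanism; without it, the coordinated-sliding step you describe remains a genuine gap.
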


We will later use $Z$ as the set of all entries of vectors $A$ and~$B$.
Regarding (i), observe that if $d(x,y)>\frac 1\eps$, then the sum $x+y$ and the maximum $\max\{x,y\}$ differ by less than a factor $1+\eps$. This allows us to ensure point (i) of Sum-to-Max-Covering. Property (ii) ensures that we cover all distant pairs and thus corresponds to point (ii) of Sum-to-Max-Covering. 

\medskip
The proof outline is as follows, see also Algorithm~\ref{alg:distantcoveringlemma} for pseudocode. 
To simplify notation we assume $n$ to be a power of 2 (this is without loss of generality since we can fill up $Z$ with arbitrary numbers).
We first sort $Z$, so from now on we assume that $Z = \{z_1,\ldots,z_n\}$ with $z_1 \le \ldots \le z_n$. 
The algorithm performs $\log n$ iterations. In iteration $r$, we split $Z$ into chunks of size $n/2^r$, and we remove some chunks that are irrelevant for covering distant pairs, see procedure $\textsc{SplitChunks}$ and Figure~\ref{fig:covering-lemma-chunks}. 
Then we separate the resulting list of chunks into two sub-lists, see procedure $\textsc{SeparateChunks}$ and Figure~\ref{fig:covering-lemma-split}. 
Finally, we handle the transition between any two chunks by introducing a restricted area at their boundary, applied with $\Oh(\log \frac 1\eps)$ different shifts, see procedure $\textsc{ShiftedTransitions}$ and Figure~\ref{fig:covering-lemma-transitions}.
In the following subsections we describe the individual procedures in detail.


\begin{algorithm}
    \caption{$\textsc{DistantCoveringLemma}(Z, \eps)$}
	\label{alg:distantcoveringlemma}
\begin{algorithmic}[1]
    \State $\text{sort}(Z)$
    \Comment $Z = \{ z_1 \le z_2 \le \ldots \le z_n \}$
    \State Set $T_0$ as a list containing one element, $T_0[1] := Z$
    \For {$r = 1,2,\ldots,\ceil{\log n}$}
        \State $T_r := \textsc{SplitChunks}(T_{r-1}, \eps)$
        \State $T_{r,1}, T_{r,2} := \textsc{SeparateChunks}(T_r)$
        \State $S_{r,1} :=  \textsc{ShiftedTransitions}(T_{r,1},\eps)$
        \State $S_{r,2} :=  \textsc{ShiftedTransitions}(T_{r,2},\eps)$
    \EndFor
    \State \Return $\bigcup_r S_{r,1} \cup S_{r,2}$
\end{algorithmic}
\end{algorithm}


\subsubsection{\textsc{SplitChunks}}

Algorithm~\ref{alg:selectchunks} describes the procedure of selecting
chunks $T_r$ in every level, see also Figure~\ref{fig:covering-lemma-chunks} for an illustration. 
We start with a big chunk $T_0[1] = Z$, containing the whole input.
Then we iterate over all levels $r = 1,2,\ldots,\log n$ and construct refined chunks as follows. 
In iteration $r$, we iterate over all previous chunks $T_{r-1}[i]$. If $T_{r-1}[i]$ does not contain any two numbers in distance greater than $\frac 1\eps$, then we can ignore it. Otherwise, we split $T_{r-1}[i]$ at the middle into two chunks of half the size and add them to the list of chunks $T_r$. 
For any $r$, this yields a list of chunks $T_r$ such that 
\begin{enumerate}
  \item[(P1)] every chunk $T_r[i]$ is a subset of $Z$ of the form $\{z_a,z_{a+1},\ldots,z_b\}$ and of size $|T_r[i]| = n/2^r$, and
  \item[(P2)] every $x \in T_r[i]$ is smaller than every $y \in T_r[j]$, for any $i<j$.
\end{enumerate}
Note that at the bottom level, chunks have size 1. Moreover, for any $r > 0$ the list $T_r$ contains an even number of chunks; this will also hold for all lists of chunks constructed later.


\begin{algorithm}
    \caption{$\textsc{SplitChunks}(T_{r-1}[1 \ldots \ell], \eps)$}
	\label{alg:selectchunks}
\begin{algorithmic}[1]
    \State Initialize $T_r$ as an empty list, and $k := 1$
    \For {$i = 1,2,\ldots,\ell$} 
        \State By construction, $T_{r-1}[i] $ is of the form $\{z_a,z_{a+1},\ldots,z_b\}$ for some $a\le b$
        \If {$z_a < \eps \cdot z_b$} 
            \State $T_r[2k-1] := \{z_a,\ldots,z_{(a+b-1)/2}\}$ \\\Comment{Split $T_{r-1}[i]$ in the middle}
            \State $T_r[2k] := \{z_{(a+b+1)/2},\ldots,z_b\}$
            \State $k := k+1$
        \EndIf
    \EndFor
    \State \Return $T_r$
\end{algorithmic}
\end{algorithm}

\begin{figure}[ht!]
    \centering
   \begin{tikzpicture}[node distance=1em and 1em]
       \def\widthrec{10}
       \def\heightrec{0.2}
       \def\leveldist{1}
       \def\pertrub{0.05}

       \newcommand{\chunk}[2]{
           \pgfmathtruncatemacro\rr{2^#1}
           \draw[rounded corners=2pt, fill=blue!35] (#2 * \widthrec/\rr +
           \pertrub, -1*#1*\leveldist) rectangle (#2 * \widthrec/\rr + \widthrec/\rr - \pertrub, -1.0*#1*\leveldist - \heightrec);

           \node (lasta) at (#2 * \widthrec/\rr, -1*#1*\leveldist - \heightrec) {};
           \node (lastb) at (#2 * \widthrec/\rr + \widthrec/\rr - \pertrub, -1*#1*\leveldist - \heightrec) {};
       }

       \newcommand{\chunkwassmall}{
           \draw[decoration={brace, raise=2pt},decorate] (lastb) -- node[below=2pt] {$ < \frac{1}{\varepsilon}$ } (lasta);
       }

       \chunk{1}{0};
       \chunk{1}{1};

       \chunk{2}{0};
       \chunk{2}{1};
       \chunk{2}{2};
       \chunkwassmall;
       \chunk{2}{3};
       \chunkwassmall;

       \chunk{3}{0};
       \chunk{3}{1};
       \chunkwassmall;
       \chunk{3}{2};
       \chunk{3}{3};

       \chunk{4}{0};
       \chunk{4}{1};
       \chunk{4}{4};
       \chunk{4}{5};
       \chunk{4}{6};
       \chunk{4}{7};

       \foreach \r in  {1,2,3,4}
       {
           \node at (-1.2,-0.115-1.0*\r*\leveldist) {$r = \r$};
       }

   \end{tikzpicture}
    \caption{Illustration of the procedure \textsc{SplitChunks}, which splits and selects chunks of the input numbers on different levels~$r$.}
   \label{fig:covering-lemma-chunks}
\end{figure}


\begin{claim}
    \label{consecutive-intervals-are-distant}
    We have $d\big(T_r[2k], T_r[2k+3]\big) > \frac 1\eps$ for any level $r$ and any index $k$.
\end{claim}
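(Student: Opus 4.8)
The plan is to unwind the definition of \textsc{SplitChunks} and locate the chunk that sits strictly between $T_r[2k]$ and $T_r[2k+3]$ in sorted order: that chunk is precisely a parent that was \emph{split} at level $r$, hence by the split condition its own endpoints already differ by more than a factor $\tfrac1\eps$, and since $T_r[2k]$ lies entirely to its left and $T_r[2k+3]$ entirely to its right, the two are even farther apart.

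First I would fix notation. For $r=0$ the list $T_0$ has a single element, so $T_r[2k]$ and $T_r[2k+3]$ cannot both exist and the claim is vacuous; assume $r\ge 1$. At level $r$, \textsc{SplitChunks} scans $T_{r-1}$ and, for each chunk $\{z_a,\dots,z_b\}$ satisfying the test $z_a<\eps z_b$, appends its left half and then its right half to $T_r$, discarding the other chunks. Call the surviving (split) parents $P_1,P_2,\dots$ in sorted order; then for every $m$, $T_r[2m-1]$ is the left half of $P_m$ and $T_r[2m]$ is the right half of $P_m$. If $T_r$ has no index $2k+3$ there is nothing to prove; otherwise, since every $T_r$ with $r\ge1$ has an even number of entries, $|T_r|\ge 2k+4$, so there are at least $k+2$ surviving parents. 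In particular $P_k,P_{k+1},P_{k+2}$ all exist, $T_r[2k]$ is the right half of $P_k$, the pair $T_r[2k+1],T_r[2k+2]$ are the two halves of $P_{k+1}$ (so $T_r[2k+1]\cup T_r[2k+2]=P_{k+1}$), and $T_r[2k+3]=T_r[2(k+2)-1]$ is the left half of $P_{k+2}$.

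Next I would apply property (P2) to the list $T_r$. Since $2k<2k+1\le 2k+2<2k+3$, every element of $T_r[2k]$ is at most every element of $P_{k+1}=T_r[2k+1]\cup T_r[2k+2]$, and every element of $P_{k+1}$ is at most every element of $T_r[2k+3]$. Hence every element of $T_r[2k]$ is at most every element of $T_r[2k+3]$, so the minimum defining $d(\cdot,\cdot)$ is attained at the closest pair $(\max T_r[2k],\,\min T_r[2k+3])$, and
\[ d\big(T_r[2k],T_r[2k+3]\big)=\frac{\min T_r[2k+3]}{\max T_r[2k]}\ \ge\ \frac{\max P_{k+1}}{\min P_{k+1}}. \]
Finally, writing $P_{k+1}=\{z_a,\dots,z_b\}$ with $z_a\le\dots\le z_b$, the fact that $P_{k+1}$ survived the scan means exactly that $z_a<\eps z_b$, i.e.\ $\max P_{k+1}/\min P_{k+1}=z_b/z_a>\tfrac1\eps$; combined with the displayed inequality this gives $d(T_r[2k],T_r[2k+3])>\tfrac1\eps$, proving the claim.

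I do not expect a genuine obstacle here; the one point I would be careful about is the index bookkeeping — checking that $2k$ always names a \emph{right}-half slot, that $2k+3$ names a \emph{left}-half slot belonging to the parent two positions later, and that all of $P_k,P_{k+1},P_{k+2}$ indeed exist whenever the claim is non-vacuous (using that $|T_r|$ is even). Once that is pinned down, everything else is just (P2) together with the split test.
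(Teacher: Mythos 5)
Your proof is correct and follows essentially the same route as the paper's: identify the parent chunk $P_{k+1}=T_r[2k+1]\cup T_r[2k+2]=\{z_a,\dots,z_b\}$, observe that it was split only because $z_a<\eps z_b$, and combine this with sortedness to conclude that $T_r[2k]$ (left of $z_a$) and $T_r[2k+3]$ (right of $z_b$) must be at least as far apart. The extra index bookkeeping you do (identifying which slots are left/right halves of which parents, and ruling out the vacuous cases) is a more careful rendering of what the paper leaves implicit, not a different argument.
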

\begin{proof}
    Write the parent chunk $T_r[2k+1] \cup T_r[2k+2]$ in the form $\{z_a,z_{a+1},\ldots,z_b\}$. Since $T_r[2k+1], T_r[2k+2]$ have been added, we have $z_a < \eps \cdot z_b$. Since every number in $T[2k]$ is smaller than $z_a$ and every number in $T[2k+3]$ is larger than $z_b$, the distance of $T[2k],T[2k+3]$ is greater than $\frac 1\eps$.
\end{proof}

The main property of our splitting procedure is that all $x,y \in Z$ with $d(x,y) > \frac 1\eps$ eventually are contained in consecutive chunks (see Figure~\ref{fig:covering-lemma-chunks}) -- note that we will only make use of consecutive chunks with indices $2k-1, 2k$ for some $k$ (as opposed to $2k, 2k+1$).

\begin{claim}
    \label{close-elements-in-consecutive-intervals}
    For any $x,y \in Z$, if $d(x,y) > \frac 1\eps$ and $x < y$, then
    there exist a level $r$ and index $k$ such that $x \in T_r[2k-1]$ and $y \in T_r[2k]$.
\end{claim}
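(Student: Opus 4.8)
The plan is to argue by induction down the levels of the chunk tree that the pair $x,y$ can never be "discarded" by \textsc{SplitChunks}: as long as $x$ and $y$ share a chunk, that chunk is forced to be split (rather than dropped), so eventually the two elements must land in a pair of sibling chunks $T_r[2k-1], T_r[2k]$.

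First I would record the trivial reformulation of the hypothesis: since $x < y$, the condition $d(x,y) > \tfrac 1\eps$ just says $x < \eps\, y$ (and in particular $x \ne y$, as $\eps < 1/10$). The key observation — which is really the only content of the proof — is this: if at some level $r$ a chunk $C \in T_r$ contains both $x$ and $y$, then by property (P1) we have $C = \{z_a,\ldots,z_b\}$, and $z_a \le x < \eps\, y \le \eps\, z_b$, so the test $z_a < \eps\, z_b$ in \textsc{SplitChunks} succeeds. Hence in iteration $r+1$ the chunk $C$ is not skipped but split at its midpoint, and both halves $\{z_a,\ldots,z_{(a+b-1)/2}\}$ and $\{z_{(a+b+1)/2},\ldots,z_b\}$ are appended to $T_{r+1}$ as $T_{r+1}[2k-1]$ and $T_{r+1}[2k]$ for the appropriate $k$. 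In particular the chunk of $T_{r+1}$ containing $x$ (and the one containing $y$) exists.

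With this in hand I would run the following induction on $r = 0,1,\ldots,\lceil\log n\rceil$: either (a) there are $r' \le r$ and $k$ with $x \in T_{r'}[2k-1]$ and $y \in T_{r'}[2k]$, or (b) $x$ and $y$ lie in a common chunk of $T_r$. For $r=0$, option (b) holds because $T_0[1] = Z$ contains both. For the step, if (a) already holds at a level $r' \le r-1$ we are done; otherwise $x,y$ share a chunk $C \in T_{r-1}$, and by the key observation $C$ is split into $L := T_r[2k-1]$ and $R := T_r[2k]$ with every element of $L$ below every element of $R$ (by (P2), or directly from how the split is taken). If $x,y$ both fall into $L$ or both into $R$, option (b) holds at level $r$; otherwise one of them is in $L$ and the other in $R$, and since $x<y$ and $L$ lies below $R$ this forces $x \in L$ and $y \in R$, i.e. option (a). Finally, at $r = \lceil\log n\rceil$ the chunks of $T_r$ have size $n/2^{\lceil\log n\rceil} = 1$ (recall $n$ is a power of two), so option (b) is impossible since $x \ne y$; hence option (a) holds, which is precisely the claim.

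I do not expect a real obstacle here: the whole argument hinges on the one observation that a chunk containing the distant pair $x,y$ always passes the split test and therefore is never thrown away, and the rest is bookkeeping with properties (P1) and (P2) plus the fact that the recursion bottoms out at singleton chunks. The only point that needs a little care is tracking that the relevant chunk survives at every level — which is exactly what the key observation supplies.
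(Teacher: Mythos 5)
Your proof is correct and is essentially the same argument as the paper's. The paper phrases it as: take the largest $r$ at which $x,y$ share a chunk, note $r < \log n$, and observe that $d(x,y) > \tfrac 1\eps$ forces the split; your explicit downward induction packages the identical key observation (a chunk containing the distant pair always passes the $z_a < \eps z_b$ test and hence is never dropped) with the same bottoming-out at singleton chunks.
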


\begin{proof}
    Consider the largest $r$ such that $x,y$ are contained in the same chunk $T_r[k']$. The chunk $T_r[k']$ contains at least two elements, so
    $r < \log n$. Since $d(x,y) > \frac 1\eps$, Algorithm~\ref{alg:selectchunks} splits $T_r[k']$ into chunks $T_{r+1}[2k-1]$ and $T_{r+1}[2k]$. By maximality of $r$ and by $x < y$, it follows that $x \in T_{r+1}[2k-1]$ and $y \in T_{r+1}[2k]$. This proves the claim.
\end{proof}

\subsubsection{\textsc{SeparateChunks}}

The procedure \textsc{SeparateChunks} is given a list $T_r$ of
chunks and separates it into two subsequences $T_{r,1}$ and $T_{r,2}$, where $T_{r,1}$ contains all chunks $T[i]$ with $(i \bmod 4) \in \{1,2\}$, and $T_{r,2}$ contains the remaining chunks in $T$. See Algorithm~\ref{alg:separatechunks} for pseudocode and Figure~\ref{fig:covering-lemma-split} for an illustration.

\begin{algorithm}
    \caption{$\textsc{SeparateChunks}(T_r[1 \ldots 2\ell])$}
	\label{alg:separatechunks}
\begin{algorithmic}[1]
    \State Initialize $T_{r,1},T_{r,2}$ as empty lists, and $b := 1$
    \For {$k = 1,2,\ldots,\ell$} 
        \State Append $T_r[2k-1]$ and $T_r[2k]$ to $T_{r,b}$
        \State $b := 3-b$
    \EndFor
    \State \Return $T_{r,1},T_{r,2}$
\end{algorithmic}
\end{algorithm}


\begin{figure}[ht!]
    \centering
   \begin{tikzpicture}[node distance=1em and 1em]
       \def\heightrec{0.25}
       \def\heightlevel{1}
       \def\pertrub{0.1}
       \newcounter{cA} 
       \setcounter{cA}{0}
       \newcounter{cLev} 
       \setcounter{cLev}{0}
       \node (lasta) at (0,0) {};
       \node (lastb) at (0,0) {};

       \def\redcolor{white!45!red}
       \def\bluecolor{black!25!blue}

       \newcommand{\chunk}[2]{
           \draw[rounded corners=4pt, fill=#2] (\thecA+\pertrub, \thecLev*\heightlevel) rectangle (\thecA + #1-\pertrub, \thecLev*\heightlevel + \heightrec);
           \addtocounter{cA}{#1}
       }
       \newcommand{\setleftbrace}{
           \node (lasta) at (\thecA,\heightrec + \thecLev*\heightlevel) {};
       }
       \newcommand{\setrightbrace}{
           \node (lastb) at (\thecA,\heightrec + \thecLev*\heightlevel) {};
       }

       \newcommand{\drawbrace}{
           \draw[decoration={brace,raise=2pt},decorate] (lasta) -- node[above=2pt] {$ > \frac{1}{\varepsilon}$ } (lastb);
       }

       \node at (-1,0.09) {$T_{r,1}$};
       \chunk{1}{\redcolor};
       \chunk{1}{\bluecolor};
       \setleftbrace;
       \chunk{2}{white};
       \chunk{1}{white};
       \setrightbrace;
       \drawbrace;
       \chunk{1}{\redcolor};
       \chunk{2}{\bluecolor};
       \setleftbrace;
       \chunk{1}{white};
       \chunk{1}{white};
       \setrightbrace;
       \drawbrace;

       \node at (-1,0.09-\heightlevel) {$T_{r,2}$};
       \addtocounter{cLev}{-1};
       \setcounter{cA}{0};

       \setleftbrace;
       \chunk{1}{white};
       \chunk{1}{white};
       \setrightbrace;
       \drawbrace;
       \chunk{2}{\redcolor};
       \chunk{1}{\bluecolor};
       \setleftbrace;
       \chunk{1}{white};
       \chunk{2}{white};
       \setrightbrace;
       \drawbrace;
       \chunk{1}{\redcolor};
       \chunk{1}{\bluecolor};

   \end{tikzpicture}
    \caption{Illustration of \textsc{SeparateChunks}, which separates the list of chunks $T_r$ on some level~$r$ into two sub-lists $T_{r,1}$ and $T_{r,2}$. The selected chunks are
    marked in red/light-shaded and blue/dark-shaded. (In the next step, the red/light-shaded chunks will form a set $X_\ell$, and the blue/dark-shaded ones will form a set $Y_\ell$. Note that within $T_{r,1}$ every red chunk is $\eps$-distant from every blue chunk, except for its right neighbor. This will be used by the procedure \textsc{ShiftedTransitions}.)}
   \label{fig:covering-lemma-split}
\end{figure}


This construction ensures that consecutive chunks with indices $2k$ and $2k+1$ have distance at least $\frac 1\eps$, as shown in the following claim.

\begin{claim}
    \label{consecutive-odd-intervals-are-distant}
    We have $d\big(T_{r,b}[2k], T_{r,b}[2k+1]\big) > \receps$ for any level $r$, index $k$, and $b \in \{1,2\}$.
\end{claim}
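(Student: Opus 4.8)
\textbf{Proof plan for Claim~\ref{consecutive-odd-intervals-are-distant}.}

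The plan is to trace how the chunk $T_{r,b}[2k]$ and $T_{r,b}[2k+1]$ arose from the list $T_r$ produced by \textsc{SplitChunks}. By construction of \textsc{SeparateChunks}, each sub-list $T_{r,b}$ is formed by concatenating pairs of consecutive chunks $(T_r[2j-1], T_r[2j])$ for the indices $j$ whose parity matches $b$. So a pair of indices $(2k, 2k+1)$ in $T_{r,b}$ corresponds to the end of one selected pair and the start of the \emph{next} selected pair. Concretely, if the $k$-th selected pair in $T_{r,b}$ is $(T_r[2j-1], T_r[2j])$, then $T_{r,b}[2k] = T_r[2j]$ and $T_{r,b}[2k+1] = T_r[2j']$ where $2j'-1$ is the first index of the next selected pair. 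Since \textsc{SeparateChunks} alternates which sub-list receives each pair, the next pair in $T_{r,b}$ comes from $T_r$ with index offset by $4$: we have $2j'-1 = 2j+3$, i.e. $T_{r,b}[2k+1] = T_r[2j+3]$. (One must also handle the edge case where some of $T_r[2j+1], T_r[2j+2], T_r[2j+3]$ do not exist because $T_r$ is too short, but then $T_{r,b}$ simply has no index $2k+1$ and there is nothing to prove; likewise if fewer than all four of these indices are populated the argument only gets easier since the relevant elements of $Z$ are even further apart.)

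Having identified $T_{r,b}[2k] = T_r[2j]$ and $T_{r,b}[2k+1] = T_r[2j+3]$, the claim follows immediately from Claim~\ref{consecutive-intervals-are-distant}, which states exactly that $d(T_r[2j], T_r[2j+3]) > \frac{1}{\eps}$ for any level $r$ and index $j$. Thus $d(T_{r,b}[2k], T_{r,b}[2k+1]) = d(T_r[2j], T_r[2j+3]) > \frac{1}{\eps}$, as required.

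The main thing to get right is the index bookkeeping: carefully matching the indices in $T_{r,b}$ to the indices in $T_r$, and in particular verifying that the jump of $4$ in the $T_r$-indexing between consecutive selected pairs within one sub-list is exactly what makes Claim~\ref{consecutive-intervals-are-distant} applicable (it is stated for the gap between $T_r[2k']$ and $T_r[2k'+3]$, a gap of $3$, which is precisely the gap between the last chunk of one pair and the first chunk of the pair that is $4$ indices later). I expect no genuine obstacle here — the content is entirely in Claim~\ref{consecutive-intervals-are-distant}, and this claim is just the translation of that statement through the relabeling performed by \textsc{SeparateChunks}. I would write it as a short paragraph: recall the relabeling, observe the index correspondence, invoke Claim~\ref{consecutive-intervals-are-distant}, done.
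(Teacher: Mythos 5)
Your proposal is correct and follows exactly the same route as the paper: identify $T_{r,b}[2k]$ and $T_{r,b}[2k+1]$ with $T_r[2k']$ and $T_r[2k'+3]$ for some $k'$, then invoke Claim~\ref{consecutive-intervals-are-distant}. You simply spell out the index bookkeeping more explicitly than the paper does.
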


\begin{proof}
    Because of how $T_{r,1},T_{r,2}$ are constructed, the chunks $T_{r,b}[2k]$ and $T_{r,b}[2k+1]$ correspond to chunks $T_r[2k']$ and $T_r[2k'+3]$ for some $k'$. The statement now follows from Claim~\ref{consecutive-intervals-are-distant}.
\end{proof}

The following analogue of Claim~\ref{close-elements-in-consecutive-intervals} is immediate.

\begin{claim}
    \label{close-elements-in-consecutive-intervals-after-separation}
    For any $x,y \in Z$, if $d(x,y) > \frac 1\eps$ and $x < y$, then
    there exist a level $r$, index $k$, and $b \in \{1,2\}$ such that $x \in T_{r,b}[2k-1]$ and $y \in T_{r,b}[2k]$.
\end{claim}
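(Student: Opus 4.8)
\textbf{Plan for the proof of Claim~\ref{close-elements-in-consecutive-intervals-after-separation}.}

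The plan is to simply trace how a pair of distant elements survives the two constructions already set up. By Claim~\ref{close-elements-in-consecutive-intervals}, applied to $x,y \in Z$ with $d(x,y) > \frac 1\eps$ and $x < y$, there exist a level $r$ and an index $k$ with $x \in T_r[2k-1]$ and $y \in T_r[2k]$; that is, $x$ and $y$ land in a consecutive pair of chunks of the \emph{same parent}, namely the two halves into which $\textsc{SplitChunks}$ divided some chunk at level $r-1$. The only remaining task is to observe that $\textsc{SeparateChunks}$ keeps such a sibling pair together and assigns it the same bit $b$. First I would recall the definition of $\textsc{SeparateChunks}$: it walks over the chunks of $T_r$ in pairs $(T_r[2k'-1], T_r[2k'])$, appends each such pair as a unit to whichever of $T_{r,1}, T_{r,2}$ is currently active, and then flips the active list. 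Hence the sibling pair $(T_r[2k-1], T_r[2k])$ is entirely routed into $T_{r,b}$ for $b = 1$ if $k$ is odd and $b = 2$ if $k$ is even, and it becomes a consecutive pair there, say $T_{r,b}[2k'-1], T_{r,b}[2k'-1+1]$ where $k' = \lceil k/2 \rceil$. Re-indexing, this is exactly $x \in T_{r,b}[2\tilde k - 1]$ and $y \in T_{r,b}[2\tilde k]$ for $\tilde k = \lceil k/2\rceil$, which is what the claim asserts.

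The proof is therefore a one-paragraph bookkeeping argument: invoke Claim~\ref{close-elements-in-consecutive-intervals}, then note that since $\textsc{SeparateChunks}$ processes and emits chunks two at a time (a chunk with odd index $2k-1$ together with its successor of even index $2k$), every sibling pair produced by $\textsc{SplitChunks}$ remains a sibling pair in exactly one of $T_{r,1}, T_{r,2}$, with the odd-indexed sibling keeping the odd position and the even-indexed sibling the even position. There is essentially no obstacle here; the only point requiring a moment's care is the re-indexing, i.e.\ checking that the pair $(T_r[2k-1], T_r[2k])$ really does map to a pair of the form $(T_{r,b}[2\tilde k - 1], T_{r,b}[2\tilde k])$ and not to a ``straddling'' pair $(T_{r,b}[2\tilde k], T_{r,b}[2\tilde k + 1])$ — but this is guaranteed precisely because $\textsc{SeparateChunks}$ appends chunks two-at-a-time in the fixed order (odd index first, even index second), so parity of positions is preserved. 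One should also remark that the hypothesis $x < y$ is used (as in Claim~\ref{close-elements-in-consecutive-intervals}) to ensure $x$ lands in the \emph{lower} half and $y$ in the upper half, and hence in the odd- and even-indexed chunk respectively, after the split.
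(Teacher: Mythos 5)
Your proof is correct and follows essentially the same route as the paper's: invoke Claim~\ref{close-elements-in-consecutive-intervals} to get $x,y$ into sibling chunks $T_r[2k-1], T_r[2k]$, then observe that $\textsc{SeparateChunks}$ appends that sibling pair as a unit to one of $T_{r,1},T_{r,2}$ while preserving the odd/even position parity. The paper states this in a single sentence; your version merely spells out the (correct) re-indexing bookkeeping more explicitly.
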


\begin{proof}
    Consecutive chunks $T_r[2k-1]$ and $T_r[2k]$ are either both added to $T_{r,1}$ or both added to $T_{r,2}$. The statement thus follows from Claim~\ref{close-elements-in-consecutive-intervals}.
\end{proof}

\subsubsection{\textsc{ShiftedTransitions}}

The procedure \textsc{ShiftedTransitions} is given a list of chunks $T = T_{r,b}$ and returns $\Oh(\log \frac 1\eps)$ many pairs $(X_t,Y_t)$ of the final covering
(recall the statement of Lemma~\ref{distant-covering-lemma}). Naively, we would like to assign every odd chunk to $X_t$ and every even chunk to $Y_t$, i.e., $X_t = \bigcup_{k} T[2k-1]$ and $Y_t = \bigcup_{k} T[2k]$. From
Claim~\ref{consecutive-odd-intervals-are-distant} we know that even chunks
are distant from their right neighbors, i.e.,
$d(T[2k], T[2k+1]) > \receps$.
This is not necessarily true for $d(T[2k-1],T[2k])$, and therefore we introduce a restricted area at their boundary, applied with $\Oh(\log \frac 1\eps)$ different shifts, as illustrated in Figure~\ref{fig:covering-lemma-transitions}. See Algorithm~\ref{alg:transitionintervals}.

\begin{algorithm}
    \caption{$\textsc{ShiftedTransitions}(T[1,\ldots 2\ell], \eps)$}
	\label{alg:transitionintervals}
\begin{algorithmic}[1]
    \For {$t = 0,1,\ldots,\ceil{\log_2{\receps}}$}
        \For {$k \in \{1,\ldots,\ell \}$}
            \State let $z_\textup{min}$ be the minimal number in $T[2k]$
            \State $T'[2k-1] :=  \{ z \in T[2k-1] \mid z \le \eps 2^t z_\textup{min} \}$
            \State $T'[2k] :=  \{ z \in T[2k] \mid z > 2^t z_\textup{min} \}$
        \EndFor
        \State $X_t := \bigcup_{k} T'[2k-1]$
        \State $Y_t := \bigcup_{k} T'[2k]$
    \EndFor
    \State \Return $\{(X_t,Y_t) \mid 0 \le t \le \ceil{\log_2 \receps} \}$ 
\end{algorithmic}
\end{algorithm}


\begin{figure}[ht!]
    \centering
   \begin{tikzpicture}[node distance=1em and 1em]
       \def\heightrec{1}
       \def\widthrec{5}
       \def\heightlevel{2}
       \def\pertrub{0.1}
       \def\redcolor{white!45!red}
       \def\bluecolor{black!25!blue}

       \tikzstyle{chunk style}=[rounded corners=10pt]
       \tikzstyle{lines}=[pattern=north west lines]
       \newcommand{\chunks}[3]{
           \begin{scope}[shift={(0,-#3*\heightlevel)}]
               \clip (-\widthrec+1,-1) rectangle (\widthrec -1,\heightlevel+1);
               \begin{scope}
                   \clip[chunk style] (-\widthrec,0) rectangle (-\pertrub,\heightrec);
                   \node (lasta) at (-#1, \heightrec) {};
                   \draw[fill=\redcolor] (-\widthrec,-\pertrub) rectangle (lasta);
               \end{scope}
               \begin{scope}
                   \clip[chunk style] (\pertrub,0) rectangle (\widthrec,\heightrec);
                   \node (lastb) at (#2, \heightrec) {};
                   \draw[fill=\bluecolor] (lastb) rectangle (\widthrec,0);
               \end{scope}
               \draw[chunk style] (-\widthrec,0) rectangle (-\pertrub,\heightrec);
               \draw[chunk style] (\pertrub,0) rectangle (\widthrec,\heightrec);
           \end{scope}

       }
       \newcommand{\setleftbrace}{
           \node (lasta) at (\thecA,\heightrec + \thecLev*\heightlevel) {};
       }
       \newcommand{\setrightbrace}{
           \node (lastb) at (\thecA,\heightrec + \thecLev*\heightlevel) {};
       }

       \newcommand{\drawbrace}[1]{
           \draw[decoration={brace,raise=2pt},decorate] (lasta) -- node[above=2pt] {#1} (lastb);
       }

       \node at (-\widthrec/2,2) {$T[2k-1]$};
       \node at (\widthrec/2,2) {$T[2k]$};

       \chunks{2}{0}{0};
       \node (emint) at (lasta) {};
       \drawbrace{$1/\varepsilon$};

       \chunks{1.33}{0.66}{1};
       \drawbrace{$1/\varepsilon$};

       \begin{scope}
           \clip[chunk style] (\pertrub,-\heightlevel) rectangle (\widthrec,\heightrec-\heightlevel);
           \fill[lines] ($(lastb)$) rectangle ($(lastb) + (-0.66,-\heightrec)$) {};
       \end{scope}
       \node (lastb) at (lasta) {};
       \node (lasta) at ($(emint) - (0,\heightlevel)$) {};
       \fill[lines] ($(lasta)$) rectangle ($(lasta) + (0.66,-\heightrec)$) {};

       \drawbrace{$2$};

       \node at (0,-1.25* \heightlevel) {$\vdots$};

       \chunks{0}{2}{2.25};
       \drawbrace{$1/\varepsilon$};
       \begin{scope}
           \clip[chunk style] (-\widthrec,-2.25*\heightlevel) rectangle (-\pertrub,\heightrec-2.25*\heightlevel);
           \fill[lines] (-0.66,0) rectangle (0,-3.25*\heightlevel) {};
       \end{scope}
       \fill[lines] (2-0.66,-2.25*\heightlevel) rectangle (2,-2.25*\heightlevel+\heightrec) {};

   \end{tikzpicture}
    \caption{Illustration of \textsc{ShiftedTransitions} in log-scale. Dashed areas
    represent the added/removed numbers from iteration $t$ to $t+1$. In each
    iteration, we shift to the right by a factor 2, resulting in at most 
    $\ceil{\log_2{\frac{1}{\varepsilon}}}$ iterations. Note that in each iteration the red/light-shaded numbers are in distance greater than $\frac 1\eps$ from the blue/dark-shaded numbers. Moreover, 
    the distance between any two numbers in the dashed area is less than
    $\frac{2}{\varepsilon}$.}
   \label{fig:covering-lemma-transitions}
\end{figure}


\begin{claim}
    \label{covering-lemma-correctness}
    For any output $(X_t,Y_t)$ by $\textsc{ShiftedTransitions}(T_{r,b},\eps)$ we have $d(X_t,Y_t) > \frac 1\eps$.
\end{claim}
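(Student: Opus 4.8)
The plan is to establish the two properties we need of $(X_t, Y_t)$ separately: that the red chunks (forming $X_t$) are far from the blue chunks (forming $Y_t$), and then (in a subsequent claim) that we cover all distant pairs. For the present claim we only need the first property, so I would argue as follows.

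\medskip
First I would set up notation: fix $t$, fix a level $r$ and list $T = T_{r,b}$, and recall that after \textsc{SeparateChunks}, the list $T$ consists of blocks of four consecutive original chunks, where $X_t$ is built only from odd-indexed chunks $T[2k-1]$ and $Y_t$ only from even-indexed chunks $T[2k]$ (restricted by the shift). So any $x \in X_t$ lies in some $T'[2k-1] \subseteq T[2k-1]$ and any $y \in Y_t$ lies in some $T'[2k'] \subseteq T[2k']$. I would split into two cases according to whether $k = k'$ or $k \ne k'$.

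\medskip
If $k = k'$, then by construction $x \le \eps 2^t z_{\min}$ where $z_{\min} = \min T[2k]$, and $y > 2^t z_{\min}$. Hence $y / x > 2^t z_{\min} / (\eps 2^t z_{\min}) = 1/\eps$, so $d(x,y) \ge y/x > 1/\eps$. If $k \ne k'$, I would further distinguish: if $k' > k$, then (since the even chunk $T[2k]$ precedes the odd chunk $T[2k'-1]$ which precedes $T[2k']$, and all numbers are sorted) $x < y$, and in fact $x \in T[2k-1]$ while $y \in T[2k']$ with $2k' \ge 2k+2$, so $y$ lies in a chunk at least as late as $T[2k+2]$; actually the cleanest route is to invoke Claim~\ref{consecutive-odd-intervals-are-distant}, which gives $d(T[2k], T[2k+1]) > 1/\eps$, and note that any $y$ in a chunk with index $\ge 2k+1$ and any $x$ in a chunk with index $\le 2k$ satisfy $y > \max T[2k] \ge \eps^{-1} \min T[2k+1]$-type bounds — more directly, $x \le \max(T[2k-1]) < \eps \cdot (\text{something})$... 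The truly clean statement is: for $x$ in an odd chunk $T[2k-1]$ and $y$ in a strictly later even chunk $T[2k']$ with $k' > k$, we have that $y$ is in a chunk of index $\ge 2k+2 > 2k$, and $x$ in chunk $2k-1 \le 2k$, and Claim~\ref{consecutive-odd-intervals-are-distant} (together with property (P2) that later chunks contain larger numbers) yields $d(x,y) > 1/\eps$. Symmetrically, if $k' < k$ then $y < x$ and the same reasoning with the roles of the index gaps reversed applies (here $y$ is in chunk $2k' \le 2k-2$ and $x$ in chunk $2k-1$, so again they straddle a pair $T[2m], T[2m+1]$ and Claim~\ref{consecutive-odd-intervals-are-distant} applies). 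Taking the minimum over all such $x,y$, we conclude $d(X_t, Y_t) > 1/\eps$.

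\medskip
The main obstacle I anticipate is bookkeeping the index arithmetic correctly — making sure that whenever an odd-indexed chunk and an even-indexed chunk are non-adjacent, one can genuinely sandwich a consecutive pair $T[2m], T[2m+1]$ between them so that Claim~\ref{consecutive-odd-intervals-are-distant} applies, and checking that the shift restrictions $z \le \eps 2^t z_{\min}$ and $z > 2^t z_{\min}$ indeed give the clean factor-$1/\eps$ separation in the adjacent case regardless of $t$. Everything else is a short computation.
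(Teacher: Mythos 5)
Your proof is correct and takes essentially the same approach as the paper's; you invoke Claim~\ref{consecutive-odd-intervals-are-distant} together with sortedness for the non-adjacent case, and the $2^t z_{\min}$-threshold for the adjacent case $k = k'$. The paper states the non-adjacent reasoning more cleanly (``any chunks $T[i]$, $T[j]$ with $j \ge i+2$ have distance $> \frac{1}{\eps}$'' by Claim~\ref{consecutive-odd-intervals-are-distant} and (P2)), whereas you work through the casework explicitly; the bookkeeping concern you flag is real but resolves exactly as you suspect — in the subcase $k' = k-1$ the chunks $T[2k-2]$ and $T[2k-1]$ \emph{are} a pair of the form $T[2m], T[2m+1]$ with $m = k-1$, so Claim~\ref{consecutive-odd-intervals-are-distant} applies directly rather than being ``straddled.''
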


\begin{proof}
    Let $T = T_{r,b}$. By Claim~\ref{consecutive-odd-intervals-are-distant} and
    sortedness (see property (P2)), any chunks $T[i]$ and $T[j]$ with $j \ge i+2$ have distance greater than $\frac 1\eps$. 
    Since $X_t$ only contains numbers from odd chunks $T[2k-1]$, and $Y_t$ only contains numbers from even chunks $T[2k]$, we obtain that any $x \in X_t, y \in Y_t$ within distance $\frac 1\eps$ satisfy $x \in T[2k-1], y \in T[2k]$ for some index $k$. However, in any iteration $t$ the subsets $T'[2k-1] \subseteq T[2k-1]$ and $T'[2k] \subseteq T[2k]$ are chosen to have distance greater than $\frac 1\eps$, and hence $d(X_t,Y_t) > \frac 1\eps$.
%
%
\end{proof}

\begin{claim}
    \label{covering-lemma-completness}
    For any $x,y \in Z$, if $d(x,y) \ge \frac 2\eps$ and $x < y$, then
    there exist a level $r$ and $b \in \{1,2\}$ such that $\textsc{ShiftedTransitions}(T_{r,b},\eps)$ outputs a pair $(X_t,Y_t)$ with $x \in X_t, y \in Y_t$.
%
%
\end{claim}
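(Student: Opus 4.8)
The plan is to trace the pair $x < y$ with $d(x,y) \ge \tfrac 2\eps$ through the three procedures and show that at the end it is correctly separated by some shift $t$. First I would invoke Claim~\ref{close-elements-in-consecutive-intervals-after-separation}: since $d(x,y) > \tfrac 1\eps$ (which follows from $d(x,y) \ge \tfrac 2\eps$), there are a level $r$, an index $k$, and a side $b \in \{1,2\}$ such that $x \in T_{r,b}[2k-1]$ and $y \in T_{r,b}[2k]$. Fix this $r, b$ and write $T = T_{r,b}$. All we now have to do is show that $\textsc{ShiftedTransitions}(T,\eps)$ produces a pair $(X_t,Y_t)$ with $x \in T'[2k-1] \subseteq X_t$ and $y \in T'[2k] \subseteq Y_t$ for the same index $k$.

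Let $z_{\min}$ be the minimal element of $T[2k]$, so $z_{\min} \le y$, and recall $x < z_{\min}$ by property (P2) (every element of $T[2k-1]$ is below every element of $T[2k]$). The membership conditions in Algorithm~\ref{alg:transitionintervals} are $x \le \eps 2^t z_{\min}$ and $y > 2^t z_{\min}$, i.e., we need a $t$ with
\[
  \frac{x}{\eps z_{\min}} \le 2^t < \frac{y}{z_{\min}}.
\]
I would show such an integer $t \in \{0,1,\ldots,\lceil \log_2 \tfrac 1\eps\rceil\}$ exists. The lower endpoint is at most $2^t$ for $t = \lceil \log_2(x/(\eps z_{\min}))\rceil$; since $x < z_{\min}$ we have $x/(\eps z_{\min}) < \tfrac 1\eps$, so this $t$ is at most $\lceil \log_2 \tfrac 1\eps \rceil$, and it is at least $0$ because we can always take $t \ge 0$ (if $x/(\eps z_{\min}) \le 1$, take $t = 0$). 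For this $t$ we have $2^t < \tfrac 2\eps \cdot \tfrac{x}{z_{\min}}$ by minimality of the ceiling. It remains to check $2^t < y/z_{\min}$: using $2^t < \tfrac 2\eps \cdot \tfrac x{z_{\min}} \le \tfrac 2\eps \cdot \tfrac{z_{\min}}{z_{\min}}$ is too weak, so instead I use the gap hypothesis directly — $d(x,y) \ge \tfrac 2\eps$ with $x<y$ gives $y \ge \tfrac 2\eps x$, hence $\tfrac 2\eps \cdot \tfrac{x}{z_{\min}} \le \tfrac{y}{z_{\min}}$, and therefore $2^t < \tfrac 2\eps \cdot \tfrac{x}{z_{\min}} \le \tfrac{y}{z_{\min}}$. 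Thus both conditions hold, so $x \in T'[2k-1]$ and $y \in T'[2k]$, and consequently $x \in X_t$, $y \in Y_t$ in that iteration.

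The one point requiring care — and what I expect to be the main obstacle — is the interaction between the two inequalities: we need a \emph{single} integer $t$ to simultaneously lie above $x/(\eps z_{\min})$ and strictly below $y/z_{\min}$, and we must confirm that the window $[x/(\eps z_{\min}),\, y/z_{\min})$ is wide enough (spans a factor of at least $1/\eps \ge 2$, hence contains a power of $2$) and that the required $t$ does not exceed the loop bound $\lceil \log_2 \tfrac 1\eps \rceil$. The width is exactly where the hypothesis $d(x,y)\ge \tfrac 2\eps$ (rather than merely $>\tfrac1\eps$) is used: it guarantees $\tfrac{y}{z_{\min}} \ge \tfrac{y}{x}\cdot\tfrac{x}{z_{\min}} \ge \tfrac 2\eps \cdot \tfrac{x}{z_{\min}} = 2 \cdot \tfrac{x}{\eps z_{\min}}$, so the interval has multiplicative width at least $2$ and the smallest power of $2$ at least $x/(\eps z_{\min})$ falls strictly inside it. Combined with $x/(\eps z_{\min}) < \tfrac 1\eps$ (from $x < z_{\min}$), this places the chosen $t$ in the range scanned by the loop, completing the argument.
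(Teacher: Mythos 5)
Your proposal is correct and follows essentially the same route as the paper's own proof: invoke Claim~\ref{close-elements-in-consecutive-intervals-after-separation} to place $x \in T_{r,b}[2k-1]$, $y \in T_{r,b}[2k]$, then pick $t$ as the smallest nonnegative integer with $x \le \eps 2^t z_{\min}$, bound $t \le \lceil \log_2 \tfrac 1\eps\rceil$ via $x < z_{\min}$, and use $d(x,y)\ge \tfrac 2\eps$ to get $y > 2^t z_{\min}$. The paper's writeup phrases this as ``let $t$ be minimal with $x \le \eps 2^t z_{\min}$'' and splits off the $t=0$ case explicitly, but the underlying choice of $t$ and the arithmetic (the window $[x/(\eps z_{\min}), y/z_{\min})$ has multiplicative width $\ge 2$ because of the $2/\eps$ gap) are identical to yours.
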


\begin{proof}
    By Claim~\ref{close-elements-in-consecutive-intervals-after-separation}, there are $r,k,b$ with $x \in T_{r,b}[2k-1]$ and $y \in T_{r,b}[2k]$. 
    Let $T = T_{r,b}$ and let $z_\textup{min}$ be the minimal number in $T[2k]$. 
    If $x \le \eps \cdot z_\textup{min}$, then in iteration $t=0$ we construct $T'[2k-1]$ containing $x$, and $T'[2k] = T[2k]$ contains $y$, so $x \in X_0$ and $y \in Y_0$. 
    Otherwise, let $t \in \mathbb{N}$ be minimal with $x \le \eps 2^t \cdot
z_\textup{min}$. By sortedness (see property (P2)) we have $x < z_\textup{min}$
and thus $t \le \ceil{\log_2 \receps}$. 
    Hence, in iteration $t$ the set $T'[2k-1]$ contains $x$. Moreover, by minimality of $t$ and $d(x,y) \ge \frac 2\eps$ we have $\eps 2^{t-1} \cdot z_\textup{min} < x \le \frac \eps 2 y$, and thus $y > 2^t \cdot z_\textup{min}$, so $y$ is contained in $T'[2k]$, yielding $x \in X_t, y \in Y_t$.
%
%
\end{proof}

\subsubsection{Proof of Distant Covering}

\begin{proof}[Proof of Lemma~\ref{distant-covering-lemma}]
    Properties (i) and (ii) follow immediately from Claims~\ref{covering-lemma-correctness} and \ref{covering-lemma-completness}.
    The bound $s = \Oh(\log{n}\log{\frac{1}{\eps}})$ on the number of constructed pairs $(X_\ell,Y_\ell)$ is immediate from the loops in Algorithms~\ref{alg:distantcoveringlemma} and \ref{alg:transitionintervals}.
    Finally, the running time of $\Oh(n\log{n}\log{\frac{1}{\eps}})$ is immediate from inspecting the pseudocode.
%
%
%
\end{proof}

It remains to translate Lemma~\ref{distant-covering-lemma} to the vector notation of Sum-to-Max-Covering.

\begin{corollary}[Distant Covering, Vector Variant]
    \label{distant-covering-cor}
    Given vectors $A,B \in \real^n$ and a parameter $\eps >
    0$, there are vectors 
    $A^{(1)},\ldots,A^{(s)}, B^{(1)},\ldots,B^{(s)} \in \real^n$ with $s = \Oh(\log n \log \frac{1}{\eps})$ such that:
    \begin{enumerate}[align=left, font=\normalfont, label=(\roman*)]
        \item for all $i,j \in [n]$ and all $\ell \in [s]\colon \; \max\{A^{(\ell)}[i],B^{(\ell)}[j]\} \ge (1-2\eps)(A[i] + B[j])$, and
        \item for all $i,j \in [n]$ if $\frac{A[i]}{B[j]} \not\in [\eps,1/\eps]$ then $\exists \ell \in [s]\colon \; \max\{A^{(\ell)}[i],B^{(\ell)}[j]\} \le A[i] + B[j]$.
    \end{enumerate}
    We can compute such vectors $A^{(1)},\ldots,A^{(s)}, B^{(1)},\ldots,B^{(s)}$ in time $\Oh(n \log n \log \frac{1}{\eps})$.
%
%
%
\end{corollary}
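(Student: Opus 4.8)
The plan is to instantiate the set variant, Lemma~\ref{distant-covering-lemma}, on the set of entries of $A$ and $B$, and then translate set membership back into $\infty$-masking of vectors. Let $Z \subseteq \real$ be the set of all \emph{finite} values occurring in $A$ or $B$, so $|Z| \le 2n$. I would apply Lemma~\ref{distant-covering-lemma} to $Z$ with parameter $2\eps$ in place of $\eps$ (note $2\eps < 1$ since $\eps < 1/10$), obtaining sets $X_1,\dots,X_s, Y_1,\dots,Y_s \subseteq Z$ with $s = \Oh(\log|Z|\log\tfrac{1}{2\eps}) = \Oh(\log n\log\tfrac 1\eps)$ such that (1) $d(X_\ell, Y_\ell) > \tfrac{1}{2\eps}$ for every $\ell$, and (2) every pair $x < y$ in $Z$ with $d(x,y) \ge \tfrac 1\eps$ satisfies $x \in X_\ell$, $y \in Y_\ell$ for some $\ell$. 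The point of the scaling to $2\eps$ is that it makes the threshold in (2) equal to $\tfrac{2}{2\eps} = \tfrac 1\eps$, which matches exactly the ``distant'' regime $\frac{A[i]}{B[j]} \notin [\eps,\tfrac1\eps]$.

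From these sets I build $2s$ vector pairs. For $\ell \in [s]$ set $A^{(\ell)}[i] := A[i]$ if $A[i] \in X_\ell$ and $A^{(\ell)}[i]:=\infty$ otherwise, and $B^{(\ell)}[j] := B[j]$ if $B[j] \in Y_\ell$ and $B^{(\ell)}[j]:=\infty$ otherwise. For $\ell \in \{s+1,\dots,2s\}$ do the same but with $X_{\ell-s}$ and $Y_{\ell-s}$ interchanged. The second batch is needed because Lemma~\ref{distant-covering-lemma}(ii) is one-directional (it places only the smaller of the two elements into $X_\ell$), whereas for a distant pair $(i,j)$ either $A[i]$ or $B[j]$ may be the smaller one; the two batches together cover both orientations.

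For property (i): fix $i,j,\ell$; if either of $A^{(\ell)}[i], B^{(\ell)}[j]$ equals $\infty$ the bound is trivial, and otherwise the two surviving values lie, in some order, in $X_{\ell'}$ and $Y_{\ell'}$ for $\ell'=\ell$ or $\ell-s$, so by (1) they are at distance $>\tfrac{1}{2\eps}$, hence $A[i]+B[j] < (1+2\eps)\max\{A^{(\ell)}[i],B^{(\ell)}[j]\}$; since $\tfrac{1}{1+2\eps}\ge 1-2\eps$ this rearranges to $\max\{A^{(\ell)}[i],B^{(\ell)}[j]\}\ge(1-2\eps)(A[i]+B[j])$. For property (ii): suppose $\frac{A[i]}{B[j]}\notin[\eps,\tfrac1\eps]$. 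If $\frac{A[i]}{B[j]}<\eps$ then $A[i]<B[j]$ and $d(A[i],B[j])=\tfrac{B[j]}{A[i]}>\tfrac1\eps$, so (2) yields $\ell\in[s]$ with $A[i]\in X_\ell$ and $B[j]\in Y_\ell$; for this $\ell$ both masks keep the original value, so $\max\{A^{(\ell)}[i],B^{(\ell)}[j]\}=\max\{A[i],B[j]\}\le A[i]+B[j]$. The case $\frac{A[i]}{B[j]}>\tfrac1\eps$ is symmetric via the swapped batch (so that $B[j]\in X_\ell$, $A[i]\in Y_\ell$). Entries equal to $\infty$ are not in $Z$ and satisfy both properties trivially. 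The running time is that of Lemma~\ref{distant-covering-lemma}, i.e.\ $\Oh(n\log n\log\tfrac1\eps)$, plus $\Oh(ns)=\Oh(n\log n\log\tfrac1\eps)$ for filling the vectors.

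I do not expect a genuine obstacle: Lemma~\ref{distant-covering-lemma} does the combinatorial work and the corollary is essentially a repackaging. The only two points requiring care are (a) choosing the parameter $2\eps$ so that the coverage threshold of Lemma~\ref{distant-covering-lemma}(ii) lines up with $\frac{A[i]}{B[j]}\notin[\eps,\tfrac1\eps]$ while (1) still gives a clean $1+2\eps$ slack, and (b) correctly bookkeeping the two orientations of a distant pair; both are routine once spelled out.
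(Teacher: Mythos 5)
Your proof is correct and takes essentially the same approach as the paper: instantiate the set variant (Lemma~\ref{distant-covering-lemma}) with parameter $2\eps$ on the set of entries, mask vectors by set membership, and double the family by swapping the roles of the $X$'s and $Y$'s to remove the $x<y$ asymmetry. The only cosmetic difference is in verifying property (i): you derive $A[i]+B[j] < (1+2\eps)\max\{\cdot\}$ from the $>\tfrac{1}{2\eps}$ separation and then use $\tfrac{1}{1+2\eps}\ge 1-2\eps$, whereas the paper directly proves $\max\{x,y\}\ge(1-\eps')(x+y)$; both give the same bound.
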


\begin{proof}
    Set $Z := \{A[i], B[i] \mid i \in [n]\}$ and run Lemma~\ref{distant-covering-lemma} on $(Z,\eps')$ with $\eps' := 2\eps$ to obtain subsets $X_1,\ldots,X_s \subseteq Z$ and $Y_1,\ldots,Y_s \subseteq Z$ with $s = \Oh(\log{n}\log{\frac{1}{\eps}})$. 
    We only double the number of subsets by considering $(X'_1,\ldots,X'_{2s}) := (X_1,\ldots,X_s,Y_1,\ldots,Y_s)$ and $(Y'_1,\ldots,Y'_{2s}) := (Y_1,\ldots,Y_s,X_1,\ldots,X_s)$. 
    We construct vectors $A^{(\ell)},B^{(\ell)}$ with $\ell \in [2s]$ by setting
    \begin{align*}
      A^{(\ell)}[i] := \begin{cases} A[i] & \text{if} \; A[i] \in X'_\ell, \\ \infty & \text{otherwise}, \end{cases} 
      \qquad
      B^{(\ell)}[i] := \begin{cases} B[i] & \text{if} \; B[i] \in Y'_\ell, \\ \infty & \text{otherwise.} \end{cases}
    \end{align*}
    The size and time bounds are immediate. 
    
    For any numbers $x,y\in \real$ with $d(x,y) > \frac 1\eps$ we claim that $\max\{x,y\} \ge (1-\eps)(x+y)$. Indeed, assume without loss of generality $x < \eps y$, then we have $\max\{x,y\} = y > (1-\eps)y + x \ge (1-\eps)(x+y)$. 
    
    Property (i) of Lemma~\ref{distant-covering-lemma} yields that $d(A^{(\ell)}[i], B^{(\ell)}[j]) > \frac 1{\eps'}$ for any $i,j,\ell$. Hence, we have
    \begin{displaymath} \max\{ A^{(\ell)}[i], B^{(\ell)}[j] \} \ge (1-\eps')(A^{(\ell)}[i] + B^{(\ell)}[j])
    \ge (1-\eps')(A[i]+B[j]) = (1-2\eps)(A[i]+B[j]), \end{displaymath}
    proving property (i) of the corollary.
    
    Note that by switching from $X_1,\ldots,X_s$ and $Y_1,\ldots,Y_s$ to $X'_1,\ldots,X'_{2s}$ and $Y'_1,\ldots,Y'_{2s}$
    we made Lemma~\ref{distant-covering-lemma} symmetric, and thus the requirement $x<y$ of its property (ii) is removed. 
    Thus, property (ii) of  Lemma~\ref{distant-covering-lemma} yields that for any $i,j$ with $d(A[i], B[j]) \ge \frac 2{\eps'} = \frac 1 \eps$ there exists $\ell \in [2s]$ with $A^{(\ell)}[i] = A[i]$ and $B^{(\ell)}[j] = B[j]$, and thus 
    \[ \max\{A^{(\ell)}[i], B^{(\ell)}[j]\} = \max\{A[i],B[j]\} \le A[i] + B[j], \] 
    proving property (ii) of the corollary.
\end{proof}

\subsection{Proof of Sum-To-Max-Covering}
\label{combining-covering}


The following variant of Sum-To-Max-Covering follows immediately from combining Close Covering
(Lemma~\ref{close-covering}) and Distant Covering
(Lemma~\ref{distant-covering-lemma}). Note that compared to Lemma~\ref{covering-lemma} there is an additional factor $1-2\eps$ on the right hand side of property (i).

\begin{lemma}[Weaker Sum-to-Max-Covering]
    \label{covering-lemma-variant}
    Given vectors $A,B \in \real^n$ and a parameter $\eps >
    0$, there are vectors $A^{(1)},\ldots,A^{(s)}, B^{(1)},\ldots,B^{(s)} \in \real^n$
    with $s = \Oh(\frac{1}{\eps}\log{\frac{1}{\eps}} + \log{n}\log{\frac{1}{\eps}})$ such that:
    \begin{enumerate}[align=left, font=\normalfont, label=(\roman*)]
    	 \item for all $i,j \in [n]$ and all $\ell \in [s]$: \[\max\{A^{(\ell)}[i],B^{(\ell)}[j]\} \ge (1-2\eps)(A[i] + B[j]),\]
    	 \item for all $i,j \in [n]$ there exists $\ell \in [s]$: \[\max\{A^{(\ell)}[i],B^{(\ell)}[j]\} \le (1+\eps)(A[i] + B[j]).\]
    \end{enumerate}
    We can compute such vectors $A^{(1)},\ldots,A^{(s)}, B^{(1)},\ldots,B^{(s)}$ in time $\Oh(\frac{n}{\eps}\log{\frac{1}{\eps}} +
    n\log{n}\log{\frac{1}{\eps}})$.
\end{lemma}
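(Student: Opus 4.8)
The plan is to simply concatenate the two families of vectors already produced by Close Covering (Lemma~\ref{close-covering}) and by the vector form of Distant Covering (Corollary~\ref{distant-covering-cor}), and then check that both guarantees of Lemma~\ref{covering-lemma-variant} survive the union. Concretely, I would first invoke Lemma~\ref{close-covering} on $(A,B,\eps)$ to obtain pairs $(A^{(1)},B^{(1)}),\ldots,(A^{(s_1)},B^{(s_1)})$ with $s_1 = \Oh(\frac1\eps\log\frac1\eps)$, and then invoke Corollary~\ref{distant-covering-cor} on $(A,B,\eps)$ to obtain pairs $(A^{(s_1+1)},B^{(s_1+1)}),\ldots,(A^{(s_1+s_2)},B^{(s_1+s_2)})$ with $s_2 = \Oh(\log n\log\frac1\eps)$. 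The output is the concatenation of these $s := s_1 + s_2 = \Oh(\frac1\eps\log\frac1\eps + \log n\log\frac1\eps)$ pairs, and the claimed running time is just the sum of the two running times, namely $\Oh(\frac n\eps\log\frac1\eps + n\log n\log\frac1\eps)$.

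For property (i), fix $i,j$ and a layer $\ell$. If $\ell$ comes from Close Covering, then by its property (i) we have $\max\{A^{(\ell)}[i],B^{(\ell)}[j]\} \ge (1-\eps)(A[i]+B[j]) \ge (1-2\eps)(A[i]+B[j])$; if $\ell$ comes from Distant Covering, its property (i) already gives $\max\{A^{(\ell)}[i],B^{(\ell)}[j]\} \ge (1-2\eps)(A[i]+B[j])$. Hence the lower bound of Lemma~\ref{covering-lemma-variant}(i) holds for every layer, the weaker constant $1-2\eps$ being exactly the worst of the two slacks.

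For property (ii), fix $i,j$ and split into two cases according to the ratio $\frac{A[i]}{B[j]}$. If $\frac{A[i]}{B[j]} \in [\eps,1/\eps]$ (a close pair), then Close Covering's property (ii) supplies a layer $\ell$ with $\max\{A^{(\ell)}[i],B^{(\ell)}[j]\} \le (1+\eps)(A[i]+B[j])$. Otherwise the pair is distant, and Corollary~\ref{distant-covering-cor}'s property (ii) supplies a layer $\ell$ with $\max\{A^{(\ell)}[i],B^{(\ell)}[j]\} \le A[i]+B[j] \le (1+\eps)(A[i]+B[j])$. In either case the required upper bound holds for some $\ell$, which finishes the verification.

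There is essentially no obstacle at this stage: all the real work has already been done inside the two component lemmas, and the only point that needs a moment of care is the bookkeeping of the two different slacks on the lower bound — a factor $1-\eps$ in the close case versus $1-2\eps$ in the distant case — together with the observation that $1-2\eps$ is precisely what Lemma~\ref{covering-lemma-variant}(i) demands. (This residual $1-2\eps$ factor, rather than a clean $1$, is exactly why Lemma~\ref{covering-lemma-variant} is only a \emph{weaker} form of Sum-to-Max-Covering; a short additional rescaling step, carried out next in this section, will upgrade it to the statement of Lemma~\ref{covering-lemma}.)
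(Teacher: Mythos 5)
Your proof is correct and is essentially identical to the paper's: the paper also proves Lemma~\ref{covering-lemma-variant} by simply running Close Covering (Lemma~\ref{close-covering}) and Distant Covering (Corollary~\ref{distant-covering-cor}) on $(A,B,\eps)$, concatenating the two families of pairs, and observing that the size/time bounds and both properties follow directly, with the weaker factor $1-2\eps$ in property (i) coming from the distant case exactly as you note.
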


\begin{proof}[Proof of Lemma~\ref{covering-lemma-variant}]
    Given vectors $A,B \in \real^n$ and a parameter $\eps > 0$, we simply run Close Covering and Distant Covering on $(A,B,\eps)$ and concatenate the results, see Algorithm~\ref{alg:covering-complete-weak}. Correctness as well as size and time bounds are immediate consequences of Lemma~\ref{close-covering} and Corollary~\ref{distant-covering-cor}. 
%
%
%
\begin{algorithm}
    \caption{$\textsc{WeakerSumToMaxCovering}(A,B,\eps)$.}
	\label{alg:covering-complete-weak}
\begin{algorithmic}[1]
    \State \Return $\textsc{DistantCovering}(A,B,\eps) \cup \textsc{CloseCovering}(A,B,\eps)$
\end{algorithmic}
\end{algorithm}
%
\end{proof}

\begin{proof}[Proof of Lemma~\ref{covering-lemma}]
  To prove the stronger variant given in the beginning of Section~\ref{sec:covering-lemma}, we have to remove the factor $1-2\eps$ on the right hand side of property (i) in Lemma~\ref{covering-lemma-variant}. To this end, given $A,B,\eps$, we run the construction from Lemma~\ref{covering-lemma-variant} on $A,B,\eps'$ with $\eps' := \frac \eps 5$, and we divide every entry of the resulting vectors by $1-2\eps'$, see Algorithm~\ref{alg:covering-complete}.
  For correctness, note that the division by $1-2\eps'$ removes the factor $1-2\eps'$ from property (i) in Lemma~\ref{covering-lemma-variant}, i.e., we obtain the claimed $\max\{A^{(\ell)}[i],B^{(\ell)}[j]\} \ge A[i] + B[j]$ for all $i,j,\ell$. For property (ii), note that the division by $1-2\eps'$ leaves us with $\max\{A^{(\ell)}[i],B^{(\ell)}[j]\} \le \frac{1+\eps'}{1-2\eps'}(A[i] + B[j])$ for all $i,j$ and some $\ell$. Using $\eps' = \frac \eps 5$ and $\frac{1+\eps/5}{1-2 \eps/5} \le 1+\eps$ for any $\eps \in (0,1]$ finishes the proof.
  \begin{algorithm}
    \caption{$\textsc{SumToMaxCovering}(A,B,\eps)$.}
	\label{alg:covering-complete}
\begin{algorithmic}[1]
    \State \Return $\frac 1{1-2\eps/5} \cdot \textsc{WeakerSumToMaxCovering}(A,B,\frac \eps 5)$
\end{algorithmic}
\end{algorithm}
\end{proof}

\section{Strongly Polynomial Approximation for Undirected APSP}
\label{strong-apsp}

In this section, we present a strongly polynomial $(1+\eps)$-approximation for APSP on undirected graphs that runs in time $\tOh(n^\omega / \eps)$. 
Contrary to the title of this paper our algorithm will use the scaling technique, but we combine it with edge contractions and amortized analysis to avoid the factor $\log W$; this is inspired by \citet{eva-tardos-strongly} and uses similar edge contraction arguments as~\cite{KleinS92,Cohen97}.
We start by describing the previously fastest approximation algorithm for APSP (Section~\ref{sec:zwick-algorithm}), and then present our adaptations (Section~\ref{undirected-apx-apsp}).

\subsection{Zwick's Approximation for APSP}
\label{sec:zwick-algorithm}

\citet{zwick-apsp} obtained an $\Ot(\frac{n^\omega}{\eps} \log{W})$-time $(1+\eps)$-approximation algorithm for (directed or undirected) APSP as follows. The first step is to reduce approximate APSP to approximate \minprod, see Theorem~\ref{thm:red-apsp-minprod}. It is well-known that \minprod on $n \times n$-matrices with entries in $\{1,\ldots,W\}$ can be solved exactly in time $\tOh(W n^\omega)$. Zwick utilized this fact via \emph{adaptive scaling} to realize his second step, as shown in Algorithm~\ref{alg:approx-minprod}.

\begin{algorithm}
    \caption{$\textsc{scale}(A,q,\eps)$.}
	\label{alg:scaling}
\begin{algorithmic}[1]
	\State $A'[i,j] = \begin{cases} \ceil{A[i,j]/(\eps 2^q)} & \text{if} \; A[i,j] \le 2^q \\
		\infty & \text{otherwise} \\
	\end{cases}$
	\State \Return $A'$
\end{algorithmic}
\end{algorithm}

\begin{algorithm}
    \caption{$\textsc{Zwick-Apx-MinProd}(A,B,\eps)$.}
	\label{alg:approx-minprod}
\begin{algorithmic}[1]
    \State Initialize $\tilde C[i,j] := \infty$ for all $i,j$
    \State Let $W$ be the largest entry of $A,B$
	\For {$q=0,1,2,\ldots, \ceil{\log{W}}+1$}
	  \State $A' = \textsc{scale}(A,q,\eps)$ \Comment{Scale matrix $A$ so entries are from $\{0,\ldots,\ceil{\receps}\}$}
	  \State $B' = \textsc{scale}(B,q,\eps)$
	  \State $C' = \textsc{MinPlusProd}(A',B')$ \Comment{This works in time $\Ot(\frac{n^\omega}{\eps})$}
	  \State $\tilde C[i,j] = \min \{C[i,j], \eps 2^q C'[i,j]\}$ for all $i,j$
    \EndFor
    \State \Return $\tilde C$
\end{algorithmic}
\end{algorithm}

In each iteration $q$ of Algorithm~\ref{alg:approx-minprod} the entries that are
greater than $2^q$ are ignored (they are replaced by $\infty$). The remaining
entries are scaled and rounded to lie in the range $\{1,\ldots,\ceil{\receps}\}$ by
Algorithm~\ref{alg:scaling}. This yields scaled matrices $A',B'$ with integer entries bounded by $\Oh(\receps)$, so their \minprod $C'$ can be computed in time $\tOh(\frac{n^\omega}{\eps})$. The output $\tilde  C$ of the algorithm is the entry-wise minimum of all computed products $C'$ over all $q$. 
The total running time is $\tOh(\frac{n^\omega}{\eps} \log W)$.

Since we always round up, one can see that each entry of $\tilde C$ is at least
the corresponding entry of the correct \minprod $C$ of $A,B$. For the other
direction, for all $i,j$ there is an iteration $q$ such that $2^{q-1} < C[i,j]
\le 2^q$. Consider any $k$ with $C[i,j] = A[i,k] + B[k,j]$. Then $A[i,k],B[k,j]
\le 2^q$, so they are not set to $\infty$. We obtain that $C'[i,j] \le A'[i,k] +
B'[k,j] \le \frac{A[i,k] + B[k,j]}{\eps 2^q} + 2$, and thus $\tilde C[i,j] \le
\eps 2^q C'[i,j] \le C[i,j] + \eps 2^{q+1} \le (1+4\eps) C[i,j]$. Replacing $\eps$ by $\eps/4$ yields the claimed approximation.


\subsection{Undirected APSP in Strongly Polynomial Matrix-Multiplication Time}
\label{undirected-apx-apsp}

We now essentially remove the $\log W$-factor from Zwick's algorithm for undirected graphs, proving the following restated theorem from the introduction.

\begingroup
\def\thetheorem{\ref{thm:undir-apsp}}
\begin{theorem}
    $(1+\eps)$-Approximate Undirected APSP is in strongly polynomial time\footnote{Here, by time we mean the number of \emph{arithmetic operations} performed on a RAM machine. The \emph{bit complexity} of the algorithm is bounded by the number of arithmetic operations times $\log \log W$ (up to terms hidden by $\tOh$).} $\Ot(\frac{n^\omega}{\eps})$.
\end{theorem}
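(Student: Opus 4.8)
The plan is to take Zwick's scaling-based approximation for \minprod (Algorithm~\ref{alg:approx-minprod}), which via repeated squaring (Theorem~\ref{thm:red-apsp-minprod}) yields a $(1+\eps)$-approximation for APSP in time $\tOh(\frac{n^\omega}{\eps}\log W)$, and remove the $\log W$-factor by interleaving \emph{edge contractions} into the scaling procedure. The key structural observation, going back to~\cite{eva-tardos-strongly}, is that over all scales the total ``useful work'' is bounded: once we are working at scale $2^q$, any edge of weight $\ll \eps 2^q / n$ contributes negligibly to distances of order $2^q$ and can be contracted; conversely, any pair of vertices whose distance has already been approximated well at a smaller scale need not be reconsidered. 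Since the graph is \emph{undirected}, contracting a light edge $uv$ is safe: it merges $u$ and $v$ into one vertex and changes all distances by at most the (tiny) weight of $uv$, and we can later recover the original distances by adding back $\Oh(\eps 2^q)$-type corrections. This is exactly why the directed case fails here.

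Concretely, I would process scales $q = 0, 1, 2, \ldots, \lceil \log W\rceil$ as in Algorithm~\ref{alg:approx-minprod}, but maintain a current working graph $G_q$ (initially $G$) on a shrinking vertex set. At scale $q$, first contract every edge of weight at most $\eps 2^q / n$ (equivalently, merge connected components of the subgraph of such edges), obtaining $G_q$ on $n_q$ vertices; record, for each original vertex, which super-vertex it maps to, together with the small additive offset incurred. Then run $\Oh(\log n)$ rounds of $(1+\eps')$-approximate \minprod with $\eps' = \Theta(\eps/\log n)$ on the rounded adjacency matrix of $G_q$ restricted to entries $\le 2^q$ (the $\textsc{scale}$ step), which costs $\tOh(\frac{n_q^\omega}{\eps})$. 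Take the entry-wise minimum with previously computed estimates, translating super-vertex distances back to original vertices using the recorded offsets. The correctness argument mirrors the one sketched after Algorithm~\ref{alg:approx-minprod}: for a pair $i,j$ with $2^{q-1} < d_G(i,j) \le 2^q$, a shortest path uses at most $n$ edges, so contracting edges of weight $\le \eps 2^q/n$ changes its length by at most $\eps 2^q < 2\eps\, d_G(i,j)$; the scaled \minprod then recovers the distance in $G_q$ up to a further $(1+\Oh(\eps))$ factor, and adding back the offsets gives a $(1+\Oh(\eps))$-approximation overall. Rescaling $\eps$ by a constant yields the claimed bound.

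The running time is where the amortization enters, and it is the step I expect to be the main obstacle to state cleanly. The naive bound $\sum_q \tOh(\frac{n_q^\omega}{\eps})$ is not obviously $\tOh(\frac{n^\omega}{\eps})$ unless the $n_q$ shrink fast enough. The point is that a vertex of $G$ ``survives'' at scale $q$ only if it is still far (distance $> \eps 2^q/n$, roughly) from its neighbors, and one shows that the total weight, or more precisely a suitable potential, forces $\sum_q n_q^\omega = \Oh(n^\omega \log n)$ — or, more robustly, that the relevant scales for any fixed vertex span only an $\Oh(\log n)$-window, so each vertex participates in only $\Oh(\log n)$ of the matrix products at essentially its ``own'' scale. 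Either way, one argues that after contracting light edges at scale $q$, a constant fraction of vertices disappear every $\Oh(\log n + \log(1/\eps))$ scales, so that $n_q$ decays geometrically along a coarsened sequence of scales and $\sum_q n_q^\omega = \tOh(n^\omega)$. Combined with the $\Oh(\log n)$ squaring rounds per scale and the $\tOh(\frac{1}{\eps})$ per \minprod call, this gives total time $\tOh(\frac{n^\omega}{\eps})$.

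There are two subtleties to handle carefully. First, recovering original distances from contracted ones: when we contract an edge of weight $w \le \eps 2^q/n$ at scale $q$, a distance estimate computed in $G_q$ between super-vertices $[u],[v]$ underestimates $d_G(u,v)$; we must add back, for the endpoints, the distances within the contracted clusters, which are themselves $\Oh(\eps 2^q)$ by the contraction criterion and can be maintained recursively (or simply bounded crudely, since we only need a multiplicative guarantee at the scale where $d_G(u,v) \approx 2^q$). Second, one must verify that contracting at scale $q$ does not destroy information needed at scale $q' > q$: this is fine because contractions only ever decrease distances, and at the larger scale $2^{q'}$ the contracted edge weight $\le \eps 2^q/n \le \eps 2^{q'}/n$ is still within the tolerance, so $G_{q'}$ can be obtained from $G_q$ by further contractions — the sequence of working graphs is monotone under contraction. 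Finally, as in Zwick's analysis, replacing $\eps$ by $\eps/c$ for a suitable constant $c$ absorbs the $\Oh(\eps)$ slack, and the $\log\log W$ factor in bit complexity (as stated in the footnote) comes only from arithmetic on $\Oh(\log\log W)$-bit exponents of the floating-point weights.
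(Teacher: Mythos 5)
Your high-level strategy — scaling combined with contracting light edges, inspired by Tardos, exploiting that contraction is only safe in undirected graphs — is exactly the paper's, and your correctness discussion is essentially sound. The genuine gap is in the running-time amortization, which you flag as the main obstacle and then resolve incorrectly. You claim that ``a constant fraction of vertices disappear every $\Oh(\log n + \log(1/\eps))$ scales,'' so $n_q$ decays geometrically and $\sum_q n_q^\omega = \tOh(n^\omega)$. This is false. Consider a path $1 - 2 - \cdots - n$ with edge weights $2^{Ki}$ for a very large constant $K$: at scale $q = 2^{Ki}$ only edges $1,\ldots,i-1$ (roughly) have become light enough to contract, so $n_q \approx n - i$ decreases only linearly across the non-void scales. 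Since you run matrix multiplication on the full contracted vertex set (large-weight edges are set to $\infty$ but their endpoints remain in the matrix), each non-void scale still costs $\Omega(n_q^\omega)$, and summing over the $\Omega(n)$ non-void scales gives $\Omega(n^{\omega+1})$, not $\tOh(n^\omega)$.

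The paper's fix inverts the construction: start from an \emph{empty} working graph $H$, at scale $q$ add only the edges of $G$ with weight in $[q,2q)$, round, contract zero-weight edges, and then run Zwick's approximate APSP \emph{separately on each nontrivial connected component} of $H$ (isolated vertices cost nothing, and void iterations are skipped). The amortization is over pairs, not vertices: a pair $(u,v)$ can be ``connected in $H$ but not yet contracted together'' for only $\Oh(\log(n/\eps))$ scales, because once connected by a path of total weight $< 2nq$ all its edges get rounded to $0$ by scale $\Oh(nq/\eps)$. Hence $\sum_{q,i} n_{q,i}^2 = \Oh(n^2 \log(n/\eps))$ (this sum counts connected pairs per iteration), and since $\omega \ge 2$ and each $n_{q,i}\le n$, one gets $\sum_{q,i} n_{q,i}^\omega \le n^{\omega-2}\sum_{q,i} n_{q,i}^2 = \tOh(n^\omega)$. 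In the path example above this accounting is exactly what saves you: each non-void iteration has a single nontrivial component of size $2$, so the per-iteration matrix cost is $\Oh(1)$. Without restricting to components of the incrementally-built $H$, your $n_q$-based bound cannot work.
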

\addtocounter{theorem}{-1}
\endgroup

\begin{proof}[Proof of Theorem~\ref{thm:undir-apsp}]
The algorithm proceeds in iterations $q = 1, 2, 4,\ldots$ up to the largest power of 2 bounded by $nW$. In iteration $q$, the goal is to find all shortest paths of length in $[q,2q)$. For this, all edges of $G$ of weight at least $2q$ are irrelevant. Therefore, in the first iteration we start with an empty graph $H$, and in iteration $q$ we add all edges of $G$ with weight in $[q,2q)$ to $H$. We then round down all edge weights to multiples of $q \eps/n$. This may result in edges of weight 0, which we contract (this crucially uses that we consider undirected graphs). Finally, we run Zwick's algorithm on $H$ and update the corresponding distances. Specifically, if we compute a distance $\tilde d_H(u,v) \in [(1-\eps)q,(1+\eps)2q)$ in~$H$, then we iterate over all vertices $i$ in $G$ that were contracted to $u$ in $H$, and similarly over all $j$ that were contracted to $v$, and we update our estimated distance $D[i,j]$. See Algorithm~\ref{alg:approx-uAPSP}.

\begin{algorithm}
	\caption{$\textsc{ApproximateUndirectedAPSP}(G,\eps)$.}
	\label{alg:approx-uAPSP}
\begin{algorithmic}[1]
	\State Initialize $H$ to be the graph with $n$ isolated nodes, i.e., $H = (V(G),\emptyset)$
	\State Initialize $D[i,j] := \infty$ for all $i,j \in V(G)$
	\For {$q=1,2,4,\ldots, 2^{\lfloor \log(nW) \rfloor}$}
	  \Comment{Find all shortest paths of length in $[q,2q)$:}
	  \State Add all edges of $G$ with weight in $[q,2q)$ to $H$
	  \State Round down all edge weights of $H$ to multiples of $\frac{q \eps}{n}$
	  \State Contract all edges of $H$ with weight 0
	  \State Run Zwick's $(1+\eps)$-approximation for APSP on $H$, obtaining distances $\tilde d_H(u,v)$
	  \For{all nodes $u,v$ of $H$ with $\tilde d_H(u,v) \in [(1-\eps)q,(1+\eps)2q)$} 
	    \State $D[i,j] := \min\{D[i,j], \tilde d_H(u,v)\}$ for every node $i$ ($j$) of $G$ that was contracted to $u$ ($v$), 
      \EndFor
    \EndFor
    \State Return $D$
\end{algorithmic}
\end{algorithm}

\paragraph{Correctness}
Denote by $d_G(i,j)$ the correct distance between $i$ and $j$ in $G$, and by $d_H(.,.)$ the correct distance in $H$. The computed approximation satisfies $d_H(u,v) \le \tilde d_H(u,v) \le (1+\eps) d_H(u,v)$ for any $u,v \in V(H)$. 
%
%
\begin{claim}
	\label{correctness-apsp-claim}
  Consider $i,j \in V(G)$ that have been contracted to $u,v$ in $H$, respectively. If we have $\tilde d_H(u,v) \ge (1-\eps)q$ then $\tilde d_H(u,v) \ge (1-\eps) d_G(i,j)$.
\end{claim}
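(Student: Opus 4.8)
The plan is to follow one distance through the three transformations that build $H$ in iteration $q$ (adding edges of weight below $2q$, rounding weights down to multiples of $\delta := q\eps/n$, and contracting the resulting weight‑$0$ edges), and to invoke the hypothesis $\tilde d_H(u,v) \ge (1-\eps)q$ exactly to absorb the additive rounding error when $d_G(i,j)$ is small.

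First I would reduce to bounding the \emph{true} distance $d_H(u,v)$: since Zwick's procedure returns a $(1+\eps)$-approximation, $\tilde d_H(u,v) \ge d_H(u,v)$, so it suffices to show $d_H(u,v) > d_G(i,j) - q\eps$. Let $\hat H$ be the subgraph of $G$ consisting of all edges inserted into $H$ by iteration $q$, i.e.\ all edges of $G$ of weight below $2q$, carrying their \emph{original} weights, and let $\hat H_r$ be $\hat H$ with every weight rounded down to the nearest multiple of $\delta$. Two observations pin down $d_H(u,v)$. (a) Contracting the weight‑$0$ edges of $\hat H_r$ does not change distances once contracted vertices are identified, so $d_H(u,v) = d_{\hat H_r}(i,j)$: a walk in $H$ lifts to a walk in $\hat H_r$ of equal weight by inserting weight‑$0$ edges inside contracted classes (which are connected via those edges), and conversely any $i$–$j$ walk in $\hat H_r$ projects to a $u$–$v$ walk in $H$ of equal weight. (b) Rounding down shortens any shortest path by less than $q\eps$: pick a \emph{simple} rounded‑shortest $i$–$j$ path $P$ in $\hat H_r$ (removing a cycle never increases its rounded length), so $P$ has at most $n-1$ edges, each losing less than $\delta$ under the rounding, whence $d_{\hat H_r}(i,j)$, which equals the rounded length of $P$, is more than $\bigl(\sum_{e\in P} w_e\bigr) - (n-1)\delta \ge d_{\hat H}(i,j) - q\eps$. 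Since $\hat H$ is a subgraph of $G$ we have $d_{\hat H}(i,j) \ge d_G(i,j)$, and chaining these gives
\[ \tilde d_H(u,v) \ge d_H(u,v) = d_{\hat H_r}(i,j) > d_G(i,j) - q\eps. \]
(Here $d_G(i,j)$ is finite since $i,j$ lie in a common component of $G \supseteq H$, and the case $\tilde d_H(u,v) = \infty$ is vacuous.)

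The proof then finishes with a two‑case split on the size of $d_G(i,j)$. If $d_G(i,j) \le q$, then by the hypothesis $\tilde d_H(u,v) \ge (1-\eps)q \ge (1-\eps)\,d_G(i,j)$. If $d_G(i,j) > q$, then $q\eps < \eps\, d_G(i,j)$, so the displayed inequality yields $\tilde d_H(u,v) > d_G(i,j) - \eps\,d_G(i,j) = (1-\eps)\,d_G(i,j)$. In both cases $\tilde d_H(u,v) \ge (1-\eps)\,d_G(i,j)$. The step I expect to need the most care is justifying the ``normal form'' of $H$ used in (a)/(b): because all $q$ are powers of $2$, the rounding grids at successive iterations are nested, so repeated re‑rounding collapses to a single rounding down to multiples of $q\eps/n$, and an edge is contracted by iteration $q$ iff its original weight is below $q\eps/n$; once this is pinned down, the distance identities are routine, and the hypothesis $\tilde d_H(u,v)\ge(1-\eps)q$ is precisely what makes the additive error $q\eps$ tolerable.
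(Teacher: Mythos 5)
Your proof is correct and follows essentially the same approach as the paper: you establish $d_H(u,v) \ge d_G(i,j) - q\eps$ by uncontracting a simple shortest path and bounding the rounding loss by $q\eps$, then split on whether $d_G(i,j) \le q$ (where the hypothesis directly gives the bound) or $d_G(i,j) > q$ (where $q\eps < \eps\, d_G(i,j)$). The only difference is that you explicitly verify the ``normal form'' of $H$ (iterated rounding on a nested power-of-two grid collapses to a single rounding; contraction happens exactly when the original weight drops below $q\eps/n$), a point the paper treats implicitly.
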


Since we only update distances when $\tilde d_H(u,v) \in [(1-\eps)q,(1+\eps)2q)$, by this claim the output of Algorithm~\ref{alg:approx-uAPSP} satisfies $D[i,j] \ge (1-\eps) d_G(i,j)$ for all $i,j \in V(G)$. 

\begin{proof}[Proof of Claim~\ref{correctness-apsp-claim}]
  Consider a path $P$ from $u$ to $v$ in $H$ realizing $d_H(u,v)$. Uncontract all contracted edges of $H$ to obtain a graph $H'$. Since we only contracted edges of (rounded) weight 0, the path $P$ corresponds to a path $P'$ from $i$ to $j$ in $H'$ of (rounded) length $d_H(u,v)$. Since we can assume $P'$ to be a simple path and we rounded down edge weights to multiples of $q \eps/n$, in the graph $G$ path $P'$ has length at most $d_H(u,v) + q\eps$. Hence, we have $d_G(i,j) \le d_H(u,v) + q\eps$.

  Note that the claim is trivial if $d_G(i,j) \le q$, so assume $d_G(i,j) > q$. Then we conclude by bounding
  $\tilde d_H(u,v) \ge d_H(u,v) \ge d_G(i,j) - q\eps > ( 1 - \eps ) d_G(i,j)$. 
\end{proof}

It remains to show $D[i,j] \le (1+\eps) d_G(i,j)$ for all $i,j \in V(G)$.
Consider the iteration $q$ with $d_G(i,j) \in [q,2q)$. 
Note that all edges of any shortest path between $i$ and $j$ are added in or before iteration $q$.
Let $u,v$ be the nodes that $i,j$ have been contracted to in $H$ in iteration $q$, respectively. 
Since rounding down to multiples of $q\eps/n$ reduces the distance by at most $q\eps$, and since contracting edges of rounded weight 0 does not change any distances, we have $d_H(u,v) \in [(1-\eps)d_G(i,j), d_G(i,j)]$. This yields $d_H(u,v) \in [(1-\eps)q,2q)$, and in particular we have $u \ne v$. By the properties of the approximation (i.e., $d_H(u,v) \le \tilde d_H(u,v) \le (1+\eps) d_H(u,v)$), we obtain $\tilde d_H(u,v) \in [(1-\eps)q,(1+\eps)2q)$. This triggers the update of $D[i,j]$, which yields $D[i,j] \le \tilde d_H(u,v) \le (1+\eps) d_H(u,v) \le (1+\eps) d_G(i,j)$. 

In total, we obtain that $(1-\eps) d_G(i,j) \le D[i,j] \le (1+\eps) d_G(i,j)$ for all $i,j \in V(G)$. By dividing all output entries $D[i,j]$ by $(1-\eps)$, we may instead obtain a ``one-sided'' approximation of the form $d_G(i,j) \le D'[i,j] \le (1+\Oh(\eps)) d_G(i,j)$.

\paragraph{Running Time}
We call an iteration $q$ \emph{void} if $G$ has no edge with weight in $[q,2q)$ and $H$ is empty (i.e., $H$ consists of isolated vertices). 
Observe that void iterations do not change $H$ or $D$.
In order to avoid the $\log W$-factor of a naive implementation of Algorithm~\ref{alg:approx-uAPSP}, we skip over all void iterations.

In iteration $q$, let $C_{q,1},\ldots,C_{q,k(q)}$ be all connected components of $H$ of size at least 2, and let $n_{q,1},\ldots,n_{q,k(q)}$ be their sizes (i.e., number of vertices). Note that instead of running Zwick's algorithm on $H$, it suffices to run it on each $C_{q,i}$. Moreover, after rounding, the ratio of the largest to smallest weight in $H$ is $\Oh(n/\eps)$, and thus the $\log W$ factor of Zwick's algorithm is $\Oh(\log(n/\eps))$. Hence, we can bound the contribution of Zwick's algorithm to our running time by $\tOh\big(\sum_{q,i} (n_{q,i}^\omega / \eps) \log (n/\eps)\big)$. 


The life-cycle of every pair of vertices $u,v
\in V(G)$ can be described by the following states: (1) $u$ and $v$
are not connected in $H$; (2) $u$ and $v$ are connected in $H$; (3) $u$ and $v$ are
contracted into the same vertex of $H$. Observe that each pair $u,v \in V(G)$ can be in state (2) for at most
$\Oh(\log(n/\eps))$ iterations. Indeed, if $u$ and $v$ are connected in
iteration $q$, then in iteration $q' > \frac{n}{\eps} q$ they have been
contracted. It follows that $\sum_{q,i} n_{q,i}^2 = \Oh(n^2 \log(n/\eps))$, since the former counts the pairs of connected nodes in $H$, summed over all iterations $q$.

Combining this with our running time bound of $\tOh\big(\sum_{q,i} (n_{q,i}^\omega / \eps) \log (n/\eps)\big)$, the fact $\omega \ge 2$, and Jensen's inequality, yields a time bound of $\tOh(n^\omega / \eps)$. This bounds the running time spent in calls to Zwick's algorithm. Most other parts of our algorithm can be seen to take time $\tOh(n^2)$ in total. 

A subtle point is the update of matrix entries $D[i,j]$ in line 9 of Algorithm~\ref{alg:approx-uAPSP}. 
Consider any $i,j$ and let $u,v$ be the vertices that $i,j$ have been contracted to in $H$ in iteration $q$, respectively. 
Note that $D[i,j]$ can only change if $u \ne v$ and $u$ and $v$ are connected in $H$. This situation can only happen for $\Oh(\log(n/\eps))$ iterations, since then $u$ and $v$ will be contracted to the same vertex. Hence, in total updating $D$ takes time $\Oh(n^2 \log(n/\eps))$. This finishes the proof.
\end{proof}

\section{Strongly Polynomial Approximation for Graph Characteristics}
\label{applications}

One of the fundamental challenges in network science is the identification of ``important'' or ``central'' nodes
in a network. Different \emph{graph
characteristics} have been proposed to capture this notion~\cite{freeman1978centrality}.  For example the \emph{Median} of a graph
is a node that minimizes the sum of the distances to all other nodes in graph,
the \emph{Center} of a graph is a node that minimizes the maximum distance to
any other node (this distance is called Radius) and the \emph{Diameter} of a
graph is the distance of the furthest pair of vertices in the graph. Centrality
measures are actively generalized to weighted
graphs~\cite{weighted-centrality}. In this section, we present a 
simple argument that yields strongly polynomial approximation schemes for these problems. The following theorem is restated from the introduction.

\Applications*

\citet{abboud-centrality} observed that \emph{Diameter, Radius} and \emph{Median} admit
$\Ot(\frac{n^\omega}{\eps}\log{W})$-time approximation schemes via Zwick's approximation of APSP. Similarly, \citet{minweightcycles} observed that
\emph{Minimum Weight Triangle} admits an
$\Ot(\frac{n^\omega}{\eps}\log{W})$-time approximation scheme. They used this as a black-box to show that
\emph{Minimum Weight Cycle} (both in directed and undirected graphs) admits
an $\Ot(\frac{n^\omega}{\eps}\log{W})$-time approximation scheme. 

\begin{proof}[Proof of Theorem~\ref{thm:applications}]
    Let $G$ be a given graph. For any number $w \in \real$, define $G_{w}$ as
    the graph $G$ where we remove all edges of weight $>w$ and change the weight
    of all remaining edges to $w$. On $G_{w}$ we can compute a 2-approximation
    for each of the considered problems in time $\Ot(n^\omega)$ (since $\eps =
    1$ is constant and there are only two different edge weights, so also W is
    constant). Note that if the result on $G_{w}$ is infinite, then the solution
    value on $G$ is greater than $w$, as we need to include at least one edge of
    weight greater than $w$. Moreover, if the solution value on $G_{w}$ is
    finite, then it is at most $w \cdot n^2$, since this is the total weight of
    all edges in $G_{w}$. In particular, this means that the solution value of $G$ is at most
    $w n^2$.
    
    We use this as follows. First, we sort all edge weights of $G$ and perform
    binary search to determine the smallest edge weight $w$ of $G$ such that the
    solution value on $G_{w}$ is finite. It follows that the solution value on
    $G$ is in $[w,w n^2]$, so we have an $\Oh(n^2)$-approximation.  Now we round
    up all edge weights of $G$ to multiples of $w\eps/n^2$. This changes the
    total edge weight of $G$ by at most $\eps w$, and thus also the weight of an
    optimal solution by at most a multiplicative factor $1+\eps$.  The ratio
    between the largest and smallest weight in the resulting graph $G'$ is at
    most $W \le \frac{n^4}{\eps}$. Hence the 
    $\Ot(\frac{n^\omega}{\eps} \log{W})$-time approximation scheme runs in time $\Ot(\frac{n^\omega}{\eps})$ on $G'$. 
\end{proof}

\section{Strongly Polynomial Approximation for \minconv}
\label{minconv}

In this section, we consider sequences $A \in \real^n$ and index their entries by $A[0],\ldots,A[n-1]$. 
Given two sequences $A,B\in \real^n$, the \emph{convolution problem in the
$(\otimes,\oplus)$-semiring} is to compute the sequence $C \in \real^n$ with $C[k] = \bigoplus_{i+j=k} (A[i] \otimes
B[j])$
for all $0 \le k < n$. 
Clearly, the problem can be solved using $\Oh(n^2)$ ring operations.
In the standard $(\cdot,+)$-ring, the problem can be solved in time $\Oh(n\log{n})$ by Fast Fourier Transform (FFT). 

\emph{\minmaxconv} is the problem of computing convolution in the (min,max)-semiring.
\citet{first-minmaxconv} was the first to design a subquadratic-time algorithm for this problem, obtaining time 
$\Oh(n^{3/2}\sqrt{\log{n}})$. He conjectured that his algorithm can be improved to time
$\Ot(n)$ but so far no improvement has been found. 


\emph{\minconv} is the problem of computing convolution in the (min,+)-semiring.
Computing \minconv in time $\Oh(n^{2-\delta})$ for any $\delta>0$ is a major open
problem~\cite{cygan-icalp2017,kunnemann-icalp2017}. \citet{tree-sparsity} were
the first to study $(1+\eps)$-approximation algorithms for \minconv. They obtained an $\Oh(\frac{n}{\eps^2}\log{n}\log^2{W})$-time algorithm, which they used to
design an approximation algorithm for \sparsity. Subsequently, their running time was improved to $\Oh(\frac{n}{\eps}\log(n/\eps)\log{W})$~\cite{partition}.


\medskip
In this section, we start with a simple $(1+\eps)$-approximation algorithm for \minconv that directly follows from our Sum-To-Max-Covering and runs in time $\tOh(n^{3/2} / \eps)$, see Theorem~\ref{thm:minconvapxsimple}. This is the first strongly polynomial $(1+\eps)$-approximation algorithm for this problem (with a running time of $\Oh(n^{2-\delta})$ for any $\delta > 0$).
We then prove an equivalence of approximate \minconv and exact \minmaxconv, see Theorem~\ref{thm:conv-equ}.
Finally, we use more problem-specific arguments to obtain an improved approximation algorithm running in time $\tOh(n^{3/2} / \eps^{1/2})$, see Theorem~\ref{apx-minconv}.
%

\subsection{Simple Approximation Algorithm}

Direct application of our Sum-to-Max-Covering yields the following strongly polynomial approximation algorithm for \minconv, similarly as for APSP.

\begin{theorem}
    \label{thm:minconvapxsimple}
    $(1+\eps)$-Approximate \minconv can be solved in strongly polynomial time
    $\Ot(n^{3/2}/\eps)$.
\end{theorem}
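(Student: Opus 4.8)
The plan is to mirror the proof of Theorem~\ref{thm:apx-minprod} verbatim, replacing \minprod by \minconv and \minmaxprod by \minmaxconv. Concretely, given sequences $A,B \in \real^n$ on which we want to compute $C[k] = \min_{i+j=k}(A[i]+B[j])$, I would first regard $A,B$ as vectors in $\real^n$ and apply Sum-to-Max-Covering (Theorem~\ref{covering-lemma}), obtaining vectors $A^{(1)},\ldots,A^{(s)},B^{(1)},\ldots,B^{(s)} \in \real^n$ with $s = \Oh(\frac 1\eps \polylog(n/\eps))$ and
\[ A[i] + B[j] \;\le\; \min_{\ell \in [s]} \max\{A^{(\ell)}[i], B^{(\ell)}[j]\} \;\le\; (1+\eps)(A[i]+B[j]) \qquad \text{for all } i,j \in [n]. \]
For each layer $\ell \in [s]$ I would then compute the \minmaxconv $C^{(\ell)}$ of $A^{(\ell)},B^{(\ell)}$, i.e.\ $C^{(\ell)}[k] = \min_{i+j=k} \max\{A^{(\ell)}[i],B^{(\ell)}[j]\}$, and output the entrywise minimum $\tilde C[k] := \min_{\ell \in [s]} C^{(\ell)}[k]$.

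Correctness is immediate from the covering guarantee applied termwise, exactly as in Theorem~\ref{thm:apx-minprod}: since $\tilde C[k] = \min_{\ell \in [s]} \min_{i+j=k} \max\{A^{(\ell)}[i],B^{(\ell)}[j]\}$ while $C[k] = \min_{i+j=k}(A[i]+B[j])$, the two-sided inequality above yields $C[k] \le \tilde C[k] \le (1+\eps)C[k]$ for every $0 \le k < n$. For the running time, Sum-to-Max-Covering runs in time $\tOh(n/\eps)$ and produces $s = \Oh(\frac 1\eps \polylog(n/\eps))$ layers; the first subquadratic algorithm for \minmaxconv~\cite{first-minmaxconv} solves each layer in time $\Oh(n^{3/2}\sqrt{\log n})$, so the total cost is $\tOh(s\,n^{3/2}) = \tOh(n^{3/2}/\eps)$, which dominates the covering time. (One can phrase this as a short algorithm \textsc{ApproximateMinConv}$(A,B,\eps)$ completely analogous to Algorithm~\ref{alg:approxi-minprod-via-maxprod}.)

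The only point needing a little care — and it is not deep — is the input-representation bookkeeping that we already used in Theorem~\ref{thm:apx-minprod}: approximate \minconv takes floating-point input, whereas our exact \minmaxconv subroutine expects integers in standard bit representation. As in the \minprod/\minmaxprod reduction, before each call we replace the entries of $A^{(\ell)},B^{(\ell)}$ by their \emph{ranks} in the sorted order of all entries, run \minmaxconv on the ranks, and translate the result back; this costs only near-linear time per layer and hence does not affect the bound. One also checks that the $\infty$-entries introduced by the covering behave correctly: an $\infty$ never attains the minimum for a position $k$ that has a finite witness $i+j=k$ with $A^{(\ell)}[i],B^{(\ell)}[j]<\infty$, and $C[k]$ is finite precisely when some pair $i+j=k$ has $A[i],B[j]<\infty$. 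Beyond this, the argument is a direct transcription of the proof of Theorem~\ref{thm:apx-minprod}; there is no genuine obstacle, as all the work is hidden inside Sum-to-Max-Covering and the existing \minmaxconv algorithm.
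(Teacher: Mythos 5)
Your proposal matches the paper's proof essentially verbatim: both view the sequences as vectors, apply Sum-to-Max-Covering, run the known $\tOh(n^{3/2})$ \minmaxconv algorithm on each of the $s = \Oh(\eps^{-1}\polylog(n/\eps))$ layers (after replacing entries by their ranks to reconcile the floating-point/bit-representation mismatch), and take the entrywise minimum, with correctness and the $\tOh(n^{3/2}/\eps)$ bound following just as in Theorem~\ref{thm:apx-minprod}. There is no substantive difference.
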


\begin{proof}
  Consider input sequences $(A[i])_{i=0}^{n-1},\, (B[i])_{i=0}^{n-1}$ on which we want to compute $(C[k])_{k=0}^{n-1}$ with $C[k] = \min_{i+j=k} (A[i] + B[j])$. 
  We view the sequences $A,B$ as vectors in $\real^n$, in order to apply Sum-To-Max-Covering (Lemma~\ref{covering-lemma}). This yields sequences $A^{(1)},\ldots,A^{(s)},\, B^{(1)},\ldots,B^{(s)} \in \real^n$ with $s = \Oh(\eps^{-1} \polylog(n/\eps))$. We compute the \minmaxconv of every layer $A^{(\ell)},B^{(\ell)}$. (Note that we can first replace the entries of $A^{(\ell)},B^{(\ell)}$ by their ranks; this is necessary since the input format for approximate \minconv is floating-point, but \minmaxconv requires standard bit representation of integers.) We then return the entry-wise minimum of the results, see Algorithm~\ref{alg:approx-minconv-simple}. 
  The output sequence $\tilde C$ is a $(1+\eps)$-approximation of $C$; this follows from the properties of Sum-To-Max-Covering, analogously to the proof of Theorem~\ref{thm:apx-minprod}.
%
%
  Since Sum-To-Max-Covering takes time $\tOh(n/\eps)$, the running time is dominated by computing $s$ times a \minmaxconv, resulting in $\tOh(s n^{3/2}) = \tOh(n^{3/2} / \eps)$.
  \begin{algorithm}
    \caption{$\textsc{ApproximateMinConv}(A,B,\eps)$}
	\label{alg:approx-minconv-simple}
\begin{algorithmic}[1]
    \State $\{ (A^{(1)},B^{(1)}),\ldots,(A^{(s)},B^{(s)})\} = \textsc{SumToMaxCovering}(A,B,\eps)$
    \State $C^{(\ell)} := \textsc{MinMaxConv}(A^{(\ell)}, B^{(\ell)})$ for any $\ell \in [s]$
    \State $\tilde C[k] := \min_{\ell \in [s]} \{ C^{(\ell)}[k] \}$ for any $0 \le k < n$
    \State \Return $\tilde C$
\end{algorithmic}
\end{algorithm}
\end{proof}

\subsection{Equivalence of Approximate Min-Plus and Exact Min-Max Convolution}
\label{equiv-minconv}

We next show an equivalence of approximate \minconv and exact \minmaxconv, similarly to the equivalence for matrix products in Theorem~\ref{thm:equ-apsp}.

\begin{theorem} \label{thm:conv-equ}
  For any $c \ge 1$, if one of the following statements is true, then both are:
  \begin{itemize}
    \item $(1+\eps)$-Approximate \minconv can be solved in strongly polynomial time $\tOh(n^c/ \poly(\eps))$,
    \item exact \minmaxprod can be solved in strongly polynomial time $\tOh(n^c)$.
  \end{itemize}
\end{theorem}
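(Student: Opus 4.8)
The plan is to prove Theorem~\ref{thm:conv-equ} in exact analogy with the matrix-product equivalence in Theorem~\ref{thm:equ-apsp}, by establishing both directions of the reduction between approximate \minconv and exact \minmaxconv.

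\textbf{First direction (approximate \minconv $\Rightarrow$ exact \minmaxconv).} First I would observe that Algorithm~\ref{alg:approx-minconv-simple} is in fact a reduction from $(1+\eps)$-approximate \minconv to exact \minmaxconv: given a \minmaxconv algorithm running in time $T(n) = \tOh(n^c)$, the call to $\textsc{SumToMaxCovering}$ costs $\tOh(n/\eps)$ and produces $s = \Oh(\eps^{-1}\polylog(n/\eps))$ instances of \minmaxconv, so the total running time is $\tOh(s\cdot T(n) + n/\eps) = \tOh(T(n)/\poly(\eps)) = \tOh(n^c/\poly(\eps))$. Hence if exact \minmaxconv is in strongly polynomial time $\tOh(n^c)$, then $(1+\eps)$-approximate \minconv is in strongly polynomial time $\tOh(n^c/\poly(\eps))$. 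This direction is essentially immediate from the results already established.

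\textbf{Second direction (exact \minmaxconv $\Rightarrow$ approximate \minconv).} Fix any constant $\eps > 0$. Given sequences $A, B$ for \minmaxconv (in standard bit representation, so entries are $\Oh(\log n)$-bit integers after replacing them by ranks), let $C$ be their \minmaxconv. Set $r$ to be $4(1+\eps)^2$ rounded up to the next power of $2$, and define $A'[i] := r^{A[i]}$, $B'[i] := r^{B[i]}$. Crucially, since $A[i]$ is an $\Oh(\log n)$-bit integer, $r^{A[i]}$ has an $\Oh(\log n)$-bit exponent and $\Oh(1)$-bit mantissa in floating-point, so this conversion takes constant time per entry and fits the floating-point input format for approximate \minconv. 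Let $C'$ be the result of $(1+\eps)$-approximate \minconv on $A', B'$. I would then prove, exactly as in Claim~\ref{cla:red-minmaxprod-minprod}, that $r^{C[k]} \le C'[k] \le r^{C[k]+1/2}$ for all $k$: for the upper bound, pick $i+j=k$ with $C[k] = \max\{A[i], B[j]\}$, so $C'[k] \le (1+\eps)(r^{A[i]} + r^{B[j]}) \le 2(1+\eps) r^{C[k]} \le r^{C[k]+1/2}$ using $r \ge 4(1+\eps)^2$; for the lower bound, pick $i+j=k$ with $C'[k] \ge A'[i] + B'[j] = r^{A[i]} + r^{B[j]} \ge r^{\max\{A[i],B[j]\}} \ge r^{C[k]}$. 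Since $r$ is a power of $2$ and $C'[k]$ is in floating-point, we recover $C[k] = \lfloor \log_r C'[k] \rfloor$ by reading the most significant bits of the exponent. Thus a $\tOh(n^c/\poly(\eps))$-time approximate \minconv algorithm (with $\eps$ fixed) yields a $\tOh(n^c)$-time exact \minmaxconv algorithm.

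\textbf{Main obstacle.} The calculations are routine given the matrix-product template. The one genuinely load-bearing point — the same one flagged in Section~\ref{sec:machinemodel} — is the number-representation issue in the second direction: exponentiation must not blow up the bit complexity, which is exactly why \minmaxconv input is taken in bit representation (giving $\Oh(\log n)$-bit integers) and approximate \minconv input in floating-point (so $r^{A[i]}$ costs $\Oh(1)$ by writing $A[i]\cdot\log r$ into the exponent). One should double-check that the convolution structure (index constraint $i+j=k$) is preserved verbatim by the transformation $x \mapsto r^x$, but since the map is applied entrywise and independently to $A$ and $B$, the witness index pairs carry over directly, so there is no real difficulty here beyond stating it carefully.
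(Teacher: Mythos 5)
Your proof matches the paper's own argument essentially verbatim: the first direction invokes Algorithm~\ref{alg:approx-minconv-simple} (Theorem~\ref{thm:minconvapxsimple}) as a reduction to \minmaxconv, and the second direction exponentiates by $r \approx 4(1+\eps)^2$ and reads the result off the floating-point exponent, exactly as in the proof of Claim~\ref{cla:red-minmaxprod-minprod}. The representation-format observation you flag as the load-bearing point is indeed the one the paper also emphasizes in Section~\ref{sec:machinemodel}; there is no gap.
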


In particular, any further improvement on the exponent of $n$ in Theorem~\ref{thm:minconvapxsimple} would yield an improved algorithm for \minmaxconv.

\begin{proof}
  For one direction, observe that if \minmaxconv can be solved in time $T(n)$ then the algorithm from Theorem~\ref{thm:minconvapxsimple} runs in time $\tOh(n/\eps + T(n)/\eps)$, which is $\tOh(T(n))$ for constant $\eps >0$. 
  
  For the other direction, on input $A,B$ denote the result of \minmaxconv by $C$. Set $r := \lceil 4 (1+\eps)^2 \rceil$ and consider the sequences $A',B'$ with $A'[i] := r^{A[i]}$ and $B'[j] := r^{B[j]}$. (Note that the integers $A[i],B[j]$ are in standard bit representation, so we can compute floating-point representations of $r^{A[i]},r^{B[j]}$ in constant time, essentially by writing $A[i],B[j]$ into the exponent.) Let $C'$ be the result of $(1+\eps)$-Approximate \minconv on $A',B'$.
  Then as in the proof of Theorem~\ref{thm:equ-apsp}, we obtain $r^{C[k]} \le C'[k] \le r^{C[k] + 1/2}$.
%
  Hence, we can infer the \minmaxconv $C$ of $A,B$ by setting $C[k] = \lfloor \log_r C'[k] \rfloor$ (i.e., we essentially only read the exponent of the floating-point number $C'[k]$). 
  If $(1+\eps)$-Approximate \minconv is in time $T(n)$, this yields an algorithm for \minmaxconv running in time $\tOh(n + T(n)) = \tOh(T(n))$.
%
\end{proof}

\subsection{Improved Approximation Algorithm}
\label{sec:improvedapx-minconv}

In the remainder of this section, we improve the simple approximation algorithm of Theorem~\ref{thm:minconvapxsimple}.

\begin{theorem}
    \label{apx-minconv}
    $(1+\eps)$-Approximate \minconv can be solved in strongly polynomial time $\Ot(n^{3/2}/ \sqrt{\eps})$.
\end{theorem}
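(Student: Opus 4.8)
The plan is to improve the simple algorithm of Theorem~\ref{thm:minconvapxsimple}, whose running time $\tOh(n^{3/2}/\eps)$ is dominated by the $s = \Theta(\tfrac1\eps\polylog(n/\eps))$ invocations of \minmaxconv coming from Sum-to-Max-Covering. As in that covering, I would first split the index pairs $i,j$ into \emph{distant} pairs with $A[i]/B[j]\notin[\eps,1/\eps]$ and \emph{close} pairs. The distant pairs are handled exactly by Distant Covering (Corollary~\ref{distant-covering-cor}), which produces only $\Oh(\polylog(n/\eps))$ sequence pairs and hence costs only $\Oh(\polylog(n/\eps))$ \minmaxconv calls, comfortably within budget. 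So the entire difficulty is the close pairs.

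For the close pairs I would open up the \minmaxconv black box and exploit the very specific structure of the sequences produced by Close Covering (Lemma~\ref{close-covering}). In layer $\ell$ we have $A^{(\ell)}[i]\in\{(1+\eps)^d,\infty\}$ and $B^{(\ell)}[j]\in\{B[j]+(1+\eps)^d,\infty\}$, and whenever both are finite one has $\max\{A^{(\ell)}[i],B^{(\ell)}[j]\}=B^{(\ell)}[j]$; thus the $\ell$-th \minmaxconv is really the sparse min-convolution $\tilde C^{(\ell)}[k]=\min\{B^{(\ell)}[j]:j\in S_B^{(\ell)},\ k-j\in S_A^{(\ell)}\}$, where $S_A^{(\ell)},S_B^{(\ell)}$ are the finite-support index sets. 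The crucial counting facts are that the $A$-supports partition the indices, so $\sum_\ell|S_A^{(\ell)}|=n$, while each entry of $B$ is finite in $\Theta(\tfrac1\eps\polylog\tfrac1\eps)$ of the layers, so $\sum_\ell|S_B^{(\ell)}|=\tOh(n/\eps)$. I would then compute each layer adaptively: layers with small supports by a direct sparse computation over all $|S_A^{(\ell)}|\cdot|S_B^{(\ell)}|$ witness pairs, grouped by their sum $i+j$; layers with large supports by first refining $S_B^{(\ell)}$ into $\Oh(\polylog\tfrac1\eps)$ geometric value-sub-buckets so that, relative to the $A$-value $(1+\eps)^d$ in play, the relevant $B$-values occupy a bounded multiplicative range and can be rescaled and rounded to small integers, turning each sub-instance into a bounded-range \minconv (equivalently a truncated \minmaxconv) whose cost beats the generic $\tOh(n^{3/2})$ bound by a $\sqrt\eps$-type factor. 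Summing the two regimes and choosing the split threshold so that their costs balance should give a total of $\tOh(n^{3/2}/\sqrt\eps)$.

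The construction sketched above delivers a $(1+c\eps)$-approximation for an absolute constant $c$; I would obtain the clean $(1+\eps)$-guarantee by running everything with $\eps$ replaced by $\eps/c'$ and dividing all outputs by $1-\Oh(\eps)$, exactly as in the passage from Lemma~\ref{covering-lemma-variant} to Lemma~\ref{covering-lemma}.

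The step I expect to be the main obstacle is the amortized running-time analysis. The naive bound on $\sum_\ell|S_A^{(\ell)}|\,|S_B^{(\ell)}|$ is only $n^2$ (attained when all of $B$ concentrates in a single value-band), so the heavy/light dichotomy together with the value-sub-bucketing must be arranged so that the ``heavy'' layers, each carrying an honest $n^{3/2}$-type term, are few enough, while the ``light'' layers amortize against the incidence bound $\tOh(n/\eps)$, with the exponents landing exactly on $3/2$ and $1/2$. A secondary subtlety, and the reason the bound is not merely a function of the number of finite entries, is that a min-convolution over a scattered index set still costs $\Omega(n)$ per instance no matter how sparse it is; preventing this cost from being multiplied by the full $\Theta(1/\eps)$ number of layers rather than $\Theta(1/\sqrt\eps)$ is the whole game, and it has to be done without reintroducing a $\log W$ factor, which rules out the obvious scale-by-scale implementation.
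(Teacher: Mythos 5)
Your distant-pair handling matches the paper's Lemma~\ref{minconv-distant} essentially verbatim, and the intention to improve the close case is right. But the close-case plan has a genuine gap. First, a local error: within a fixed layer $\ell$ of Close Covering, different indices $i,j$ are generally assigned different exponents $d,d'\in D_\ell$, and when $d\ge d'+s$ (Case~2 in the proof of Lemma~\ref{close-covering}) one has $B^{(\ell)}[j] < A^{(\ell)}[i]$, so the max is $A^{(\ell)}[i]$, not $B^{(\ell)}[j]$; the per-layer problem is a genuine \minmaxconv rather than the sparse min-of-$B$-over-witnesses you describe, unless you split further by $d$, which risks a $\log W$-dependent number of sub-instances. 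More fundamentally, the amortization cannot land at $1/\sqrt\eps$ with the Close Covering layers: you correctly have $\sum_\ell|S_A^{(\ell)}|=n$ and $\sum_\ell|S_B^{(\ell)}|=\tOh(n/\eps)$. With light layers ($|S_B^{(\ell)}|\le\lambda$) costing $|S_A^{(\ell)}||S_B^{(\ell)}|$ and heavy layers costing, optimistically, $\tOh(n/\eps)$ each via bounded-range FFT, the light total is $\tOh(\lambda n)$ while there are $\tOh(n/(\eps\lambda))$ heavy layers, giving a heavy total of $\tOh(n^2/(\eps^2\lambda))$; balancing yields $\lambda = n^{1/2}/\eps$ and total $\tOh(n^{3/2}/\eps)$, no better than Theorem~\ref{thm:minconvapxsimple}. (If each heavy layer instead costs $\tOh(n^{3/2})$ by generic \minmaxconv, balancing gives the even weaker $\tOh(n^{7/4}/\sqrt\eps)$.) The extra factor $1/\eps$ in $\sum_\ell|S_B^{(\ell)}|$, inherent to Close Covering's $s=\Theta(\frac1\eps\log\frac1\eps)$ layers, is precisely what blocks the $1/\sqrt\eps$ target.

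The paper abandons Close Covering for the close case. Instead, Lemma~\ref{minconv-close} runs a scaling algorithm indexed by powers of two: in iteration $q$ both $A$ and $B$ are restricted to entries in the two-sided window $[\eps q/16, q]$ and rounded to integers of magnitude $\Oh(1/\eps)$. Because $q$ doubles between iterations, each entry is alive in only $\Oh(\log\frac1\eps)$ iterations, so both $\sum_q\alpha_q$ and $\sum_q\beta_q$ are $\tOh(n)$, without the $1/\eps$. Each iteration costs $\Oh(\min\{\alpha_q\beta_q, n/\eps\}\polylog(n/\eps))$ by choosing naive sparse convolution or bounded-range FFT; the heavy-light split at $\lambda=\sqrt{n/\eps}$ then lands exactly on $\tOh(n^{3/2}/\sqrt\eps)$, and an event queue skips iterations with no live entries, eliminating $\log W$. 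The crucial idea you are missing is to replace Close Covering's $(1+\eps)$-granular decomposition by a coarse power-of-two decomposition with a two-sided range restriction, so that the per-entry incidence count drops from $\Theta(\frac1\eps\log\frac1\eps)$ to $\Oh(\log\frac1\eps)$.
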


We will divide the algorithm for \minconv into two parts: the first part
will handle the case when summands are \emph{close} and the second will handle
the \emph{distant} case.

\subsubsection{Approximating \minconv for Distant Summands}
\label{sec:apx-distant-minconv}

First, we will simply use Distant Covering (Corollary~\ref{distant-covering-cor}) to correctly compute \minconv for
summands that differ by at least a factor $\frac 1\eps$.

\begin{lemma}
    \label{minconv-distant}
    Given sequences $A,B \in \real^n$ and a parameter $\eps > 0$, let $C$ be the
    result of \minconv on $A,B$. In strongly polynomial $\Oh(n^{3/2}\polylog(\frac n\eps))$ time
    we can compute a sequence $\tilde{C}$ such that:
    \begin{enumerate}[align=left, font=\normalfont, label=(\roman*)]
        \item for any $k \in [n]$ we have $\tilde{C}[k] \ge C[k]$, and
        \item if there are $i+j=k$ with $C[k] = A[i] + B[j]$ and $\frac{A[i]}{B[j]} \notin
            [\frac \eps 4, \frac 4 \eps]$ then $\tilde{C}[k] \le (1+\eps)C[k]$.
    \end{enumerate}
\end{lemma}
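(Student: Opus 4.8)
The plan is to reduce to exact \minmaxconv via Distant Covering, exactly mirroring the proof of Theorem~\ref{thm:minconvapxsimple} (and of Theorem~\ref{thm:apx-minprod}) but invoking only the \emph{distant} half of the covering machinery. Concretely, I would run Corollary~\ref{distant-covering-cor} (Distant Covering, Vector Variant) on the input $A,B$ with parameter $\eps' := \eps/4$, obtaining vectors $A^{(1)},\ldots,A^{(s)},B^{(1)},\ldots,B^{(s)} \in \real^n$ with $s = \Oh(\log n \log \tfrac 1\eps)$. Then compute $C^{(\ell)} := \textsc{MinMaxConv}(A^{(\ell)},B^{(\ell)})$ for every $\ell \in [s]$ (first replacing the entries of $A^{(\ell)},B^{(\ell)}$ by their ranks, exactly as in Theorem~\ref{thm:minconvapxsimple}, since the approximate input is in floating-point but \minmaxconv expects integers in bit representation), and finally output $\tilde C[k] := \tfrac{1}{1-2\eps'} \min_{\ell \in [s]} C^{(\ell)}[k]$ for every $0 \le k < n$.

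For property (i): by property (i) of Corollary~\ref{distant-covering-cor}, for every $\ell$ and every $i+j=k$ we have $\max\{A^{(\ell)}[i],B^{(\ell)}[j]\} \ge (1-2\eps')(A[i]+B[j]) \ge (1-2\eps')\,C[k]$, hence $\min_\ell C^{(\ell)}[k] \ge (1-2\eps')\,C[k]$ and therefore $\tilde C[k] \ge C[k]$. For property (ii): if there is a decomposition $i+j=k$ with $C[k]=A[i]+B[j]$ and $\tfrac{A[i]}{B[j]} \notin [\tfrac\eps4,\tfrac4\eps]$, then $\tfrac{A[i]}{B[j]} \notin [\eps',\tfrac 1{\eps'}]$ (here the lemma's distant interval coincides exactly with the corollary's interval at parameter $\eps'=\eps/4$, so no slack is lost). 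Property (ii) of Corollary~\ref{distant-covering-cor} then supplies $\ell$ with $A^{(\ell)}[i]=A[i]$ and $B^{(\ell)}[j]=B[j]$, so $C^{(\ell)}[k] \le \max\{A[i],B[j]\} \le A[i]+B[j]=C[k]$, whence $\tilde C[k] \le \tfrac{1}{1-2\eps'}\,C[k] = \tfrac{1}{1-\eps/2}\,C[k] \le (1+\eps)\,C[k]$, using $(1+\eps)(1-\eps/2) = 1 + \tfrac\eps2(1-\eps) \ge 1$ for $\eps < 1$.

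For the running time: Distant Covering takes $\Oh(n\log n \log\tfrac1\eps)$ time, the rank substitution and the final entrywise minimum are near-linear, and each of the $s$ calls to \minmaxconv runs in $\Oh(n^{3/2}\sqrt{\log n})$ time by \citet{first-minmaxconv}; the total is $\Oh(s\cdot n^{3/2}\sqrt{\log n}) = \Oh(n^{3/2}\polylog(\tfrac n\eps))$, and since no step depends on the largest weight $W$, the algorithm is strongly polynomial. Honestly, given Corollary~\ref{distant-covering-cor} this is essentially routine; the only points requiring care — rather than a genuine obstacle — are making the constants line up (the choice $\eps'=\eps/4$ must simultaneously match the distant interval $[\tfrac\eps4,\tfrac4\eps]$ and keep the rescaling factor $\tfrac{1}{1-2\eps'}$ below $1+\eps$) and checking that the $\infty$-entries introduced by the covering behave correctly under \minmaxconv and the rank substitution.
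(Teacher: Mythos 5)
Your proposal is correct and matches the paper's own proof essentially step for step: the paper's Algorithm~\ref{alg:distant-minconv} also invokes Corollary~\ref{distant-covering-cor} with $\eps' = \eps/4$, computes the \minmaxconv of each layer, takes the entry-wise minimum, and rescales by $\tfrac{1}{1-2\eps'} = \tfrac{1}{1-\eps/2}$, with the same bookkeeping for properties (i) and (ii) and the same $\Ot(n^{3/2})$-time \minmaxconv subroutine. The points you flag at the end (matching $\eps'$ to the lemma's interval, the rescaling factor, the rank substitution) are exactly the ones the paper also attends to.
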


\begin{algorithm}
    \caption{$\textsc{DistantMinConv}(A,B,\eps)$}
	\label{alg:distant-minconv}
\begin{algorithmic}[1]
    \State $\{ (A^{(1)},B^{(1)}),\ldots,(A^{(s)},B^{(s)})\} = \textsc{DistantCovering}(A,B,\eps/4)$
    \State $C^{(\ell)} := \textsc{MinMaxConv}(A^{(\ell)}, B^{(\ell)})$ for any $\ell \in [s]$
    \State $\tilde C[k] := \frac{1}{1-\eps/2} \cdot \min_{\ell \in [s]} \{ C^{(\ell)}[k] \}$ for any $0 \le k < n$
    \State \Return $\tilde C$
\end{algorithmic}
\end{algorithm}

\begin{proof}[Proof of Lemma~\ref{minconv-distant}]
    We view the sequences $A,B$ as vectors in $\real^n$, and apply Distant
    Covering (Corollary~\ref{distant-covering-cor}) with $\eps' := \frac \eps 4$. This
    yields sequences $A^{(1)},\ldots,A^{(s)},\, B^{(1)},\ldots,B^{(s)} \in
    \real^n$ with $s = \Oh(\polylog(\frac n \eps))$. We compute the \minmaxconv of
    every layer $A^{(\ell)},B^{(\ell)}$.  Then we compute the entry-wise minimum
    of the results, scale it by a factor $\frac{1}{1-2\eps'}$, and return the resulting sequence~$\tilde C$, see Algorithm~\ref{alg:distant-minconv}. 
     
    For correctness, note that the scaling factor $\frac{1}{1-2\eps'}$ removes the factor $1-2\eps'$ from the right hand side of property (i) in Corollary~\ref{distant-covering-cor}. This yields $\tilde{C}[k] \ge
    C[k]$.
    Moreover, by property (ii) of Corollary~\ref{distant-covering-cor}, for
    any indices $i+j=k$ with $\frac{A[i]}{B[j]} \notin [\eps',
    \frac{1}{\eps'}] = [\frac \eps 4, \frac 4 \eps]$ there is an $\ell$ such that $C^{(\ell)}[k] \le A[i] + B[j]$.
    Minimizing over all $\ell$ and multiplying by $\frac{1}{1-2\eps'} = \frac 1{1-\eps/2} < 1+\eps$ yields the claimed property (ii).
    
    Since Distant Covering takes time $\Oh(n \polylog(\frac n \eps))$, the running time is dominated by
    computing $s$ times \minmaxconv. Using the fastest known algorithm for \minmaxconv~\cite{first-minmaxconv}, we obtain time $\Ot(n^{3/2}s) =
    \Oh(n^{3/2} \polylog(\frac n \eps))$.
\end{proof}

\subsubsection{Approximating \minconv for Close Summands}

We now use a variant of a known scaling-based approximation scheme for \minconv to handle the close summands. 

\begin{lemma}
    \label{minconv-close}
    Given sequences $A,B \in \real^n$ and a parameter $\eps > 0$, let $C$ be the
    result of \minconv on $A,B$. In strongly polynomial $\tOh(n^{3/2}/\sqrt{\eps})$ time
    we can compute a sequence $\tilde{C}$ such that:
    \begin{enumerate}[align=left, font=\normalfont, label=(\roman*)]
        \item for any $k \in [n]$ we have $\tilde{C}[k] \ge C[k]$, and
        \item if there are $i+j=k$ with $C[k] = A[i] + B[j]$ and $\frac{A[i]}{B[j]} \in
            [\frac \eps 4, \frac 4 \eps]$ then $\tilde{C}[k] \le (1+\eps)C[k]$.
    \end{enumerate}
\end{lemma}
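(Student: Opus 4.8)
The plan is to adapt Zwick's adaptive-scaling scheme (Algorithm~\ref{alg:approx-minprod}) from the matrix-product setting to the convolution setting, replacing each exact \minprod call by an exact integer-weight \minconv call, and exploiting the fact that exact \minconv with entries bounded by $U$ can be solved in time $\tOh(n\sqrt{U})$ (this is the classical observation that \minconv is ``$\sqrt{U}$-easy'' for bounded integer weights, used e.g. in~\cite{tree-sparsity,partition}). First I would recall/state this bounded-weight \minconv subroutine: given $A',B' \in \{0,1,\ldots,U,\infty\}^n$ one can compute their exact \minconv in time $\tOh(n\sqrt U)$. Then I would run the scaling loop over $q = 0,1,2,\ldots,\lceil \log(nW)\rceil$: in phase $q$, apply $\textsc{scale}(\cdot,q,\eps)$ (Algorithm~\ref{alg:scaling}) to both $A$ and $B$, obtaining integer sequences with entries in $\{0,\ldots,\lceil 1/\eps\rceil,\infty\}$, call exact \minconv on them, and update $\tilde C[k] \gets \min\{\tilde C[k],\, \eps 2^q C'[k]\}$.

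The key steps, in order: (1) state and justify the bounded-weight exact \minconv subroutine running in $\tOh(n\sqrt U)$ time, so that with $U = \Oh(1/\eps)$ each phase costs $\tOh(n/\sqrt\eps)$; (2) observe that rounding is always upward, so $\tilde C[k] \ge C[k]$ always, giving property (i); (3) for property (ii), fix $k$ with $C[k] = A[i]+B[j]$ and $A[i]/B[j] \in [\eps/4, 4/\eps]$, and pick the phase $q$ with $2^{q-1} < C[k] \le 2^q$. Since $A[i],B[j] \le A[i]+B[j] = C[k] \le 2^q$, neither is set to $\infty$ by $\textsc{scale}$, so $C'[k] \le \lceil A[i]/(\eps 2^q)\rceil + \lceil B[j]/(\eps 2^q)\rceil \le (A[i]+B[j])/(\eps 2^q) + 2$, hence $\tilde C[k] \le \eps 2^q C'[k] \le C[k] + 2\eps 2^q \le C[k] + 4\eps C[k] = (1+4\eps)C[k]$; rescaling $\eps \leftarrow \eps/4$ gives the claimed bound. (4) The running time: naively the loop has $\Oh(\log(nW))$ phases, which is not strongly polynomial. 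To remove the $\log W$ factor I would argue — exactly as in the $\sqrt\eps$-improved analyses of~\cite{partition} — that only $\Oh(\log(n/\eps))$ phases matter \emph{for a fixed output index} after an initial reduction, or more simply: compute a crude $\poly(n)$-approximation first (e.g. via Lemma~\ref{minconv-distant} composed with a trivial bound, or by noting any sum of two entries lies within $[\,\min\text{-entry},\ 2W]$ and restricting to the $\Oh(\log(n/\eps))$ relevant scales around the crude estimate), so that only $\Oh(\log(n/\eps))$ values of $q$ need to be tried. Each contributes $\tOh(n/\sqrt\eps)$, for a total of $\tOh(n/\sqrt\eps)$. But the lemma claims $\tOh(n^{3/2}/\sqrt\eps)$, so in fact even running all $\Oh(\log(nW))$ phases is fine \emph{as long as we can bound the number of nontrivial phases by $\poly\log$}; alternatively, one simply notes that the bound to be proven is $\tOh(n^{3/2}/\sqrt\eps)$, which dominates $\tOh(n/\sqrt\eps)\cdot\polylog$, so the naive $\log(nW)$ loop must still be replaced — here I would reuse the standard trick (as in Theorem~\ref{thm:undir-apsp}) of skipping ``void'' phases and observing each output index is live for only $\Oh(\log(n/\eps))$ consecutive scales.

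\textbf{Main obstacle.} The delicate point is making the scaling loop strongly polynomial, i.e., eliminating the $\log W$-factor that a literal port of Algorithm~\ref{alg:approx-minprod} incurs. Unlike the matrix case where one call already costs $\tOh(n^\omega/\eps)$ and the $\log W$ is the only non-strongly-polynomial term, here I must ensure the number of genuinely distinct scaling phases is $\poly\log(n/\eps)$ rather than $\log W$. I expect to handle this by first computing a coarse constant-factor (or $\poly(n)$-factor) approximation of each $C[k]$ — which is cheap — and then, for each $k$, only examining the $\Oh(\log(n/\eps))$ scales $q$ bracketing that coarse estimate; correctness is unaffected since we always take a minimum and never underestimate, and the per-phase cost $\tOh(n/\sqrt\eps)$ times $\polylog$ stays within the claimed $\tOh(n^{3/2}/\sqrt\eps)$ budget (indeed comfortably below it). A secondary, purely technical obstacle is stating the bounded-weight exact \minconv subroutine cleanly in the strongly-polynomial model — but since after $\textsc{scale}$ the entries are integers in $\{0,\ldots,\Oh(1/\eps)\}$ in ordinary bit representation, the standard $\tOh(n\sqrt U)$ algorithm applies directly without any $\log W$ dependence.
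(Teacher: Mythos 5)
There is a genuine gap here, with several intertwined problems.

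First, the claimed subroutine ``exact \minconv with entries bounded by $U$ can be solved in time $\tOh(n\sqrt U)$'' is not a classical observation and is not what the paper (or the cited \cite{tree-sparsity,partition}) uses: the standard bound for bounded-entry \minconv via FFT is $\tOh(nU)$, not $\tOh(n\sqrt U)$. Indeed, had a $\tOh(n\sqrt U)$ exact bounded \minconv been available, the scaling step alone would already improve on what Theorem~\ref{apx-minconv} claims, which should be a warning sign. The actual proof only uses the $\tOh(nU) = \tOh(n/\eps)$ FFT bound together with the trivial bound $\Oh(\alpha_q\beta_q)$ for a sparse convolution, and all of the $\sqrt\eps$ savings comes from amortization, not a faster subroutine.

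Second, and more fundamentally, porting $\textsc{scale}$ from Algorithm~\ref{alg:scaling} literally will not give a strongly polynomial running time, and your proposed fix does not repair it. That scaling routine only sets entries to $\infty$ when they are \emph{too large} (exceeding $2^q$). With that rule, once the smallest entry of $A$ becomes live in some phase it stays live in all subsequent phases; void phases are rare and the number of nontrivial phases remains $\Theta(\log W)$. Your idea of computing a crude approximation per output index $k$ and then examining only $\Oh(\log(n/\eps))$ scales ``around the estimate'' also does not work: the crude estimates for different $k$'s can span a ratio of $\Theta(W)$, and each phase of the scaling loop computes \emph{all} outputs at once, so amortizing over output indices does not translate into a bound on the number of phases or on their total cost. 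The paper's key modification is precisely the one your port omits: the procedure $\textsc{Scale}$ in Algorithm~\ref{alg:conv-scale} additionally sets to $\infty$ entries that are \emph{too small} (below $\frac{\eps q}{16}$). This has two consequences. (a) Correctness is preserved only for \emph{close} pairs $A[i]/B[j]\in[\eps/4,4/\eps]$, because for $q$ with $q/2\le A[i]+B[j]\le q$ the closeness gives $\min\{A[i],B[j]\}\ge\frac{\eps}{4}\max\{A[i],B[j]\}\ge\frac{\eps q}{16}$; your proof never invokes this hypothesis, which is a sign it is implicitly proving a stronger (and impossible-to-amortize) statement. (b) Each input entry is non-$\infty$ in only $\Oh(\log\frac 1\eps)$ phases, so $\sum_q\alpha_q,\sum_q\beta_q=\Oh(n\log\frac 1\eps)$; the amortization is over \emph{input} entries, not output indices. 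Combining the per-phase cost $\Oh(\min\{\alpha_q\beta_q,\,n/\eps\}\polylog)$ with a threshold argument on $\beta_q\lessgtr\lambda$ and optimizing $\lambda=\sqrt{n/\eps}$ is what actually yields $\tOh(n^{3/2}/\sqrt\eps)$. Without the lower cutoff in $\textsc{Scale}$, none of these bounds hold, and without the amortization over live input entries your running-time argument does not close.
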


\begin{algorithm}
    \caption{$\textsc{Scale}(A,q,\eps)$.}
    \label{alg:conv-scale}
\begin{algorithmic}[1]
    \State $A'[i] = \begin{cases} \big\lceil \frac{4}{\eps q}\cdot A[i] \big\rceil & \text{if} \; \frac{\eps q}{16} \le A[i] \le q
         \\
		\infty & \text{otherwise} \\
	\end{cases}$
	\State \Return $A'$
\end{algorithmic}
\end{algorithm}
\begin{algorithm}
    \caption{$\textsc{CloseMinConv}(A,B,\eps)$.}
	\label{alg:approx-minconv}
\begin{algorithmic}[1]
    \State Initialize $\tilde{C}[k] := \infty$ for all $k$
    \For {$q = 1,2,4,\ldots, 2^{\ceil{\log{2W}}}$}
      \State $A' := \textsc{Scale}(A,q,\eps)$
      \State $B' := \textsc{Scale}(B,q,\eps)$
      \State $V := \textsc{ExactMinConv}(A',B')$ 
      \State $\tilde{C}[k] := \min \{\tilde{C}[k], V[k] \cdot \frac{q \eps}{4} \}$ for all $k$
    \EndFor
    \State \Return $\tilde{C}$
\end{algorithmic}
\end{algorithm}

\begin{proof}
Consider the procedures $\textsc{Scale}$ and $\textsc{CloseMinConv}$ (Algorithms~\ref{alg:conv-scale} and~\ref{alg:approx-minconv}), which are modifications of the known $\Ot(\frac{n}{\eps} \log{W})$-time approximation scheme for \minconv~\cite{partition}. The main difference to~\cite{partition} is that the entries of $A$ that are set to $\infty$ in the procedure $\textsc{Scale}$ are not only the ones that are too large (greater than $q$)  but also the ones that are too small (smaller than $\frac{\eps q}{16}$).
We claim that this algorithm proves Lemma~\ref{minconv-close}.
Correctness is based on the following rounding lemma.
\begin{lemma}[cf. Lemma B.2 in \cite{partition}]
    \label{numbers-rounding}
    For any $x,y,q,\eps \in \real$ with $x+y \ge q/2$
    and $0 < \eps < 1$ we have:
    \[
       x+y \le \big(\big\lceil \tfrac{4x}{q\eps} \big\rceil +
        \big\lceil \tfrac{4y}{q\eps} \big\rceil\big)\tfrac{q\eps}{4} \le
        (1+\eps)(x+y).
    \]
\end{lemma}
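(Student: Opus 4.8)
The plan is to establish the two inequalities separately; each follows in one line from the elementary bound $t \le \lceil t\rceil < t+1$, valid for every real $t$.

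First I would prove the lower bound. Applying $\lceil t\rceil \ge t$ with $t = \tfrac{4x}{q\eps}$ and with $t = \tfrac{4y}{q\eps}$ gives
\[
  \Big\lceil \tfrac{4x}{q\eps}\Big\rceil + \Big\lceil \tfrac{4y}{q\eps}\Big\rceil \;\ge\; \tfrac{4(x+y)}{q\eps},
\]
and multiplying both sides by $\tfrac{q\eps}{4} > 0$ yields exactly $x+y$ on the right. This step uses only $q,\eps > 0$, not the hypothesis $x+y \ge q/2$.

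For the upper bound I would instead use $\lceil t\rceil \le t+1$, which gives $\big\lceil \tfrac{4x}{q\eps}\big\rceil + \big\lceil \tfrac{4y}{q\eps}\big\rceil \le \tfrac{4(x+y)}{q\eps} + 2$; multiplying by $\tfrac{q\eps}{4}$ this becomes $(x+y) + \tfrac{q\eps}{2}$. It then remains to verify $(x+y) + \tfrac{q\eps}{2} \le (1+\eps)(x+y)$, which after cancelling $x+y$ and dividing by $\eps > 0$ is precisely $\tfrac{q}{2} \le x+y$, i.e., the hypothesis.

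There is no real obstacle here; the computation is routine. The only point worth highlighting is that the hypothesis $x+y \ge q/2$ is exactly what is needed to absorb the additive rounding error $\tfrac{q\eps}{2}$ (coming from rounding each of $x$ and $y$ up to the grid $\tfrac{q\eps}{4}\mathbb{Z}$) into the multiplicative slack $\eps(x+y)$. In the intended application to Lemma~\ref{minconv-close}, this lemma will be invoked for the power of two $q$ with $C[k] \in [q/2, q]$, where the range restriction built into \textsc{Scale} together with $\tfrac{A[i]}{B[j]} \in [\tfrac\eps4,\tfrac4\eps]$ guarantees the relevant summands $A[i], B[j]$ are not discarded; but that bookkeeping is separate from the present statement.
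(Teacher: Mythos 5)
Your proof is correct and follows essentially the same route as the paper's: the lower bound is immediate from $\lceil t\rceil \ge t$, and the upper bound combines $\lceil t\rceil \le t+1$ (giving an additive error of $q\eps/2$ after rescaling) with the hypothesis $x+y \ge q/2$ to absorb it into the multiplicative slack. Nothing to add.
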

\begin{proof}
  We repeat the proof for completeness. The lower bound is immediate. For the upper bound, note that 
  \[ \big(\big\lceil \tfrac{4x}{q\eps} \big\rceil +
        \big\lceil \tfrac{4y}{q\eps} \big\rceil\big)\tfrac{q\eps}{4} \le x + y + 2 \tfrac{q\eps}{4} = x + y + \eps \cdot \tfrac q 2 \le (1+\eps)(x+y). \qedhere \]
\end{proof}

\paragraph{Correctness} 
Correctness of $\textsc{CloseMinConv}$ can now be shown similarly as in~\cite{partition}. Regarding property (i), the lower bound of Lemma~\ref{numbers-rounding} ensures that we have $\tilde C[k] \ge C[k]$ throughout the run of the algorithm. 
Regarding property (ii), for any $i+j=k$ with $C[k] = A[i] + B[j]$ there exists a precision parameter~$q$ with $q/2 \le A[i] + B[j] \le q$.
In particular, we have $A[i], B[j] \le q$ and $\max\{A[i],B[j]\} \ge \frac q4$.
If we additionally have $\frac{A[i]}{B[j]} \in [\frac \eps 4, \frac 4 \eps]$, then 
\[ \min\{A[i],B[j]\} \ge \tfrac \eps 4 \cdot \max\{A[i],B[j]\} \ge \tfrac {\eps q}{16}, \]
and thus $A'[i]$ and $B'[j]$ both are not set to $\infty$ by the procedure $\textsc{Scale}$. The upper bound of Lemma~\ref{numbers-rounding} now implies $\tilde C[k] \le (1+\eps)(A[i] + B[j]) = (1+\eps)C[k]$.


\paragraph{Running time} 
For the running time analysis, denote by $\alpha_q$ the number of entries of $A'$ that are not set to $\infty$ in iteration $q$.
Note that if an entry $A[i]$ is not set to $\infty$ in iteration $q$, then we have $\frac{\eps q}{16} \le A[i] \le q$, or, equivalently, $A[i] \le q \le \frac{16}{\eps} A[i]$. Since $q$ grows geometrically, there are $\Oh(\log \frac 1\eps)$ iterations $q$ in which entry $A[i]$ is not set to $\infty$. Hence, we obtain $\sum_q \alpha_q = \Oh(n \log \frac 1\eps)$. 
We similarly define $\beta_q$ as the number of non-$\infty$ entries of $B'$ in iteration $q$, and obtain 
\begin{align} \label{eq:summinconv}
  \sum_q \beta_q = \Oh\big(n \log \tfrac 1\eps\big).
\end{align}

We argue in the following that procedure $\textsc{CloseMinConv}$ can be implemented in such a way that the running time for iteration $q$ is $\Oh(\min\{ \alpha_q \beta_q, \frac n \eps \} \polylog( \frac n\eps))$. 

To this end, we use an event queue to be able to skip all iterations with $\alpha_q = 0$ or $\beta_q = 0$. Moreover, we can maintain all non-$\infty$ entries of $A',B'$ in time $\Oh(\alpha_q + \beta_q)$ per iteration $q$. Note that $\Oh(\alpha_q + \beta_q) \le \Oh(\min\{ \alpha_q \beta_q, \frac n \eps \})$. This yields the claimed time bound for lines 3 and 4 of procedure $\textsc{CloseMinConv}$.

For line 5 of procedure $\textsc{CloseMinConv}$, note that naively the exact \minconv of $A'$ and $B'$ can be computed in time $\Oh(\alpha_q \beta_q)$. Moreover, since $A',B'$ have entries in $\{1,\ldots,W\} \cup \{\infty\}$ for $W = \ceil{\frac 4\eps}$, by Fast Fourier Transform their \minconv can be computed in time $\tOh(W n) = \tOh(\frac n \eps)$. Using the better of the two yields the claimed time bound.

Finally, line 6 of procedure $\textsc{CloseMinConv}$ can be implemented in time $\Oh(\min\{ \alpha_q \beta_q, n \})$, since this is an upper bound on the number of non-$\infty$ entries of $V$.

Altogether, we obtain time $\Oh(\min\{ \alpha_q \beta_q, \frac n \eps \} \polylog( \frac n\eps))$ per iteration $q$. Hence, the total running time of procedure $\textsc{CloseMinConv}$ is bounded by 
\[ \sum_q \min\big\{ \alpha_q \beta_q, \tfrac n \eps \big\} \polylog(\tfrac n\eps). \]
We split this sum into the cases $\beta_q \le \lambda$ and $\beta_q > \lambda$, and note that the second case can occur at most $\Oh(\frac n\lambda \log \frac 1\eps)$ times due to (\ref{eq:summinconv}).
This yields a total running time of at most
\[ \Big( \big(\sum_q \alpha_q \lambda\big) + \tfrac n\eps \cdot \tfrac n\lambda \log \tfrac 1\eps \Big) \polylog(\tfrac n\eps) \le \big( n \lambda + \tfrac {n^2}{\eps \lambda}\big) \polylog(\tfrac n\eps). \]
Setting $\lambda := (n/\eps)^{1/2}$ yields the claimed total running time bound $\tOh(n^{3/2} / \eps^{1/2})$.
\end{proof}

\subsubsection{Proof of Theorem~\ref{apx-minconv}}

\begin{algorithm}
    \caption{$\textsc{ApxMinConv}(A,B,\eps)$.}
	\label{alg:improved-apx-miconv}
\begin{algorithmic}[1]
    \State $\tilde{C}_1 := \textsc{DistantMinConv}(A,B,\eps)$
    \State $\tilde{C}_2 := \textsc{CloseMinConv}(A,B,\eps)$
    \State $\tilde C[k] := \min \{ \tilde{C}_1[k], \tilde{C}_2[k] \}$ for any $0 \le k < n$
    \State \Return $\tilde{C}$
\end{algorithmic}
\end{algorithm}

\begin{proof}[Proof of Theorem~\ref{apx-minconv}]
    Given sequences $A,B \in \real^n$ and a parameter $\eps > 0$, we simply run
    our algorithms for approximating \minconv on distant and close summands, 
    and compute their entry-wise minimum (see Algorithm~\ref{alg:improved-apx-miconv}). Correctness as well as size and
    time bounds are immediate consequences of Lemmas~\ref{minconv-distant} and~\ref{minconv-close}. 
\end{proof}

\subsection{Applications for Tree Sparsity}

A direct consequence of the \strongapx for \minconv is an analogous \strongapx
for \sparsity. 
We will make use of the following strongly polynomial reduction that uses an approximation
algorithm for \minconv as a black-box.

\begin{theorem}[cf. Theorem 7.1 \cite{partition}, reformulation of~\cite{tree-sparsity}]
    \label{tree-sparsity-reduction}
    If $(1+\eps)$-Approximate \minconv can be solved in time $T(n,\eps)$, then $(1+\eps)$-Approximate \sparsity can be solved in time $\Oh\big(\big(n + T(n,\eps/\log^2{n})\big) \log{n}\big)$.
\end{theorem}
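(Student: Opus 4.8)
The plan is to prove this the way~\cite{tree-sparsity} and~\cite{partition} do, i.e.\ by a black-box dynamic program over the tree that calls the approximate \minconv subroutine a polylogarithmic number of times; the only genuinely new point to verify here is that the reduction is \emph{strongly polynomial}, meaning that apart from these calls it performs only $\Oh(n)$ additions and comparisons of node weights and never incurs a dependence on $W$. First I would normalise the instance: by a left-child/right-sibling transformation we may assume the input tree $T$ is binary with $\Oh(n)$ nodes without changing the answer. For a node $v$ whose subtree has $n_v$ nodes, let $g_v\in\real^{n_v+1}$ record the optimal weight of a rooted size-$k$ subtree at $v$ for each $k$; then $g_v$ is obtained from the tables of its (at most two) children by a $(\max,+)$-convolution, a shift by one, and adding $w(v)$. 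A $(1+\eps)$-approximate $(\max,+)$-convolution of non-negative sequences follows from $(1+\eps)$-approximate \minconv by the same Sum-to-Max type argument used throughout the paper (or is taken verbatim from~\cite{partition}), so I would treat it as a black box with running time $T(\cdot,\cdot)$ up to constant factors.

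Second, to replace the naive cost $\sum_v T(n_v,\cdot)$ (which is already quadratic for a path) by $\tOh\big((n+T(n,\cdot))\log n\big)$, I would process $T$ via a logarithmic-depth decomposition: either a centroid-edge recursion that splits $T$ into two parts of size at most $\tfrac23 n$ and combines their partial tables with a single convolution, or a heavy-path decomposition in which the light subtrees hanging off a heavy path are merged by a \emph{balanced} binary tree of convolutions. In either case the convolution sizes at each of the $\Oh(\log n)$ levels sum to $\Oh(n)$, so by superadditivity of $T$ the total convolution cost telescopes to $\tOh\big((n+T(n,\cdot))\log n\big)$; all remaining bookkeeping (building the decomposition, shifting tables, adding node weights) is $\tOh(n)$.

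The main obstacle — and the reason the statement asks for $T(n,\eps/\log^2 n)$ rather than $T(n,\eps/\log n)$ — is controlling the multiplicative error, which compounds along every chain of nested approximate convolutions producing a given output entry. In the decomposition above such a chain has length $\Oh(\log^2 n)$ (intuitively, $\Oh(\log n)$ heavy-path changes from the root to a leaf, times $\Oh(\log n)$ levels of the balanced merge inside each heavy path), so if every call is made with error parameter $\eps':=\Theta(\eps/\log^2 n)$ the overall distortion is at most $(1+\eps')^{\Theta(\log^2 n)}\le 1+\eps$; one also checks that the error stays one-sided, matching the $(1\pm\eps)$ normalisation used for the other problems in the paper, after which dividing out the constant-factor slack yields a clean $(1+\eps)$-approximation. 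Finally, since every arithmetic operation performed outside the black-box convolution calls is an addition or comparison of input weights, the whole reduction is strongly polynomial, which completes the proof; essentially the same argument is spelled out in~\cite[Theorem~7.1]{partition}.
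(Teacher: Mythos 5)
The paper does not actually prove this theorem: it is cited verbatim as Theorem~7.1 of~\cite{partition} (itself a reformulation of~\cite{tree-sparsity}), and the only argument the paper supplies is the one-sentence remark immediately after the statement that the cited reduction is a black box in the $(\min,+)$-convolution oracle and is otherwise strongly polynomial, so that the $\log W$ factors in those works come entirely from the scaling inside \emph{their} convolution subroutine. You have instead tried to reconstruct the whole cited reduction, which is more than the paper does; for the paper's purposes the ``proof'' is exactly the observation that Theorem~7.1 of~\cite{partition} is already an oracle reduction and that the non-oracle work is $\tOh(n)$ additions and comparisons of node weights.

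Your reconstruction has the correct overall architecture (heavy-path decomposition and balanced binary merges so that the convolution sizes sum to $\Oh(n)$ per level across $\Oh(\log n)$ levels, hence chain length $\Oh(\log^2 n)$ and error parameter $\eps/\log^2 n$), but two concrete steps are wrong as written. First, the left-child/right-sibling transformation does \emph{not} preserve Tree Sparsity: in the resulting binary tree the path from a node $v$ to its original $i$-th child $c_i$ passes through $c_1,\ldots,c_{i-1}$ via right-sibling edges, so any connected rooted subtree of the binary tree that contains $c_i$ is forced to contain all earlier siblings of $c_i$ as well — a constraint absent in the original instance. The correct normalization is to merge the children's tables directly with a balanced tree of convolutions (which you in fact do later in the heavy-path step), or to insert weight-$0$ dummy internal nodes that are excluded from the size count; the binarization as you wrote it changes the answer. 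Second, the claim that a $(1+\eps)$-approximate $(\max,+)$-convolution ``follows from $(1+\eps)$-approximate \minconv by the same Sum-to-Max type argument used throughout the paper'' is false: Sum-to-Max-Covering in this paper reduces approximate $(\min,+)$ to exact $(\min,\max)$ and has nothing to do with the $(\max,+)$-vs-$(\min,+)$ sign flip. That conversion is genuinely delicate for one-sided multiplicative approximation (a reflection $A[i]\mapsto M-A[i]$ only controls additive, not multiplicative, error) and is handled in~\cite{partition} by separate rounding/truncation arguments. Your parenthetical fallback of citing~\cite{partition} verbatim is the honest move, but the specific mechanism you name is not the one that works.
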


Note, that this reduction runs in strongly polynomial time, and the $\log{W}$-factors in the running time of~\cite{tree-sparsity,partition} come exclusively from the use of scaling for approximating
\minconv. In consequence, our \strongapx for \minconv yields a \strongapx for \sparsity.

\begin{corollary}
    \label{tree-sparsity-apx}
    $(1+\eps)$-Approximate \sparsity can be solved in strongly polynomial time
    $\Ot(\frac{n^{3/2}}{\sqrt{\eps}})$. 
\end{corollary}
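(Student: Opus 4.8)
The plan is to derive the claim by composing the strongly polynomial reduction of Theorem~\ref{tree-sparsity-reduction} with the \minconv approximation scheme of Theorem~\ref{apx-minconv}. First I would recall that Theorem~\ref{tree-sparsity-reduction} turns any algorithm solving $(1+\eps)$-Approximate \minconv in time $T(n,\eps)$ into an algorithm for $(1+\eps)$-Approximate \sparsity running in time $\Oh\big((n + T(n,\eps/\log^2 n))\log n\big)$. The key point to emphasize is that this reduction is itself strongly polynomial: it only splits the input tree, recursively combines the profiles of children via calls to the \minconv oracle, and performs elementary bookkeeping, so it never touches the bit representation of the weights beyond what the oracle does. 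In particular, the $\log W$-factors appearing in the running times of~\cite{tree-sparsity,partition} stem exclusively from their scaling-based \minconv subroutine, a fact already noted in the paragraph preceding the corollary.

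Next I would instantiate the reduction with $T(n,\eps) = \Ot(n^{3/2}/\sqrt{\eps})$ from Theorem~\ref{apx-minconv}. Substituting $\eps \mapsto \eps/\log^2 n$ gives
\[
  T\big(n,\tfrac{\eps}{\log^2 n}\big) = \Ot\big(n^{3/2}\sqrt{\tfrac{\log^2 n}{\eps}}\big) = \Ot\big(\tfrac{n^{3/2}\log n}{\sqrt{\eps}}\big) = \Ot\big(\tfrac{n^{3/2}}{\sqrt{\eps}}\big),
\]
where the extra $\log n$ factor is absorbed by $\Ot$ (which hides polylogarithmic factors in $n$ and $\eps$, but never in $W$). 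Multiplying by the outer $\log n$ factor of Theorem~\ref{tree-sparsity-reduction} and observing that the additive $n\log n$ term is of lower order, the overall running time is $\Ot(n^{3/2}/\sqrt{\eps})$, as claimed. Since both the reduction and the \minconv subroutine are strongly polynomial, so is the resulting algorithm.

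I do not expect a real obstacle here; the proof is essentially a one-line invocation of the two cited results plus the arithmetic above. The only subtlety worth stating carefully is the strong polynomiality of the reduction — i.e., verifying that Theorem~\ref{tree-sparsity-reduction} introduces no hidden dependence on $W$ — which is exactly the remark made just before the corollary, so no new argument is needed.
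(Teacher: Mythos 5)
Your proposal is correct and follows exactly the same route as the paper: the corollary is obtained by plugging the $\Ot(n^{3/2}/\sqrt{\eps})$ bound of Theorem~\ref{apx-minconv} into the strongly polynomial reduction of Theorem~\ref{tree-sparsity-reduction}, absorbing the resulting polylogarithmic overhead into the $\Ot$ notation. The remark about strong polynomiality of the reduction (i.e., that the $\log W$ factors in prior work arise only from the \minconv subroutine) is precisely the observation the paper makes just before stating the corollary.
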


\sparsity has applications in image processing, computational
biology~\cite{serang} and machine learning~\cite{tree-sparsity}. The main issue of previous
approximation schemes for this problem was that in applications the input
consists of real numbers and thus $\log{W}$-factors significantly influence the running
time~\cite{schmidt-communication}.

\bibliographystyle{abbrvnat}
\bibliography{bib}

\begin{thebibliography}{62}
\providecommand{\natexlab}[1]{#1}
\providecommand{\url}[1]{\texttt{#1}}
\expandafter\ifx\csname urlstyle\endcsname\relax
  \providecommand{\doi}[1]{doi: #1}\else
  \providecommand{\doi}{doi: \begingroup \urlstyle{rm}\Url}\fi

\bibitem[Abboud and Rubinstein(2018)]{AbboudR18}
A.~Abboud and A.~Rubinstein.
\newblock Fast and deterministic constant factor approximation algorithms for
  {LCS} imply new circuit lower bounds.
\newblock In \emph{Proc.\ 9th Innovations in Theoretical Computer Science
  Conference (ITCS'18)}, volume~94, pages 35:1--35:14, 2018.

\bibitem[Abboud et~al.(2015)Abboud, Grandoni, and Williams]{abboud-centrality}
A.~Abboud, F.~Grandoni, and V.~V. Williams.
\newblock Subcubic equivalences between graph centrality problems, {APSP} and
  diameter.
\newblock In \emph{Proc.\ 26th Annual {ACM-SIAM} Symposium on Discrete
  Algorithms (SODA'15)}, pages 1681--1697, 2015.

\bibitem[Abboud et~al.(2017)Abboud, Rubinstein, and Williams]{pcp-in-p}
A.~Abboud, A.~Rubinstein, and R.~R. Williams.
\newblock Distributed {PCP} theorems for hardness of approximation in {P}.
\newblock In \emph{Proc.\ 58th {IEEE} Annual Symposium on Foundations of
  Computer Science (FOCS'17)}, pages 25--36, 2017.

\bibitem[Allender et~al.(2009)Allender, B{\"{u}}rgisser, Kjeldgaard{-}Pedersen,
  and Miltersen]{floating-point-1}
E.~Allender, P.~B{\"{u}}rgisser, J.~Kjeldgaard{-}Pedersen, and P.~B. Miltersen.
\newblock On the complexity of numerical analysis.
\newblock \emph{{SIAM} J. Comput.}, 38\penalty0 (5):\penalty0 1987--2006, 2009.

\bibitem[Alon et~al.(1997)Alon, Galil, and Margalit]{alon-apsp}
N.~Alon, Z.~Galil, and O.~Margalit.
\newblock On the exponent of the all pairs shortest path problem.
\newblock \emph{J. Comput. Syst. Sci.}, 54\penalty0 (2):\penalty0 255--262,
  1997.

\bibitem[Arya et~al.(1995)Arya, Das, Mount, Salowe, and
  Smid]{euclidean-spanners}
S.~Arya, G.~Das, D.~M. Mount, J.~S. Salowe, and M.~H.~M. Smid.
\newblock Euclidean spanners: short, thin, and lanky.
\newblock In \emph{Proc. 27th Annual {ACM} Symposium on Theory of Computing
  (STOC'95)}, pages 489--498, 1995.

\bibitem[Backurs et~al.(2017)Backurs, Indyk, and Schmidt]{tree-sparsity}
A.~Backurs, P.~Indyk, and L.~Schmidt.
\newblock Better approximations for tree sparsity in nearly-linear time.
\newblock In \emph{Proc.\ 28th Annual {ACM-SIAM} Symposium on Discrete
  Algorithms (SODA'17)}, pages 2215--2229, 2017.

\bibitem[Blum et~al.(2012)Blum, Cucker, Shub, and Smale]{floating-point-2}
L.~Blum, F.~Cucker, M.~Shub, and S.~Smale.
\newblock \emph{Complexity and real computation}.
\newblock Springer Science \& Business Media, 2012.

\bibitem[Callahan and Kosaraju(1995)]{wspd}
P.~B. Callahan and S.~R. Kosaraju.
\newblock A decomposition of multidimensional point sets with applications to
  k-nearest-neighbors and n-body potential fields.
\newblock \emph{J. {ACM}}, 42\penalty0 (1):\penalty0 67--90, 1995.

\bibitem[Chalermsook et~al.(2017)Chalermsook, Cygan, Kortsarz, Laekhanukit,
  Manurangsi, Nanongkai, and Trevisan]{ChalermsookCKLM17}
P.~Chalermsook, M.~Cygan, G.~Kortsarz, B.~Laekhanukit, P.~Manurangsi,
  D.~Nanongkai, and L.~Trevisan.
\newblock From {Gap-ETH} to {FPT}-inapproximability: Clique, dominating set,
  and more.
\newblock In \emph{Proc.\ 58th {IEEE} Annual Symposium on Foundations of
  Computer Science (FOCS'17)}, pages 743--754, 2017.

\bibitem[Chen(2018)]{Chen18}
L.~Chen.
\newblock On the hardness of approximate and exact (bichromatic) maximum inner
  product.
\newblock In \emph{Proc.\ 33rd Computational Complexity Conference (CCC'18)},
  volume 102, pages 14:1--14:45, 2018.

\bibitem[Cohen(1997)]{Cohen97}
E.~Cohen.
\newblock Using selective path-doubling for parallel shortest-path
  computations.
\newblock \emph{J. Algorithms}, 22\penalty0 (1):\penalty0 30--56, 1997.

\bibitem[Cohen and Zwick(2001)]{CohenZ01}
E.~Cohen and U.~Zwick.
\newblock All-pairs small-stretch paths.
\newblock \emph{J. Algorithms}, 38\penalty0 (2):\penalty0 335--353, 2001.

\bibitem[Cygan et~al.(2019)Cygan, Mucha, Wegrzycki, and
  Wlodarczyk]{cygan-icalp2017}
M.~Cygan, M.~Mucha, K.~Wegrzycki, and M.~Wlodarczyk.
\newblock On problems equivalent to (min, +)-convolution.
\newblock \emph{{ACM} Trans. Algorithms}, 15\penalty0 (1):\penalty0
  14:1--14:25, 2019.

\bibitem[Duan and Pettie(2009)]{fastest-apbp}
R.~Duan and S.~Pettie.
\newblock Fast algorithms for (max,min)-matrix multiplication and bottleneck
  shortest paths.
\newblock In \emph{Proc.\ 20th Annual {ACM-SIAM} Symposium on Discrete
  Algorithms (SODA'09)}, pages 384--391, 2009.

\bibitem[Duan and Ren(2018)]{apsp-af}
R.~Duan and H.~Ren.
\newblock Approximating all-pair bounded-leg shortest path and {APSP-AF} in
  truly-subcubic time.
\newblock In \emph{Proc.\ 45th International Colloquium on Automata, Languages,
  and Programming (ICALP'18)}, pages 42:1--42:12, 2018.

\bibitem[Duan et~al.(2017)Duan, Pettie, and Su]{duan-weighted-matching}
R.~Duan, S.~Pettie, and H.~Su.
\newblock Scaling algorithms for weighted matching in general graphs.
\newblock In \emph{Proc.\ 28th Annual {ACM-SIAM} Symposium on Discrete
  Algorithms (SODA'17)}, pages 781--800, 2017.

\bibitem[Edmonds and Karp(1972)]{edmonds-karp}
J.~Edmonds and R.~M. Karp.
\newblock Theoretical improvements in algorithmic efficiency for network flow
  problems.
\newblock \emph{J. {ACM}}, 19\penalty0 (2):\penalty0 248--264, 1972.

\bibitem[Floyd(1962)]{floyd}
R.~W. Floyd.
\newblock Algorithm 97: Shortest path.
\newblock \emph{Commun. {ACM}}, 5\penalty0 (6):\penalty0 345, 1962.

\bibitem[Freeman(1978)]{freeman1978centrality}
L.~C. Freeman.
\newblock Centrality in social networks conceptual clarification.
\newblock \emph{Social networks}, 1\penalty0 (3):\penalty0 215--239, 1978.

\bibitem[Gabow and Tarjan(1989)]{gabow-network-problems}
H.~N. Gabow and R.~E. Tarjan.
\newblock Faster scaling algorithms for network problems.
\newblock \emph{{SIAM} J. Comput.}, 18\penalty0 (5):\penalty0 1013--1036, 1989.

\bibitem[Gabow and Tarjan(1991)]{gabow-tarjan-jacm}
H.~N. Gabow and R.~E. Tarjan.
\newblock Faster scaling algorithms for general graph-matching problems.
\newblock \emph{J. {ACM}}, 38\penalty0 (4):\penalty0 815--853, 1991.

\bibitem[Galil and Margalit(1997)]{galil-margalit-apsp}
Z.~Galil and O.~Margalit.
\newblock All pairs shortest paths for graphs with small integer length edges.
\newblock \emph{J. Comput. Syst. Sci.}, 54\penalty0 (2):\penalty0 243--254,
  1997.

\bibitem[Gall(2012)]{LeGall12}
F.~L. Gall.
\newblock Faster algorithms for rectangular matrix multiplication.
\newblock In \emph{Proc. 53rd Annual {IEEE} Symposium on Foundations of
  Computer Science, ({FOCS} 2012)}, pages 514--523, 2012.

\bibitem[Gall(2014)]{legall}
F.~L. Gall.
\newblock Powers of tensors and fast matrix multiplication.
\newblock In \emph{Proc.\ 39th International Symposium on Symbolic and
  Algebraic Computation (ISSAC'14)}, pages 296--303, 2014.

\bibitem[Gall and Urrutia(2018)]{LeGallU18}
F.~L. Gall and F.~Urrutia.
\newblock Improved rectangular matrix multiplication using powers of the
  coppersmith-winograd tensor.
\newblock In \emph{Proc. 29th Annual {ACM-SIAM} Symposium on Discrete
  Algorithms ({SODA}'18)}, pages 1029--1046, 2018.

\bibitem[Goldberg(1995)]{goldberg-algorithm}
A.~V. Goldberg.
\newblock Scaling algorithms for the shortest paths problem.
\newblock \emph{{SIAM} J. Comput.}, 24\penalty0 (3):\penalty0 494--504, 1995.

\bibitem[Hochbaum(1994)]{hochbaum-lower-bound}
D.~S. Hochbaum.
\newblock Lower and upper bounds for the allocation problem and other nonlinear
  optimization problems.
\newblock \emph{Math. Oper. Res.}, 19\penalty0 (2):\penalty0 390--409, 1994.

\bibitem[{Karthik {C. S.}} et~al.(2018){Karthik {C. S.}}, Laekhanukit, and
  Manurangsi]{pcp-in-p-2}
{Karthik {C. S.}}, B.~Laekhanukit, and P.~Manurangsi.
\newblock On the parameterized complexity of approximating dominating set.
\newblock In \emph{Proc.\ 50th Annual Symposium on Theory of Computing
  (STOC'18)}, pages 1283--1296, 2018.

\bibitem[Klein and Sairam(1992)]{KleinS92}
P.~N. Klein and S.~Sairam.
\newblock A parallel randomized approximation scheme for shortest paths.
\newblock In \emph{Proc. 24th Annual {ACM} Symposium on Theory of Computing
  (STOC'92)}, pages 750--758, 1992.

\bibitem[Kosaraju(1989)]{first-minmaxconv}
S.~R. Kosaraju.
\newblock Efficient tree pattern matching (preliminary version).
\newblock In \emph{Proc.\ 30th Annual Symposium on Foundations of Computer
  Science (FOCS'89)}, pages 178--183, 1989.

\bibitem[K{\"{u}}nnemann et~al.(2017)K{\"{u}}nnemann, Paturi, and
  Schneider]{kunnemann-icalp2017}
M.~K{\"{u}}nnemann, R.~Paturi, and S.~Schneider.
\newblock On the fine-grained complexity of one-dimensional dynamic
  programming.
\newblock In \emph{Proc.\ 44th International Colloquium on Automata, Languages,
  and Programming (ICALP'17)}, pages 21:1--21:15, 2017.

\bibitem[Le~Gall and Nishimura(2017)]{legall-apsp}
F.~Le~Gall and H.~Nishimura.
\newblock Quantum algorithms for matrix products over semirings.
\newblock \emph{Chicago J. Theor. Comput. Sci.}, 2017, 2017.

\bibitem[Mucha et~al.(2019)Mucha, Wegrzycki, and Wlodarczyk]{partition}
M.~Mucha, K.~Wegrzycki, and M.~Wlodarczyk.
\newblock A subquadratic approximation scheme for partition.
\newblock In \emph{Proc.\ 30th Annual {ACM-SIAM} Symposium on Discrete
  Algorithms (SODA'19)}, pages 70--88, 2019.

\bibitem[Nayebi and Williams(2014)]{quantum-virginia}
A.~Nayebi and V.~V. Williams.
\newblock Quantum algorithms for shortest paths problems in structured
  instances.
\newblock \emph{CoRR}, 2014.
\newblock URL \url{http://arxiv.org/abs/1410.6220}.

\bibitem[Opsahl et~al.(2010)Opsahl, Agneessens, and
  Skvoretz]{weighted-centrality}
T.~Opsahl, F.~Agneessens, and J.~Skvoretz.
\newblock Node centrality in weighted networks: Generalizing degree and
  shortest paths.
\newblock \emph{Social Networks}, 32\penalty0 (3):\penalty0 245--251, 2010.

\bibitem[Orlin(1993)]{orlin-strong-mincostflow}
J.~B. Orlin.
\newblock A faster strongly polynomial minimum cost flow algorithm.
\newblock \emph{Operations Research}, 41\penalty0 (2):\penalty0 338--350, 1993.

\bibitem[Orlin and Ahuja(1992)]{orlin-assignment}
J.~B. Orlin and R.~K. Ahuja.
\newblock New scaling algorithms for the assignment and minimum mean cycle
  problems.
\newblock \emph{Math. Program.}, 54:\penalty0 41--56, 1992.

\bibitem[Roditty and Shapira(2008)]{roditty-apsp-additive}
L.~Roditty and A.~Shapira.
\newblock All-pairs shortest paths with a sublinear additive error.
\newblock In \emph{Proc.\ 35th International Colloquium on Automata, Languages
  and Programming (ICALP'08)}, pages 622--633, 2008.

\bibitem[Roditty and Williams(2011)]{minweightcycles}
L.~Roditty and V.~V. Williams.
\newblock Minimum weight cycles and triangles: Equivalences and algorithms.
\newblock In \emph{Proc.\ 52nd Annual {IEEE} Symposium on Foundations of
  Computer Science (FOCS'11)}, pages 180--189, 2011.

\bibitem[Rubinstein(2018)]{rubinstein}
A.~Rubinstein.
\newblock Hardness of approximate nearest neighbor search.
\newblock In \emph{Proc.\ 50th Annual Symposium on Theory of Computing
  (STOC'18)}, pages 1260--1268, 2018.

\bibitem[Schmidt(2017)]{schmidt-communication}
L.~Schmidt.
\newblock personal communication, 2017.

\bibitem[Sch{\"o}nhage(1979)]{real-ram-power}
A.~Sch{\"o}nhage.
\newblock On the power of random access machines.
\newblock In \emph{Proc.\ 6th International Colloquium on Automata, Languages,
  and Programming (ICALP'79)}, volume~71, pages 520--529, 1979.

\bibitem[Schrijver(2000)]{submodular-strongly}
A.~Schrijver.
\newblock A combinatorial algorithm minimizing submodular functions in strongly
  polynomial time.
\newblock \emph{J. Comb. Theory, Ser. {B}}, 80\penalty0 (2):\penalty0 346--355,
  2000.

\bibitem[Seidel(1992)]{seidel}
R.~Seidel.
\newblock On the all-pairs-shortest-path problem.
\newblock In \emph{Proc.\ 24th Annual {ACM} Symposium on Theory of Computing
  (STOC'92)}, pages 745--749, 1992.

\bibitem[Serang(2015)]{serang}
O.~Serang.
\newblock A fast numerical method for max-convolution and the application to
  efficient max-product inference in {B}ayesian networks.
\newblock \emph{Journal of Computational Biology}, 22\penalty0 (8):\penalty0
  770--783, 2015.

\bibitem[Shapira et~al.(2007)Shapira, Yuster, and Zwick]{apbp-vertex-weight}
A.~Shapira, R.~Yuster, and U.~Zwick.
\newblock All-pairs bottleneck paths in vertex weighted graphs.
\newblock In \emph{Proc.\ 18th Annual {ACM-SIAM} Symposium on Discrete
  Algorithms (SODA'07)}, pages 978--985, 2007.

\bibitem[Shoshan and Zwick(1999)]{ShoshanZ99}
A.~Shoshan and U.~Zwick.
\newblock All pairs shortest paths in undirected graphs with integer weights.
\newblock In \emph{Proc. 40th Annual IEE Symposium on Foundations of Computer
  Science ({FOCS}'99)}, pages 605--615, 1999.

\bibitem[Tardos(1985)]{eva-tardos-strongly}
{\'{E}}.~Tardos.
\newblock A strongly polynomial minimum cost circulation algorithm.
\newblock \emph{Combinatorica}, 5\penalty0 (3):\penalty0 247--256, 1985.

\bibitem[Thorup(2000)]{thorup}
M.~Thorup.
\newblock Floats, integers, and single source shortest paths.
\newblock \emph{J. Algorithms}, 35\penalty0 (2):\penalty0 189--201, 2000.

\bibitem[Thorup(2002)]{floating-point-4}
M.~Thorup.
\newblock Equivalence between priority queues and sorting.
\newblock In \emph{Proc.\ 43rd Symposium on Foundations of Computer Science
  (FOCS'02)}, pages 125--134, 2002.

\bibitem[Thorup and Zwick(2005)]{distance-oracles}
M.~Thorup and U.~Zwick.
\newblock Approximate distance oracles.
\newblock \emph{J. {ACM}}, 52\penalty0 (1):\penalty0 1--24, 2005.

\bibitem[Vassilevska(2008)]{vassilevska-thesis}
V.~Vassilevska.
\newblock Efficient algorithms for path problems in weighted graphs.
\newblock \emph{Carnegie Mellon University}, 2008.
\newblock PhD Thesis.

\bibitem[Vassilevska and Williams(2006)]{max-weight-triangle}
V.~Vassilevska and R.~Williams.
\newblock Finding a maximum weight triangle in {$O(n^{3-\delta})$} time, with
  applications.
\newblock In \emph{Proc.\ 38th Annual {ACM} Symposium on Theory of Computing
  (STOC'06)}, pages 225--231, 2006.

\bibitem[Vassilevska et~al.(2009)Vassilevska, Williams, and Yuster]{first-apbp}
V.~Vassilevska, R.~Williams, and R.~Yuster.
\newblock All pairs bottleneck paths and max-min matrix products in truly
  subcubic time.
\newblock \emph{Theory of Computing}, 5\penalty0 (1):\penalty0 173--189, 2009.

\bibitem[V{\'{e}}gh(2017)]{vegh-strongly}
L.~A. V{\'{e}}gh.
\newblock A strongly polynomial algorithm for generalized flow maximization.
\newblock \emph{Math. Oper. Res.}, 42\penalty0 (1):\penalty0 179--211, 2017.

\bibitem[Warshall(1962)]{warshall}
S.~Warshall.
\newblock A theorem on {B}oolean matrices.
\newblock \emph{J. {ACM}}, 9\penalty0 (1):\penalty0 11--12, 1962.

\bibitem[Williams(2018)]{Williams18}
R.~R. Williams.
\newblock Faster all-pairs shortest paths via circuit complexity.
\newblock \emph{{SIAM} J. Comput.}, 47\penalty0 (5):\penalty0 1965--1985, 2018.

\bibitem[Williams and Williams(2018)]{apsp-equiv}
V.~V. Williams and R.~R. Williams.
\newblock Subcubic equivalences between path, matrix, and triangle problems.
\newblock \emph{J. {ACM}}, 65\penalty0 (5):\penalty0 27:1--27:38, 2018.

\bibitem[Yuster(2009)]{yuster-dominance}
R.~Yuster.
\newblock Efficient algorithms on sets of permutations, dominance, and
  real-weighted {APSP}.
\newblock In \emph{Proc.\ 20th Annual {ACM-SIAM} Symposium on Discrete
  Algorithms (SODA'09)}, pages 950--957, 2009.

\bibitem[Yuster(2012)]{yuster-apsp-real}
R.~Yuster.
\newblock Approximate shortest paths in weighted graphs.
\newblock \emph{J. Comput. Syst. Sci.}, 78\penalty0 (2):\penalty0 632--637,
  2012.

\bibitem[Zwick(2002)]{zwick-apsp}
U.~Zwick.
\newblock All pairs shortest paths using bridging sets and rectangular matrix
  multiplication.
\newblock \emph{J. {ACM}}, 49\penalty0 (3):\penalty0 289--317, 2002.

\end{thebibliography}

\begin{appendices}

\section{Problem Definitions}
\label{def-problems}

For an edge-weighted (directed or undirected) graph $G$, $d_G(u,v)$ denotes the minimal total weight of any path from $u$ to $v$ in $G$.

\defproblem{Directed/Undirected All-Pairs Shortest Path (APSP)}
{An edge-weighed directed/undirected graph $G$ on $n$ nodes, with weights in $\real$}
{Compute $d_G(u,v)$ for any $u,v \in [n]$}

\defproblem{$(1+\eps)$-Approximate Directed/Undirected All-Pairs Shortest Path (APSP)}
{An edge-weighed directed/undirected graph $G$ on $n$ nodes, with weights in $\real$}
{Compute values $\tilde d(u,v)$ for any $u,v \in [n]$ such that $d_G(u,v) \le \tilde d(u,v) \le (1+\eps) d_G(u,v)$}

\defproblem{All-Pairs Bottleneck Path}
{An edge-weighed directed graph $G$ on $n$ nodes, with weights in $\real$}
{For every two nodes $u,v$ determine the maximum over all paths from $u$ to $v$ of the minimum edge weight along the path}

\defproblem{\minprod}
{Matrices $A,B \in \real^{n\times n}$}
{Compute matrix $C \in \real^{n \times n}$ with $C[i,j] = \min_{k \in [n]} (A[i,k] + B[k,j])$ for any $i,j \in [n]$}

\defproblem{\minmaxprod}
{Matrices $A,B \in \real^{n\times n}$}
{Compute matrix $C \in \real^{n \times n}$ with $C[i,j] = \min_{k \in [n]} \max\{A[i,k], B[k,j]\}$ for any $i,j \in [n]$}

\defproblem{$(1+\eps)$-Approximate \minprod}
{Matrices $A,B \in \real^{n\times n}$}
{Compute a matrix $\tilde C \in \real^{n \times n}$ with $C[i,j] \le \tilde C[i,j] \le (1+\eps) C[i,j]$ for any $i,j \in [n]$, where $C$ is the correct output of \minprod}


\defproblem{\minconv}
{Sequences $(A[i])_{i=0}^{n-1},\, (B[i])_{i=0}^{n-1} \in \real^n$} 
{Compute sequence $(C[i])_{i=0}^{n-1}$ with $C[k] = \min_{i+j=k} (A[i]+B[j])$}

\defproblem{\minmaxconv}
{Sequences $(A[i])_{i=0}^{n-1},\, (B[i])_{i=0}^{n-1} \in \real^n$}
{Compute sequence $(C[i])_{i=0}^{n-1}$ with $C[k] = \min_{i+j=k}
\max\{A[i],B[j]\}$}

\defproblem{$(1+\eps)$-Approximate \minconv}
{Sequences $(A[i])_{i=0}^{n-1},\, (B[i])_{i=0}^{n-1} \in \real^n$}
{Compute a sequence $(\tilde C[i])_{i=0}^{n-1}$ with $C[k] \le \tilde C[k] \le (1+\eps) C[k]$ for any $0 \le k < n$, where $C$ denotes the correct output of \minconv}


\defproblem{\sparsity}
{A node-weighted rooted tree $T$, with weights in $\real$, and an integer $k$}
{Find the maximal total weight of any rooted subtree of $T$ consisting of $k$ vertices}

\defproblem{Directed/Undirected Minimum Weight Triangle}
{An edge-weighed directed/undirected graph $G$ on $n$ nodes, with weights in $\real$}
{Compute the minimal total weight of any triangle in $G$}

\defproblem{$(1+\eps)$-Approximate Directed/Undirected Minimum Weight Triangle}
{An edge-weighed directed/undirected graph $G$ on $n$ nodes, with weights in $\real$}
{Compute a number $\tilde T \in [T,(1+\eps)T]$, where $T$ is the minimal total weight of any triangle in $G$}

\defproblem{Directed/Undirected Minimum Weight Cycle}
{An edge-weighed directed/undirected graph $G$ on $n$ nodes, with weights in $\real$}
{Compute the minimal total weight of any cycle in $G$}

\defproblem{$(1+\eps)$-Approximate Directed/Undirected Minimum Weight Cycle}
{An edge-weighed directed/undirected graph $G$ on $n$ nodes, with weights in $\real$}
{Compute a number $\tilde C \in [C,(1+\eps)C]$, where $C$ is the minimal total weight of any cycle in $G$}

\defproblem{Radius}
{An edge-weighed directed/undirected graph $G$ on $n$ nodes, with weights in $\real$}
{Compute $\min_{s \in V(G)} \max_{v \in V(G)} d_G(s,v)$}

\defproblem{$(1+\eps)$-Approximate Radius}
{An edge-weighed directed/undirected graph $G$ on $n$ nodes, with weights in $\real$}
{Compute a number $\tilde R \in [R,(1+\eps)R]$, where $R$ is the radius of $G$}

\defproblem{Diameter}
{An edge-weighed directed/undirected graph $G$ on $n$ nodes, with weights in $\real$}
{Compute $\max_{u,v \in V(G)} d_G(u,v)$}

\defproblem{$(1+\eps)$-Approximate Diameter}
{An edge-weighed directed/undirected graph $G$ on $n$ nodes, with weights in $\real$}
{Compute a number $\tilde D \in [D,(1+\eps)D]$, where $D$ is the diameter of $G$}

\defproblem{Median}
{An edge-weighed directed/undirected graph $G$ on $n$ nodes, with weights in $\real$}
{Compute $\min_{u \in V(G)} \sum_{v \in V(G)} d_G(u,v)$}

\defproblem{$(1+\eps)$-Approximate Median}
{An edge-weighed directed/undirected graph $G$ on $n$ nodes, with weights in $\real$}
{Compute a number $\tilde M \in [M,(1+\eps)M]$, where $M$ is the median of $G$}

\end{appendices}

\end{document}